\RequirePackage{fix-cm}
\documentclass[twocolumn]{svjour3}          
\smartqed  
\usepackage{listings}
\usepackage{xspace}
\usepackage{subcaption}
\usepackage{rotating}
\usepackage{caption}
\usepackage{array}
\usepackage{booktabs}
\usepackage{multirow}
\usepackage{caption} 
\usepackage{enumitem}
\usepackage{amsmath}    
\usepackage{amssymb}    
\usepackage{xcolor}     
\usepackage{hyperref}   

\newcommand{\kw}[1]{{\ensuremath {\mathsf{#1}}}\xspace}

\newcommand{\revise}[1]{\textcolor{black}{#1}}

\newcommand{\revisesigmod}[1]{\textcolor{black}{#1}}

\newtheorem{definition}{Definition}
\newtheorem{example}{Example}

\newcommand{\sys}{\kw{ExBI}}
\newcommand{\sysnosample}{\kw{ExBI_{cmpl}}}
\newcommand{\explorativebi}{\kw{Exploratory~BI}}
\newcommand{\source}{\kw{Source}}
\newcommand{\join}{\kw{\bowtie}}

\newcommand{\view}{\kw{View}}
\newcommand{\sourcenosample}{\kw{SOURCE_{cmpl}}}
\newcommand{\sourcesample}{\kw{SOURCE_{fas}}}
\newcommand{\joinnosample}{\kw{JOIN_{cmpl}}}
\newcommand{\joinsample}{\kw{JOIN_{sgs}}}

\newcommand{\sourceop}{\kw{Source}}
\newcommand{\joinop}{\kw{Join}}
\newcommand{\viewop}{\kw{View}}
\newcommand{\drilldownop}{\kw{DrillDown}}
\newcommand{\sliceop}{\kw{Slice}}
\newcommand{\rollupop}{\kw{RollUp}}
\newcommand{\diceop}{\kw{Dice}}

\newcommand{\sumfunc}{\texttt{SUM}\xspace}
\newcommand{\countfunc}{\texttt{COUNT}\xspace}
\newcommand{\distinctcountfunc}{\texttt{DISTINCT~COUNT}\xspace}
\newcommand{\maxfunc}{\texttt{MAX}\xspace}
\newcommand{\minfunc}{\texttt{MIN}\xspace}

\newcommand{\quantilefunc}{\texttt{QUANTILE}\xspace}
\newcommand{\prop}{\kw{prop}}

\long\def\comment#1{}

\newcommand{\stitle}[1]{\noindent{\underline{ #1}}}

\newcommand{\reffig}[1]{Fig.~\ref{fig:#1}}
\newcommand{\refsec}[1]{Sec.~\ref{sec:#1}}
\newcommand{\reftable}[1]{Table~\ref{tab:#1}}

\newcommand{\refdef}[1]{Def.~\ref{def:#1}}

\newcommand{\reflem}[1]{Lemma~\ref{lemma:#1}}

\definecolor{eclipseBlue}{RGB}{42,0.0,255}
\definecolor{eclipseGreen}{RGB}{63,127,95}
\definecolor{eclipsePurple}{RGB}{127,0,85}

\lstdefinelanguage{ebi}
{
  morekeywords={
    SOURCE,
    DRILLDOWN,
    ROLLUP,
    SLICE,
    DICE,
    JOIN,
    ON,
    AS,
    WHERE,
    ORDER,
    BY,
    LIMIT,
    OFFSET,
    GROUP,
    NODE,
    EDGE,
    TYPE,
    IMPORTS,
    OPTIONAL,
    STRICT,
    LOOSE,
    OPEN,
    CLOSED,
    ABSTRACT,
    CREATE,
    GRAPH,
    FOR,
    WITHIN,
    MATCH,
    WHERE,
    NOT,
    OR,
    AND,
    EXISTS,
    RETURN,
    IN,
    IS,
    NULL,
    ORDER,
    BY,
    INSERT,
    SET,
    REMOVE,
    DROP,
    DELETE,
    FINISH,
    COLUMNS,
    GRAPH_TABLE,
    SELECT,
    DETACH,
    STRING,
    VARCHAR,
    BOOLEAN,
    BOOL,
    SIGNED,
    INTEGER,
    INT,
    FLOAT,
    GRAPH,
    TYPE,
    ALTER,
    ROLLBACK,
    COMMIT,
    TRANSACTION,
    START,
    SESSION,
    USE,
    GROUP,
    VALUE,
    COUNT,
    CALL,
    ANY,
    IMPLIES,
    SCHEMA,
    AT,
    AS,
    FROM,
    GRAPH\_TABLE,
    INTERSECT,
    \$,
    KEY,
    TABLES,
    TABLE,
    SOURCE,
    DESTINATION,
    REFERENCE,
    PROPERTY,
    PROPERTIES,
    VERTEX,
    JOIN,
    ON
  },
  sensitive=false, 
  morecomment=[l]{//}, 
  morecomment=[s]{/*}{*/}, 
  morestring=[b]" 
}

\begin{document}

\title{A Hypergraph-Based Framework for Exploratory Business Intelligence\thanks{Longbin Lai is the corresponding author.\\ Shunyang Li and Jianke Yu contributed to this work during their internship at Alibaba.}}



\author{Yunkai Lou         \and
        Shunyang Li \and
        Longbin Lai \and
        Jianke Yu \and
        Wenyuan Yu \and
        Ying Zhang
}


\institute{Yunkai Lou \at
          Alibaba Group, Hangzhou, China \\
          \email{louyunkai.lyk@alibaba-inc.com}           
           \and
           Shunyang Li \at
              University of New South Wales, Sydney, Australia \\
              \email{shunyang.li@unsw.edu.au}
           \and
           Longbin Lai \at
              Alibaba Group, Hangzhou, China \\
              \email{longbin.lailb@alibaba-inc.com}
           \and
           Jianke Yu \at 
           University of Technology Sydney, Sydney, Australia \\
           \email{jianke.yu@student.uts.edu.au}
           \and
           Wenyuan Yu \at
              Alibaba Group, Hangzhou, China \\
              \email{wenyuan.ywy@alibaba-inc.com}
           \and
           Ying Zhang \at
              Zhejiang Gongshang University, Hangzhou, China \\
              \email{ying.zhang@zjgsu.edu.cn}
}

\date{Received: date / Accepted: date}

\maketitle

\begin{abstract}
Business Intelligence (BI) analysis is evolving towards \explorativebi, an iterative, multi-round exploration paradigm where analysts progressively refine their understanding. However, traditional BI systems impose critical limits for \explorativebi: heavy reliance on expert knowledge, high computational costs, static schemas, and lack of reusability. 
We present \sys, a novel system that introduces the \textbf{hypergraph data model} with operators, including \sourceop, \joinop, and \viewop, to enable dynamic schema evolution and materialized view reuse. 
\revisesigmod{Using} \textbf{sampling-based algorithms with provable estimation guarantees}, \sys addresses the computational bottlenecks, while maintaining analytical accuracy. 
Experiments on LDBC datasets demonstrate that \sys achieves significant speedups over existing systems: on average 16.21$\times$ (up to 146.25$\times$) compared to Neo4j and 46.67$\times$ (up to 230.53$\times$) compared to MySQL, while maintaining high accuracy with an average error rate of only 0.27\% for \countfunc, enabling efficient and accurate large-scale exploratory BI workflows.

\keywords{Exploratory Business Intelligence, Hypergraph, Sampling}

\end{abstract}

\section{Introduction}
\label{sec:intro}

Business Intelligence (BI) remains the mainstream data analysis technology \cite{grabova2010business,duan2012business}, with widespread applications and numerous BI tools \cite{powerbi,tableau,oraclebi}. However, as data landscapes become increasingly complex and interconnected, BI analysis scenarios are evolving beyond traditional approaches. Modern analysts often need multi-round iterative exploration, where each step builds upon previous insights and progressively refines understanding of the problem. 

This exploratory analysis paradigm has gained increasing prominence across various domains. For instance, social scientists have employed such iterative approaches to investigate the differential impacts of generative AI on employment \cite{lichtinger2025generative}. In this paper, we refer to such analytical scenarios as \explorativebi.
To illustrate the characteristics and challenges of \explorativebi, we present a concrete example based on the OpenAIRE dataset \cite{openaire}, a comprehensive knowledge graph containing scientific publications, organizations, funding agencies, and their relationships.

\begin{example}
\label{ex:traditional-bi}
We want to analyze the impact of the Russia-Ukraine conflict on Russian scientific research. 

\textbf{Step 1:} We first examine whether Russian institutions' publication counts have declined. Analysts must communicate with database administrators to join the Publication and Organization tables (\reffig{intro:traditional_bi_workflow_step1}). The analysis reveals a significant decline starting in 2022.

\textbf{Step 2:} To investigate causes, we hypothesize funding reductions. This requires initiating a new analysis workflow with administrators to join the Funding table (\reffig{intro:traditional_bi_workflow_step2}). Results show that funding from the European Commission ceased after 2022.

\textbf{Step 3:} We further explore whether funding agencies stopped supporting Russian institutions. This requires yet another cycle with administrators to perform a self-join on Organization (\reffig{intro:traditional_bi_workflow_step3}). The analysis confirms that the European Commission no longer support Russian institutions, which is confirmed by the European Commission's official website~\cite{european_commission_website}.
\end{example}

\begin{figure}[b]
    \centering
    \begin{subfigure}[b]{0.45\textwidth}
        \includegraphics[width=\textwidth]{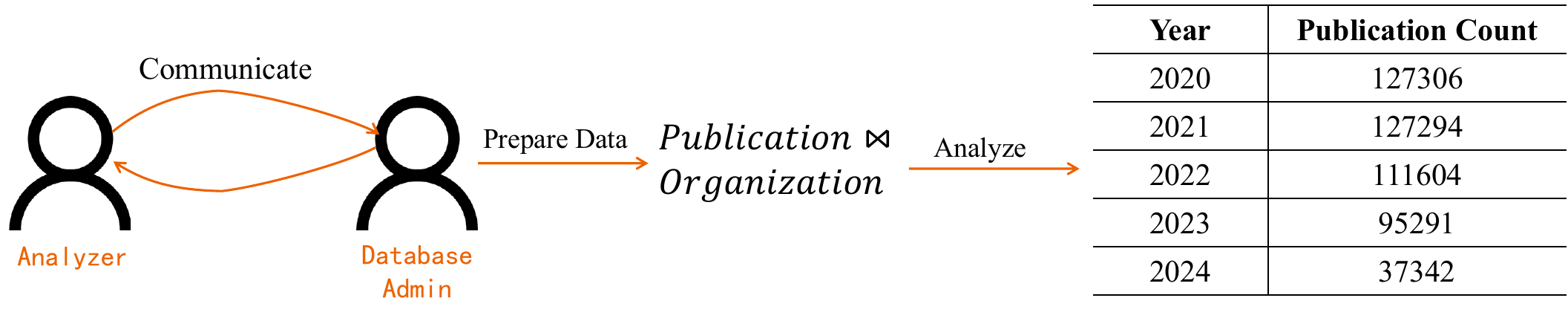}
        \caption{Step 1: Whether the number of Publication by Russian institutions is affected}
        \label{fig:intro:traditional_bi_workflow_step1}
    \end{subfigure}
    \hfill
    \begin{subfigure}[b]{0.45\textwidth}
        \includegraphics[width=\textwidth]{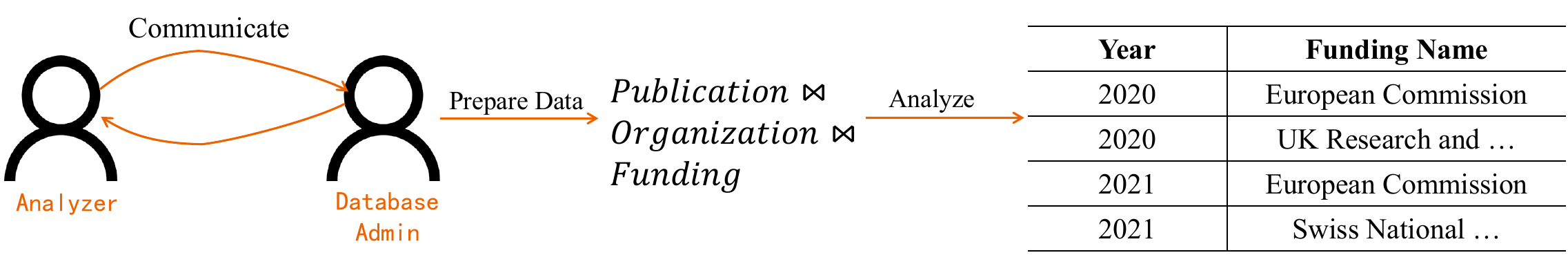}
        \caption{Step 2: Whether the reduction is due to decreased funding}
        \label{fig:intro:traditional_bi_workflow_step2}
    \end{subfigure}
    \hfill
    \begin{subfigure}[b]{0.45\textwidth}
        \centering
        \includegraphics[width=.9\textwidth]{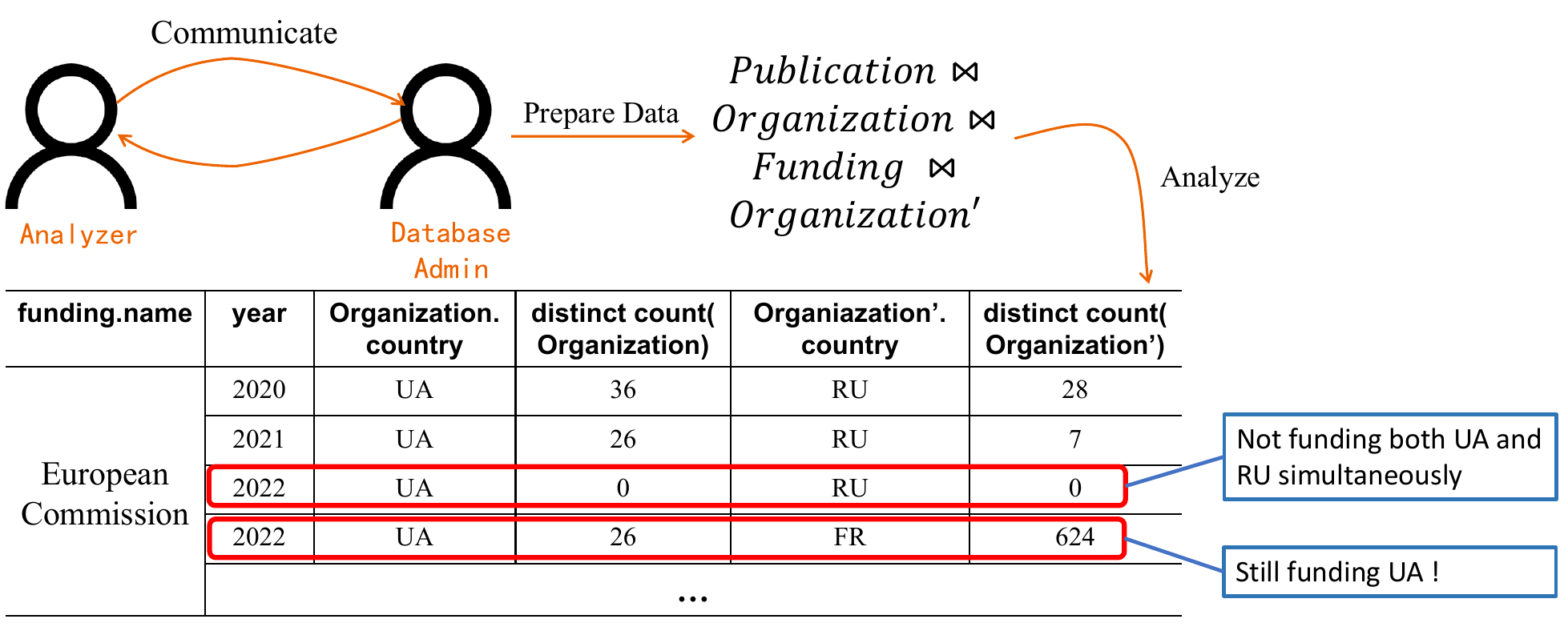}
        \caption{Step 3: Whether the funding agencies support Ukraine but no longer support Russia amid the conflict}
        \label{fig:intro:traditional_bi_workflow_step3}
    \end{subfigure}
    \caption{Analyze the impact of the Russia-Ukraine conflict on Russian scientific research using traditional BI}
    \label{fig:intro:traditional_bi_workflow}
\end{figure}

\subsection{Motivations}
The above example demonstrates four critical limitations of existing BI tools, \revisesigmod{grouped} into \textbf{Logical Representation Gaps} (L1-L3) and \textbf{Physical System Bottlenecks} (L4):

\textbf{(L1) Heavy Reliance on Analyst Expertise.} 
Analysts must identify all relevant data upfront and construct fact/dimension tables, heavily relying on domain knowledge~\cite{chaudhuri2011overview,larson2016review}. \revisesigmod{This is particularly challenging when data involves \textbf{complex n-ary relationships} (e.g., the web of authors, papers, and funds)}, where missing a critical link (e.g., Funding in Step 2) requires restarting the workflow.

\textbf{(L2) Rigid Schema.} \revisesigmod{Star/snowflake schemas~\cite{kimball2013data,chaudhuri2011overview} rely on fixed relational structures that are ill-suited for evolving analysis. Discovering new factors requires rebuilding the data model, and expressing seemingly simple semantic constraints (e.g., a funding that supports Ukraine but \textit{not} Russia) demands complex, unnatural join sequences (e.g., ``non-existence'' relationship check)}.

\textbf{(L3) Lack of Reusability.} Current models lack mechanisms to represent intermediate exploration states as first-class citizens, preventing reuse across iterations and comparison with later findings. In Example~\ref{ex:traditional-bi}, the Publication-Organization join was repeated three times.

\textbf{(L4) Computational Cost.} Building wide tables through table joins is time-consuming and resource-intensive~\cite{orlovskyi2020business,mishra1992join}, making exploratory analysis impractical on large datasets. In Example~\ref{ex:traditional-bi}, Step 3 joins four tables, which becomes prohibitively expensive on the OpenAIRE dataset (193M publications, 51M edges).

Although existing approaches address specific aspects of these challenges, they lack a holistic solution for \explorativebi. Recent LLM-based approaches~\cite{lian2024chatbi,jiang2025siriusbi,weng2025datalab} have attempted to lower the expertise barrier (L1) by directly translating natural language analytical requirements into query statements. However, their performances still face significant gaps before practical deployment. More importantly, these AI-driven solutions merely serve as interfaces and do not address the fundamental system-level challenges (L2-L4) in the underlying BI systems.

Similarly, mainstream BI platforms like Power BI~\cite{powerbi} and Tableau~\cite{tableau} attempt to mitigate schema rigidity (L2) by enabling ad-hoc view creation via joins. However, this flexibility comes with severe scalability and usability constraints. Executing complex multi-way joins on the fly often incurs prohibitive latency (L4), rendering interactive exploration impractical. While pre-materializing these joins as views can improve query response, it results in a static collection of artifacts that cannot adapt to the fluid, unanticipated nature of exploratory analysis. Furthermore, defining the correct set of materialized views requires the very domain expertise (L1) that \explorativebi seeks to democratize.

Therefore, addressing the challenges of \explorativebi requires a fundamentally new approach at the system level, including both the logical and physical aspects, rather than merely adding intelligent interfaces or applying workarounds to existing architectures.



\subsection{Our Approach}
In this paper, we propose \sys, a novel BI system specifically designed to support the \explorativebi scenarios. \revisesigmod{\sys decouples the logical exploration model from the physical storage layer to address the aforementioned challenges comprehensively.}

\textbf{Logically}, \sys addresses the expressiveness and flexibility challenges (\textbf{L1-L3}) by introducing a \textit{Hypergraph Data Model}. 
We adopt this graph-centric approach because \explorativebi scenarios involve complex entity relationships that are ill-suited for rigid relational schemas.
Built upon a logical property graph abstraction, this model allows analysts to define \textit{query graphs} as templates. \sys employs a \sourceop operator to perform \textbf{induced subgraph matching} based on these templates, capturing even complex semantics like ``non-existence'' relationships naturally without cumbersome join logic.
The resulting hypergraphs support dynamic schema evolution (\textbf{L2}) via the \joinop operator and enable the materialization of intermediate results as reusable views (\textbf{L3}). This compositional approach empowers analysts to explore incrementally, reducing the need for upfront expert schema design (\textbf{L1}).

\textbf{Physically}, \sys tackles the computational bottlenecks (\textbf{L4}) through a specialized, sampling-based execution engine. We develop this dedicated engine driven by two key observations. 
First, the core \sourceop operator requires subgraph matching, where performing \textit{exact} computation is infeasible on large-scale datasets --- whether executed on relational engines (via join transformation~\cite{lou2024towards}) or native graph engines (as validated in \refsec{experiment}). Second, although sampling-based tools exist for relational databases (e.g., VerdictDB~\cite{verdictdb}), they handle only relatively simple join patterns (limited depth or without ``non-existence'' semantics~\cite{verdictdb}), failing to meet the complex subgraph matching needs of \explorativebi. 

Consequently, \sys currently implements a custom storage and computation layer to enable efficient approximate processing.
To implement this layer, we design specific sampling strategies for our algebraic operators. For the costly \sourceop operator, we adapt the FaSTest sampling algorithm \cite{FaSTest} to uniformly sample matches instead of exhaustive enumeration. For the \joinop operator, we employ stratified group sampling to handle massive intermediate results.
Crucially, these techniques provide identifiable statistical guarantees, ensuring that \sys achieves high performance without compromising analytical accuracy.
Note that our architecture decouples the execution logic from storage (\refsec{overview}); future work will focus on porting our sampling layer to existing relational and graph databases to reduce migration overhead for users.

We provide a detailed demonstration of how \sys works through the Russia-Ukraine conflict analysis in Section~\ref{sec:exp:case-study}, illustrating the complete workflow including how the hypergraph schema evolves dynamically through operator composition and how intermediate results are preserved and reused across analysis rounds.
Note that our design aligns with the recent agent-first database system proposal~\cite{liu2025agentfirst}. Our incremental exploration framework directly addresses key challenges of agentic speculation, particularly redundancy,  heterogeneity and steerability. Building agentic \explorativebi systems atop \sys is an important future work.


\subsection{Contributions}
This paper makes the following contributions:

(1) \textbf{Exploratory BI Paradigm.} We analyze the limitations of traditional BI in supporting exploratory scenarios and formalize the concept of \explorativebi. By shifting the paradigm from one-shot comprehensive analysis to iterative, incremental exploration, we reduce dependency on analyst expertise and enable data-driven discovery through progressive refinement.

(2) \textbf{Hypergraph Data Model and Operators.} To support the dynamic nature of \explorativebi, we introduce the Hypergraph data model along with \sourceop, \joinop, \viewop, \drilldownop, \rollupop, \sliceop, and \diceop operators (\refsec{model}). This model enables dynamic schema evolution, reusable intermediate views, and operator composition for incremental exploration, providing a theoretical foundation for exploratory BI analysis.

(3) \textbf{System Implementation with Theoretical Guarantees.} We develop \sys, a theoretically-grounded \explorativebi system (\refsec{overview}) that addresses the performance bottlenecks of \sourceop and \joinop operators through sampling-based optimizations. 
\revisesigmod{Critically, \textbf{end-to-end theoretical guarantees for unbiased estimation} of \countfunc and \sumfunc aggregate functions are provided (\refsec{sample}).} 

(4) \textbf{Experimental Evaluation.} We conduct extensive experiments on LDBC datasets to evaluate \sys (\refsec{experiment}). The results show that \sys outperforms existing systems by a large margin: compared to Neo4j, \sys achieves 16.21$\times$ average speedup with up to 146.25$\times$ improvement; compared to MySQL, the average speedup reaches 46.67$\times$ with a maximum of 230.53$\times$. Meanwhile, \sys maintains excellent estimation accuracy, with only 0.27\% average error for \countfunc.



\section{Related Work}
\label{sec:related}

\subsection{Data Cube Models}
\label{sec:related:data-cube}

Traditional BI systems employ the multidimensional data cube model~\cite{gray1997data} as their foundational structure. Data is organized using GROUP BY operations to compute numerical metrics across dimensions. The model supports OLAP operations including \drilldownop, \rollupop, and \sliceop, enabling multi-level granularity exploration. Star and snowflake schemas~\cite{kimball2013data,chaudhuri2011overview} serve as the relational foundation for implementing data cubes.

As data relationships become increasingly interconnected, graph-based modeling has gained prominence. Zhao et al.~\cite{zhao2011graph} proposed the graph cube model, extending cube concepts to graph-structured data by aggregating vertices based on attributes while preserving topology. Related work~\cite{gomez2017performing,wang2014pagrol} follows similar aggregation approaches. Shi et al.~\cite{shi2016survey} provided a comprehensive survey of multidimensional analysis on network data.

However, these approaches have two fundamental limitations. First, schemas remain static and require analysts to define the complete structure upfront, lacking flexibility for incremental adjustment during exploration. Second, aggregation is limited to vertex attributes and cannot capture structural semantics like subgraph patterns (e.g., triangles, cliques). In contrast, our hypergraph model supports dynamic schema evolution through operator composition and enables pattern-based retrieval via query graphs.

\subsection{Business Intelligence Tools}
\label{sec:related:bi-tools}

Business Intelligence has evolved over recent decades with numerous tools emerging. Leading platforms include Power BI~\cite{powerbi}, Tableau~\cite{tableau}, QlikSense~\cite{qliksense}, Oracle BI~\cite{oraclebi}, and Sisense~\cite{sisense}. Despite widespread adoption, these traditional tools share fundamental limitations. First, they require analysts to construct wide tables upfront by joining all potentially relevant fact and dimension tables, lacking flexibility to adjust schemas as understanding deepens. Second, these tools cannot automatically synthesize results across multiple analysis rounds, leaving the burden of integrating insights entirely to human analysts.

With Large Language Model (LLM) advancement, new approaches leverage LLM capabilities for BI. Specifically, NL2SQL work~\cite{zhang2024finsql,shen2025study,tang2025llm} translates user questions into SQL queries. NL2BI~\cite{lian2024chatbi} represents a special class handling more complex questions. Lian et al.~\cite{lian2024chatbi} proposed ChatBI for multi-round dialogues, Jiang et al.~\cite{jiang2025siriusbi} proposed SiriusBI with intent querying to understand user intentions, and Weng et al.~\cite{weng2025datalab} proposed DataLab for various BI tasks.

However, these LLM-based approaches face three fundamental limitations. First, their performance still has significant gaps before practical deployment, struggling with complex scenarios. Second, they assume a single clear mapping between user questions and the underlying data schema, bypassing the exploratory process and case-by-case analysis that are essential when ambiguity exists. Third, these AI-driven solutions merely serve as natural language interfaces without addressing system-level challenges like computational cost, lack of reusability, and static schema constraints. In contrast, our \explorativebi framework tackles these system-level limitations through a new data model and operator framework enabling dynamic schema evolution and intermediate result reusability.

\subsection{Approximate Query Processing}
\label{sec:related:aqp}

\revisesigmod{To address computational cost challenges in analytical workloads, sampling-based approximate query processing (AQP) systems have been proposed. BlinkDB~\cite{blinkdb} pre-computes stratified samples of base tables and provides fast approximate answers with bounded errors and response times. VerdictDB~\cite{verdictdb} extends this approach by providing a middleware layer that universalizes AQP across different database backends.}

\revisesigmod{However, these relational AQP systems exhibit fundamental limitations when applied to \explorativebi scenarios. 
First, they typically rely on \textit{offline pre-computed samples} (e.g., stratified samples in BlinkDB, scramble tables in VerdictDB). This approach incurs significant storage and maintenance overhead and lacks the agility to handle ad-hoc exploratory queries where the regions of interest are dynamically determined at runtime.
Second, and more critically, they lack the expressiveness to handle the \textit{structural complexity} of query patterns in \explorativebi. As evidenced by our experiments, relational sampling approaches still suffer from severe performance degradation under multi-way joins, often failing to complete execution within reasonable time limits. Furthermore, these systems may not support complex semantics such as ``non-existence'' in VerdictDB, which is essential for ensuring the induced subgraph constraints. 
In contrast, our hypergraph-based approach performs \textit{on-the-fly sampling} directly over the data structure without pre-materialization, natively supporting multi-way joins and non-existence semantics while providing theoretical guarantees for unbiased estimation.}

\section{Hypergraph Data Model}
\label{sec:model}

Traditional BI systems \revisesigmod{use} star schema and snowflake schema to model multidimensional data using relational tables. However, as relationships between data entities become increasingly interconnected, graph-based data models emerge as more suitable alternatives. Moreover, existing data models struggle to support exploratory scenarios because they are typically static and cannot accommodate dynamic schema evolution.

In this section, we first introduce our hypergraph data model, and then present the operators defined on this model to support \explorativebi.

\subsection{Data Model}
\label{sec:model:data}

We begin by introducing the property graph model, which serves as the foundation for our hypergraph data model.

\begin{definition}[Property Graph]
\label{def:property-graph}
A property graph is defined as $G = (V_G, E_G, \tau_G, 
\mu_{G})$, where:
\begin{itemize}
    \item $V_G$ is the set of vertices,
    \item $E_G \subseteq V_G \times V_G$ is the set of edges. We write an edge as $(v_1, v_2)$, where $v_1, v_2 \in V_G$, and assume undirected edges in this paper for simplicity, 
    meaning that $(v_1, v_2)$ and $(v_2, v_1)$ represent the same edge,
    \item $\tau_G: V_G \cup E_G \to L$ maps vertices and edges to their labels, where $L$ is the set of labels,
    \item $\mu_G: (V_G \cup E_G) \times \prop \to D(\prop)$ maps vertices/edges to their property values, where $D(\prop)$ denotes the domain of property $\prop$.
    \end{itemize}
\end{definition}

In this paper, we use $v.\prop$ for $v \in V_G$ to denote the value of property \prop for vertex $v$, and similarly for edges, and hyperedges (\refdef{hypergraph}). When the context is clear, we write $G = (V, E, \tau, \mu)$ (and even $G = (V, E)$) for short, and this applies to the following definitions of hypergraph (\refdef{hypergraph}) and query graph (\refdef{query-graph}). In this paper, for ease of exposition, we consider property graphs as undirected graphs; however, our definitions can be straightforwardly extended to directed graphs.

Building upon the property graph model, we can perform multidimensional analysis based on vertex and edge attributes, similar to traditional data cube models. However, this approach only supports classification based on attribute values. In real-world scenarios, we often need to classify and aggregate not just individual vertices and edges, but entire subgraphs based on their structural properties~\cite{zhao2011graph}. For instance, computing the count of all mutually connected groups of three people requires first identifying all triangular patterns among Person vertices based on connectivity, then performing aggregation. To address this limitation, we first give the definition of induced subgraphs and hyperedges, then propose the hypergraph data model.

\begin{figure*}
    \centering
    \begin{subfigure}[b]{0.36\textwidth}
        \includegraphics[width=\textwidth]{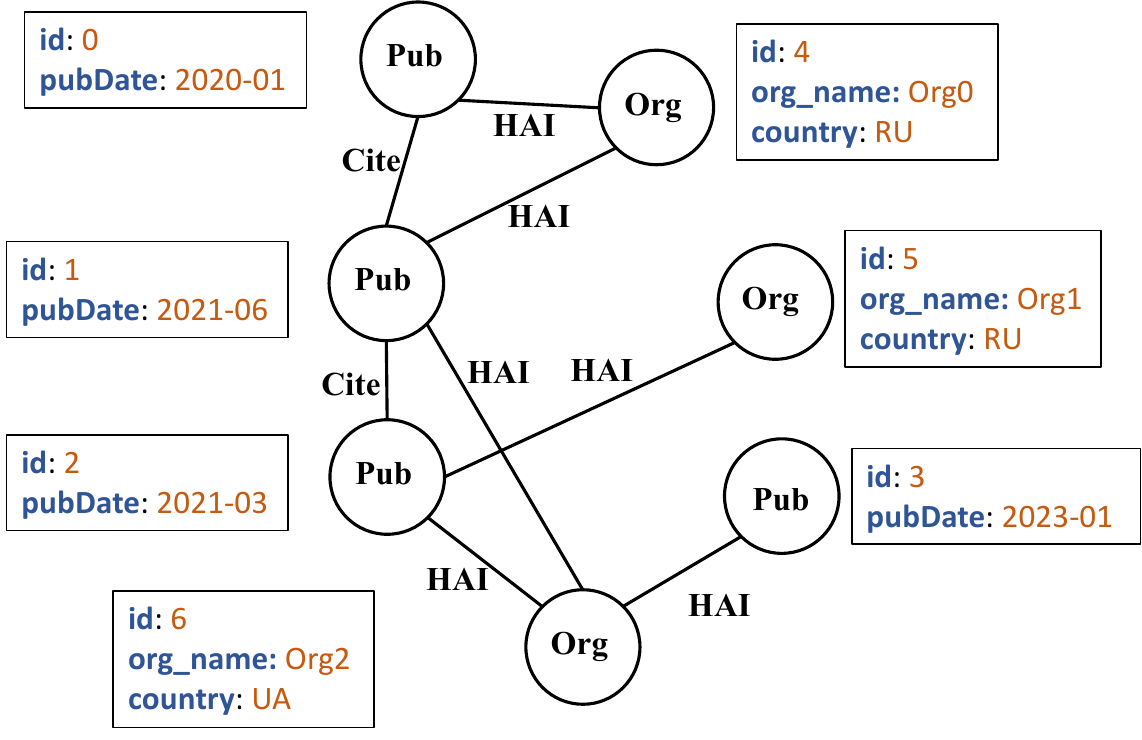}
        \caption{Property Graph $G_0$}
        \label{fig:example:schema:property-graph}
    \end{subfigure}
    \hfill
    \begin{subfigure}[b]{0.23\textwidth}
        \centering
        \includegraphics[width=.6\textwidth]{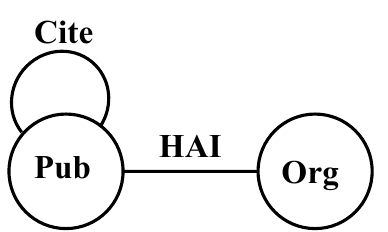}
        \caption{Property Graph Schema}
        \label{fig:example:schema:property-graph-schema}
        \vspace{1em}
        
        \includegraphics[width=\textwidth]{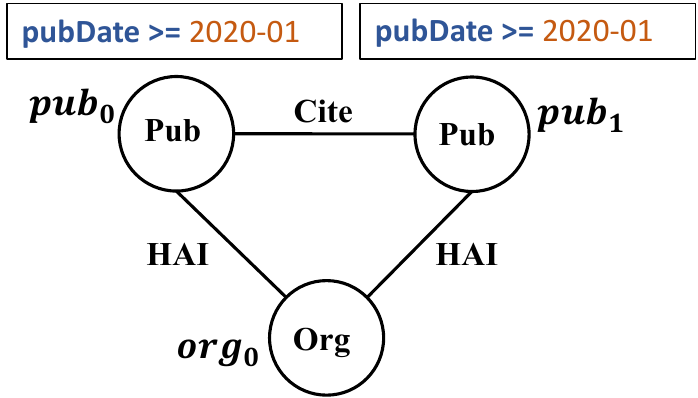}
        \caption{Query Graph $q_{\text{s}}$}
        \label{fig:example:person-triangle-query}
    \end{subfigure}
    \hfill
    \begin{subfigure}[b]{0.36\textwidth}
        \includegraphics[width=\textwidth]{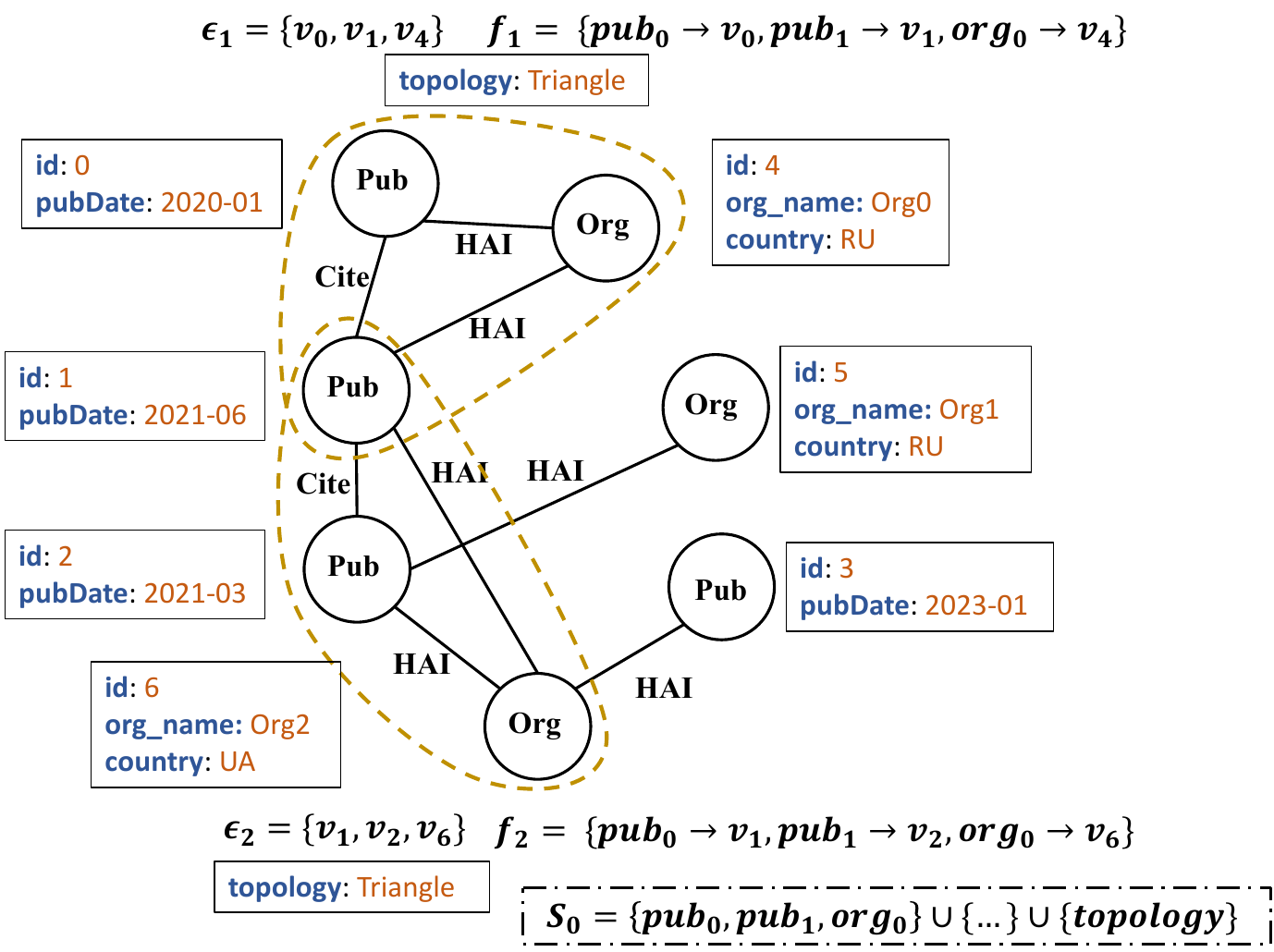}
        \caption{Hypergraph $\mathcal{G}_0\left<G_0\right>$}
        \label{fig:example:schema:hypergraph}
    \end{subfigure}
    \caption{An example of a property graph, property graph schema, query graph, and the corresponding hypergraph. \textit{Pub}, \textit{Org}, and \textit{HAI} are abbreviations of \textit{Publication}, \textit{Organization}, and \textit{hasAuthorInstitution}, respectively. $v_i$ represents the vertex with id $i$; $u_i$ represents query vertex with id $i$.}
    \label{fig:example:person-project-example}
\end{figure*}

\begin{definition}[Induced Subgraph]
    \label{def:induced-subgraph}
    Given a property graph $G = (V, E)$, $G_{\text{sub}} = (V_s, E_s)$ is an induced subgraph of $G$ if and only if: (1) $V_s \subseteq V$ and $E_s \subseteq E$; (2) $\forall (u, v) \in E$, if $u \in V_s$ and $v \in V_s$, then $(u, v) \in E_s$.
\end{definition}

\comment{
\begin{definition}[Hyperedge]
\label{def:hyperedge}
Given a property graph $G = (V, E, \tau,$ $\kappa, \mu)$, a hyperedge $\epsilon$ contains a set of vertices $V_{\epsilon} \subseteq V$.
\end{definition}
}

\begin{definition}[Hypergraph Data Model]
\label{def:hypergraph}
Given a \\ property graph $G = (V_G, E_G, \tau_G, \mu_G)$, we denote $\mathcal{G}\left<G\right> = (\mathcal{V}_{\mathcal{G}}, \mathcal{E}_{\mathcal{G}}, 
 \mu_{\mathcal{G}})$ as a hypergraph regarding $G$, where:
\begin{itemize}
    \item $\mathcal{V}_{\mathcal{G}} = V_G$,
    \item $\mathcal{E}_{\mathcal{G}}$ is the set of hyperedges, where each hyperedge encode a set of graph vertices $V_\epsilon$, denoted as $V_{\epsilon} \subseteq \mathcal{V}_G$,
    \item $\mu_{\mathcal{G}}: \mathcal{E}_{\mathcal{G}} \times \prop \to D(\prop)$ maps hyperedges to their property values. 
\end{itemize}
\end{definition}

When the underlying property graph $G$ is not of concern, we abbreviate the hypergraph as $\mathcal{G}$. It is important to note that a hyperedge is not merely a set of vertices—given the underlying graph $G$, the vertices in a hyperedge form an induced subgraph of $G$, which can exhibit structural characteristics. The properties of hyperedges, defined through $\mu_{\mathcal{G}}$, can capture such structural information, as we will demonstrate when introducing the hypergraph schema (\refdef{hypergraph-schema}).

\comment{
To keep the notation concise, we model these structural characteristics as built-in properties of hyperedges through $\mu_{\mathcal{G}}$. One such property is \textbf{connectivity pattern} that captures the topological structure among hyperedge vertices in the underlying graph. For example, if there are three vertices, they can form 8 types of structures including triangle, 2-path, and three isolated points.

Under this hypergraph data model definition, we can classify hyperedges based on both vertex/edge attributes from $G$ and structural properties from $\mu_{\mathcal{G}}$, enabling aggregation based on both attribute and structural information.
}

\comment{
\begin{remark}
    In this paper, we focus on scenarios where the vertices in each hyperedge form an induced subgraph in the property graph that satisfies user requirements. That is, each hyperedge corresponds to an induced subgraph.
    Theoretically, each hyperedge can be an arbitrary set of vertices, corresponding to any subgraph formed by these vertices in the property graph. For example, the vertices in a hyperedge together with some edges could form a connected component or a minimum spanning tree. We will explore these more general scenarios in future work.
\end{remark}
}

\begin{example}
    \label{ex:hypergraph-construction}
    Consider a property graph containing two types of vertices: Publication (abbr.~Pub) and Organization (abbr.~Org), as shown in \reffig{example:schema:property-graph-schema}. Pub vertices can be connected by Cite edges, while Pub and Org vertices can be connected by hasAuthorInstitution edges (abbr.~HAI). \reffig{example:schema:property-graph} shows a property graph instance with four Pub vertices and three Org vertices. \reffig{example:schema:hypergraph} shows the corresponding hypergraph with two hyperedges $\epsilon_1$ and $\epsilon_2$. Take hyperedge $\epsilon_1$ as an example: it contains vertices $v_0, v_1,$ and $v_4$, which form an induced subgraph in the underlying graph.
\end{example}


\subsection{Operators on Hypergraph Data Model}
\label{sec:model:operators}

We first present three core operators based on the hypergraph data model, namely, \sourceop, \joinop, and \viewop, and then discuss the adaptations of traditional cube operators to hypergraphs.
These operators exhibit a closure property: \sourceop serves as the entry point (Graph $\to$ Hypergraph), \viewop as the exit point (Hypergraph $\to$ Table), while all others (\joinop, \drilldownop, \rollupop, \sliceop, \diceop) form a closed algebra (Hypergraph $\to$ Hypergraph), enabling composable chains for iterative exploration.

\subsubsection{\sourceop Operator}
\label{sec:model:source}

Before defining the \sourceop operator, we introduce the concept of induced subgraph isomorphism.

\begin{definition}[Query Graph]
\label{def:query-graph}
A query graph is defined as $q = (V_q, E_q, \tau_q)$, where $V_q$ is the set of vertices, $E_q$ is the set of edges, and $\tau_q$ maps vertices and edges to their labels.
\end{definition}

The query graph serves as a template for subgraph matching. Optionally, users can specify constraints on vertex and edge property values to filter matching results. \reffig{example:person-triangle-query} shows an example of a query graph with three vertices $pub_0, pub_1,org_0$. \revise{The two Pub vertices are required to have publication dates after January, 2020}.

To perform pattern matching with query graphs, we adopt the induced subgraph isomorphism semantics following Carletti et al.~\cite{carletti2017challenging}. This ensures that matching results precisely capture the connectivity structure specified in the query graph.

\begin{definition}[Induced Subgraph Isomorphism]
\label{def:induced-subgraph-isomorphism}
~\\
Given a property graph $G = (V, E, \tau, \mu)$ and a query graph $q = (V_q, E_q, \tau_q)$, a subgraph $G_s = (V_s, E_s) \subseteq G$ is an induced subgraph isomorphic to $q$ if there exists a bijection $f: V_q \to V_s$ such that:
\begin{itemize}
    \item \textbf{Label preservation}: $\forall u \in V_q$, $\tau_q(u) = \tau(f(u))$,
    \item \textbf{Edge preservation}: $\forall (u_1, u_2) \in E_q$, $(f(u_1), f(u_2)) \\ \in E_s$ and $\tau_q((u_1,u_2)) = \tau((f(u_1), f(u_2)))$,
    \item \textbf{Induced structure}: $G_s$ is an induced subgraph of $G$, and $\forall (v_1, v_2)$ $\in E_s$, $(f^{-1}(v_1), f^{-1}(v_2)) \in E_q$ and $\tau((v_1,v_2)) = \tau_q((f^{-1}(v_1),$ $f^{-1}(v_2)))$.
\end{itemize}
The bijection $f$ is called a matching, or matched instance of $q$ in $G$. If property constraints are specified in $q$, the matching must also satisfy these constraints. 
\end{definition}

Note that we do not require the query graph to be connected. If $q$ is disconnected, the induced subgraph constraint ensures that matched subgraphs must also be disconnected. This design choice allows analysts to explicitly encode disconnectivity as an important structural constraint in exploratory analysis.

\begin{example}
    \label{ex:induced-subgraph-isomorphism}
    Given the property graph $G_0$ in \reffig{example:schema:property-graph}, suppose we want to find organizations with two cited publications \revise{published after January, 2020}. We define query graph $q_s$ (\reffig{example:person-triangle-query}) with query vertices $pub_0, pub_1$ (Pub) and $org_0$ (Org). The matching results are shown in \reffig{example:schema:hypergraph} with dashed lines: $f_1 = \{pub_0 \to v_0, pub_1 \to v_1, org_0 \to v_4\}$ and $f_2 = \{pub_0 \to v_1, pub_1 \to v_2, org_0 \to v_6\}$. Note that symmetric matchings (e.g., $f'_1 = \{pub_0 \to v_1, pub_1 \to v_0, org_0 \to v_4\}$) also exist but are omitted for simplicity as they induce the same vertex sets.
\end{example}


Now we can define the \sourceop operator based on induced subgraph isomorphism.

\begin{definition}[Source Operator, $\source$]
\label{def:source-operator}
$\source: G \\ \times q \to \mathcal{G}\left<G\right>$. Given a property graph $G = (V, E)$ and a query graph $q = (V_q, E_q)$, the \sourceop operator finds all induced subgraphs in $G$ that are isomorphic to $q$ (\refdef{induced-subgraph-isomorphism}). The operator returns a hypergraph $\mathcal{G}\left<G\right> = (\mathcal{V}_{\mathcal{G}}, \mathcal{E}_{\mathcal{G}}, \mu_{\mathcal{G}})$ where each hyperedge $\epsilon \in \mathcal{E}_{\mathcal{G}}$ corresponds to a matched induced subgraph $G_s = (V_s, E_s)$ with $\epsilon = V_s$.

For each hyperedge $\epsilon$, the property mapping function $\mu_{\mathcal{G}}$ includes a built-in property ``\textit{topology}'' that captures the connectivity pattern of the induced subgraph $G_s$. Specifically, $\mu_{\mathcal{G}}(\epsilon, \text{topology})$ (or $\epsilon.\text{topology}$) is determined by the edge set $E_s$ of the induced subgraph: two hyperedges have the same topology if and only if their induced subgraphs are isomorphic as graphs (i.e., same connectivity structure).
\end{definition}

\begin{definition}[Hypergraph Schema]
\label{def:hypergraph-schema}
Given a hypergraph $\mathcal{G}\left<G\right>$ regarding a query graph $q = (V_q, E_q, \tau_q)$, its schema $\mathcal{S}_q$ is defined as:
$$\mathcal{S}_q = V_q \cup E_q \cup \mathcal{P}_{\mathcal{G}}$$
where $V_q$ represents query vertices (vertex columns), $E_q$ represents query edges (edge columns), and $\mathcal{P}_{\mathcal{G}}$ is the set of hyperedge property names (hyperedge columns).

For each hyperedge $\epsilon$ in the hypergraph, there exists a corresponding matching $f: V_q \to V_\epsilon$ (i.e., a matched instance as defined in \refdef{induced-subgraph-isomorphism}). The schema defines the following three types of columns based on this matching:
\begin{itemize}
    \item \textbf{Vertex columns}: For each query vertex $u \in V_q$, there exists a unique vertex $v = f(u) \in V_\epsilon$ in the underlying graph $G$. We can access properties of this vertex using the notation $u.\prop$, which denotes the value of $\prop$ for vertex $f(u)$ in $G$. 
    \item \textbf{Edge columns}: For each query edge $(u_1, u_2) \in E_q$, the corresponding edge $(f(u_1), f(u_2))$ exists in the induced subgraph. We can access its properties using $(u_1, u_2).\prop$, which denotes the value of $\prop$ for edge $(f(u_1), f(u_2))$ in $G$.
    \item \textbf{Hyperedge columns}: For each hyperedge property name $\prop \in \mathcal{P}_{\mathcal{G}}$, we can access the property value using $\epsilon.\prop$. These properties are defined through $\mu_{\mathcal{G}}$ and include structural properties like ``\text{topology}''.
\end{itemize}
\end{definition}

The hypergraph schema represents the structural and property information maintained throughout hypergraph data model operations. The association between a hypergraph and its query graph can be established in two ways: (1) explicitly through the \sourceop operator (\refdef{source-operator}), where pattern matching against query graph $q$ generates hyperedges with their matchings; (2) implicitly through the \joinop operator (\refdef{join-operator}), which composes two query graphs to form an extended query graph while maintaining schema consistency. Throughout all operations, the schema preserves its three-part structure of vertex, edge, and hyperedge columns.

The hypergraph schema provides a uniform view over all hyperedges: each hyperedge can be conceptually viewed as a row in a relational table, where columns correspond to query vertices, query edges, and hyperedge properties. Cell values for vertex and edge columns reference the actual vertices and edges in the underlying property graph $G$ through the matching $f$, while hyperedge columns contain structural and other properties of the subgraphs themselves. This schema-based view naturally enables access to both graph properties from $G$ and structural properties of hyperedges, and serves as the foundation for bridging hypergraphs and traditional data cube models (see \refsec{model:view} and \refsec{model:cube-operators}).

\begin{remark}[Structural Properties of Hyperedges]
\label{remark:structural-properties}
The hyperedge properties in $\mathcal{P}_{\mathcal{G}}$ can capture structural characteristics of the induced subgraphs. A key example is the built-in \text{topology} property (introduced in \refdef{source-operator}), which describes the connectivity pattern among vertices in a hyperedge. For instance, consider a hyperedge containing three vertices induced from a query graph with three edges. The induced subgraph can form different topologies depending on which edges are actually present in $G$:
\begin{itemize}
    \item \textbf{Triangle}: all three vertices are mutually connected (all three edges present)
    \item \textbf{2-path}: vertices form a linear chain (exactly two edges present)
    \item \textbf{Disconnected}: fewer than two edges present, or vertices are completely isolated
\end{itemize}
Such topology can be efficiently encoded as a bitset where each bit represents edge presence (1 for connected, 0 for disconnected) between vertex pairs. This encoding enables efficient classification and aggregation based on graph patterns, which is essential for \explorativebi analysis. More generally, we can consider incorporating other graph analytics (e.g., community detection) to define the \sourceop, as well as additional structural properties such as modularity score and density. We will explore these possibilities in future work.
\end{remark}

\begin{example}
\label{ex:source-operator}
\revisesigmod{Continuing from the previous example, applying $\source(G_0, q_s)$ returns a hypergraph $\mathcal{G}_{0}\left<G_0\right>$ with two hyperedges $\epsilon_1$ and $\epsilon_2$ corresponding to matchings $f_1$ and $f_2$ (\reffig{example:schema:hypergraph})}. The hypergraph schema $\mathcal{S}_{q_s}$ $=$ $\{pub_0,$ $pub_1,$ $org_0\}$ $\cup E_{q_s} \cup \{\text{topology}\}$ enables property access via query vertices, e.g., $pub_0.\textit{id}$ evaluates to $v_0.\textit{id}$ for $\epsilon_1$ where $f_1(pub_0) = v_0$. Both hyperedges have $\epsilon.\text{topology} = \text{Triangle}$ since all three vertices are mutually connected in their induced subgraphs.
\end{example}

\subsubsection{\joinop Operator}
\label{sec:model:join}

To support exploratory analysis, the data model must accommodate dynamic schema evolution. The \joinop operator enables this by merging two hypergraphs to incorporate new data dimensions as analytical requirements deepen.


\begin{figure*}[ht]
    \centering
    \begin{subfigure}[b]{0.48\textwidth}
        \centering
        \includegraphics[width=.5\textwidth]{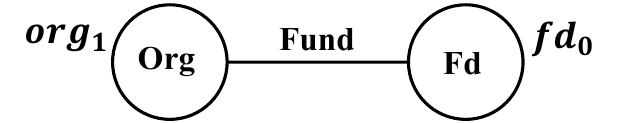}
        \caption{Query graph $q_{\text{f}}$ for the \joinop operator}
        \label{fig:example:org-funding-query}
        \vspace{1em}
        
        \includegraphics[width=.82\textwidth]{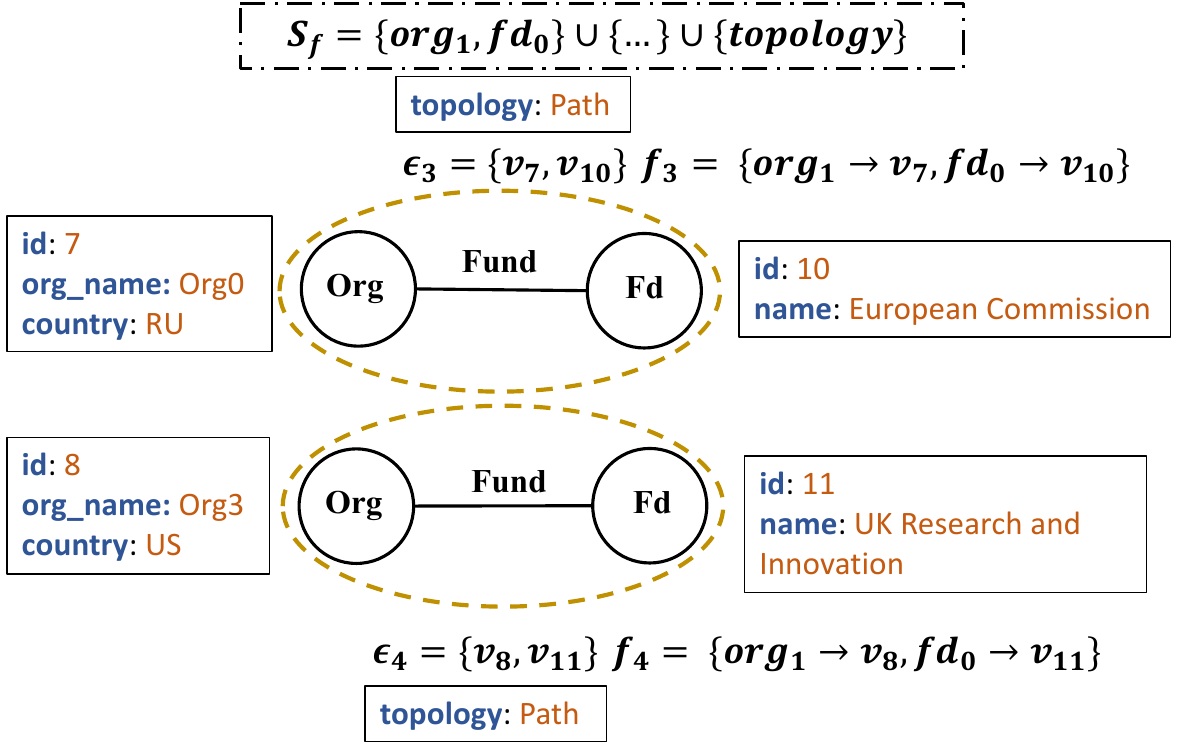}
        \caption{Hypergraph $\mathcal{G}_{\text{f}}$. Fd is the short name of Funding.}
        \label{fig:example:org-funding-hyper}
    \end{subfigure}
    \hfill
    \begin{subfigure}[b]{0.48\textwidth}
        \centering
        \includegraphics[width=0.88\textwidth]{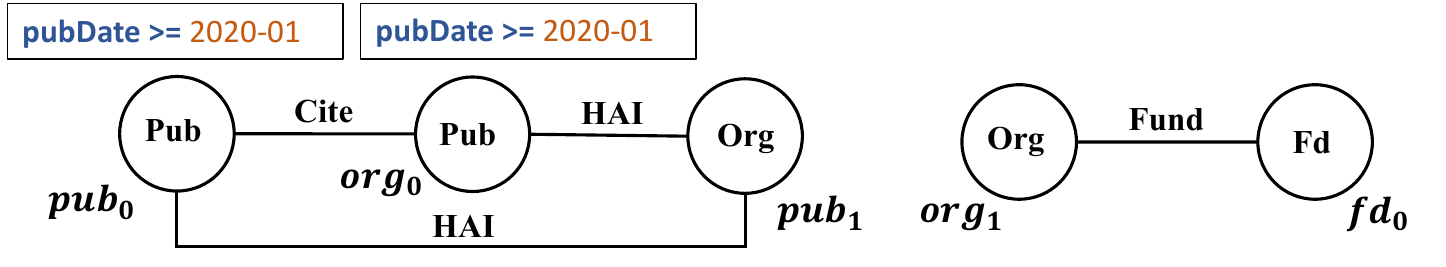}
        \caption{The query graph for $\mathcal{G}_{\text{join}}$}
        \label{fig:example:join-query}
        \vspace{1em}

        \includegraphics[width=.92\textwidth]{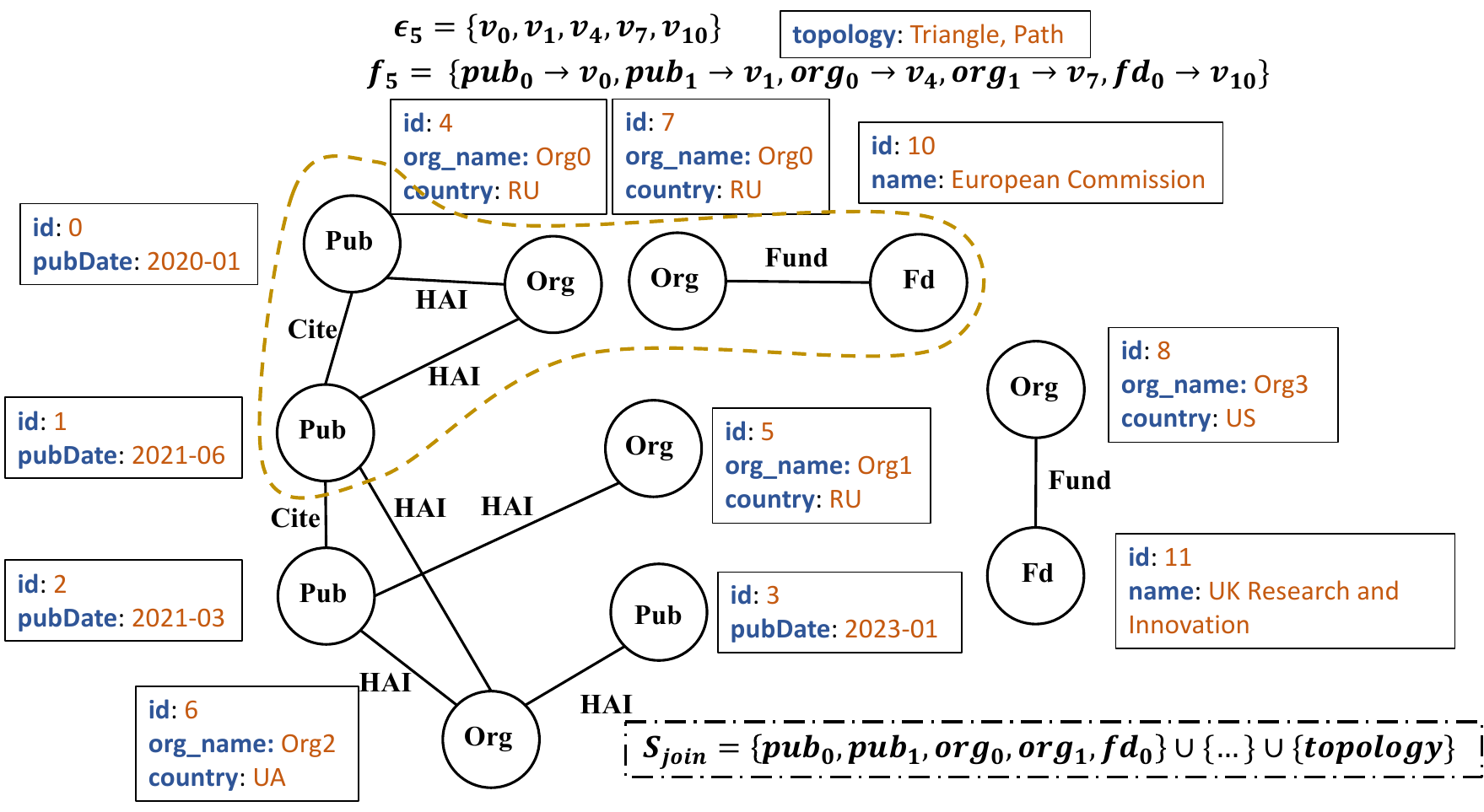}
        \caption{Hypergraph $\mathcal{G}_{\text{join}}$ obtained by joining $\mathcal{G}_0$ and $\mathcal{G}_{\text{f}}$ with $\sigma: \{(org_0.\text{org\_name}, org_1.\text{org\_name})\}$}
        \label{fig:example:pub-org-funding-hyper}
    \end{subfigure}
    \caption{An example of the Join operator}
    \label{fig:example:person-place-join}
\end{figure*}

\begin{definition}[Join Condition, $\sigma$]
\label{def:join-condition}
\revisesigmod{Let hypergraphs \\ $\mathcal{G}$$\left<G\right>$ and $\mathcal{G}'$$\left<G'\right>$ correspond to query graphs $q = (V_q, E_q,$ $\tau_q)$ and $q' = (V_{q'}, E_{q'}, \tau_{q'})$ respectively, where $\mathcal{S}_q$ and $\mathcal{S}_{q'}$ are the corresponding hypergraph schemas (\refdef{hypergraph-schema}).
A join condition is defined as a set of vertex property equality constraints}:
$$\sigma = \{(u_1.\prop_1, u'_1.\prop'_1), \cdots, (u_k.\prop_k, u'_k.\prop'_k)\}$$
where $u_i \in V_q$ and $u'_i \in V_{q'}$ for $i = 1,\cdots,k$. 

Two hyperedges $\epsilon \in \mathcal{E}_{\mathcal{G}}$ and $\epsilon' \in \mathcal{E}_{\mathcal{G}'}$ with their respective matchings $f: V_q \to V_\epsilon$ and $f': V_{q'} \to V_{\epsilon'}$ satisfy the join condition $\sigma$ if and only if:
$$\forall (u_i.\prop_i, u'_i.\prop'_i) \in \sigma \Rightarrow f(u_i).\prop_i = f'(u'_i).\prop'_i.$$
\end{definition}

\begin{definition}[\joinop Operator, $\join$]
\label{def:join-operator}
Given hypergraphs $\mathcal{G}\left<G\right>$ and $\mathcal{G}'\left<G'\right>$ with their associated components as defined in \refdef{join-condition}, and a join condition $\sigma$, the \joinop operator is defined as:
$$\join: \mathcal{G}\left<G\right> \times \mathcal{G}'\left<G'\right> \times \sigma \to \mathcal{G}_r\left<G \cup G'\right>$$

The operator returns a hypergraph $\mathcal{G}_r\left<G \cup G'\right> = (\mathcal{V}_{\mathcal{G}_r}, \mathcal{E}_{\mathcal{G}_r}, \mu_{\mathcal{G}_r})$ regarding an implicitly composed query graph $q_r$, where:
\begin{itemize}
    \item \textbf{Underlying property graph}: $G \cup G'$ combines vertices and edges from both $G$ and $G'$.
    \item \textbf{Composed query graph}: $q_r = (V_{q_r}, E_{q_r}, \tau_{q_r})$ where $V_{q_r} = V_q \cup V_{q'}$ and $E_{q_r} = E_q \cup E_{q'}$. This composition implicitly extends the query graph structure while maintaining consistency with the original query graphs.
    \item \textbf{Extended schema}: $\mathcal{S}_{q_r} = \mathcal{S}_q \cup \mathcal{S}_{q'} = (V_q \cup V_{q'}) \cup (E_q \cup E_{q'}) \cup (\mathcal{P}_{\mathcal{G}} \cup \mathcal{P}_{\mathcal{G}'})$, preserving the three-part structure of vertex columns, edge columns, and hyperedge columns.
    \item \textbf{Hyperedges}: For each pair of hyperedges $\epsilon \in \mathcal{E}_{\mathcal{G}}$ and $\epsilon' \in \mathcal{E}_{\mathcal{G}'}$ with matchings $f$ and $f'$ respectively that satisfy the join condition $\sigma$ (\refdef{join-condition}), a new hyperedge $\epsilon_r$ is created in $\mathcal{E}_{\mathcal{G}_r}$ where: (1) For $\epsilon_r$, we have $V_{\epsilon_r} = V_\epsilon \cup V_{\epsilon'}$; (2) The matching $f_r: V_{q_r} \to V_{\epsilon_r}$ is defined as: $f_r(u) = f(u)$ for $u \in V_q$ and $f_r(u') = f'(u')$ for $u' \in V_{q'}$.
\end{itemize}
\end{definition}

\begin{example}
\label{ex:join-operator}
Given $\mathcal{G}_0$ as shown in \reffig{example:schema:hypergraph}, suppose we want to add the Funding dimension for analysis. We first construct a hypergraph $\mathcal{G}_{\text{f}}$ as shown in \reffig{example:org-funding-hyper} with the query graph $q_{\text{f}}$ in \reffig{example:org-funding-query}.
In $\mathcal{G}_{\text{f}}$, each hyperedge represents an Organization-Funding relationship. 
We then apply the \joinop operator to merge $\mathcal{G}_0$ and $\mathcal{G}_{\text{f}}$ with join condition $\sigma = \{(org_0.\textit{org\_name}, org_1.\textit{org\_name})\}$, where $org_0 \in V_{q_s}$ and $org_1 \in V_{q_{\text{f}}}$.
The resulting hypergraph $\mathcal{G}_{\text{join}}$ is shown in \reffig{example:pub-org-funding-hyper}.
\revisesigmod{Specifically, $\epsilon_1$ and $\epsilon_3$ satisfy the join condition $\sigma$ because $f_1(org_0) = v_4$, $f_3(org_1) = v_7$, and $v_4.\textit{org\_name} = v_7.\textit{org\_name}$\text{=}$\text{``Org0''}$.}
Therefore, they are merged into hyperedge $\epsilon_5$ in $\mathcal{G}_{\text{join}}$. In contrast, $\epsilon_2$ has no matching hyperedge in $\mathcal{G}_{\text{f}}$ that satisfies $\sigma$, so $\mathcal{G}_{\text{join}}$ contains only one hyperedge.
The query graph for $\mathcal{G}_{\text{join}}$ is shown in \reffig{example:join-query}, and $\mathcal{S}_{\text{join}} = \mathcal{S}_{q_s} \cup \mathcal{S}_{q_{\text{f}}}$.
\end{example}

\begin{remark}[Design Choice: Separate Underlying Graphs]
\label{remark:separate-graphs}
In the \joinop operator, we explicitly define the two hypergraphs to have different underlying property graphs $G$ and $G'$. This design avoids complications when vertices in $V_\epsilon$ and $V_{\epsilon'}$ overlap by treating $G$ and $G'$ as initially disjoint. Importantly, this preserves the isomorphism semantics while ``matching'' the composed query graph $q_r$, as well as maintaining the set semantics of the resulting hyperedge. When both hypergraphs originate from the same graph (e.g., via different \sourceop operations), we can \emph{conceptually} view $G$ and $G'$ as two copies, but they remain the same graph in implementation.
\end{remark}

\subsubsection{\viewop Operator}
\label{sec:model:view}

While the hypergraph structure provides flexibility for modeling complex and dynamic schema in \kw{Exploratory} \kw{BI}, it can be challenging to directly analyze and interpret. To leverage the existing BI ecosystem—including traditional BI techniques, aggregate functions, and visualization toolkits—we introduce the \viewop operator that projects hypergraphs into relational tables, enabling seamless integration with established BI tools.

\begin{definition}[\viewop Operator, $\view$]
\label{def:view-operator}
Given a hypergraph $\mathcal{G}\left<G\right> = (\mathcal{V}_{\mathcal{G}}, \mathcal{E}_{\mathcal{G}}, \mu_{\mathcal{G}})$ with query graph $q = (V_q, E_q,$ $\tau_q)$ and schema $\mathcal{S}_q = V_q \cup E_q \cup \mathcal{P}_{\mathcal{G}}$, the operator $\view: \mathcal{G}\left<G\right> \to R$ projects $\mathcal{G}\left<G\right>$ into a relational table $R$ where each hyperedge $\epsilon$ with matching $f$ corresponds to one row. The columns are derived from vertex, edge, and hyperedge properties as defined in $\mathcal{S}_q$.
\end{definition}


In general, when hyperedges in a hypergraph are generated from different query graphs, projecting them into a single relational table requires constructing a unified schema that accommodates all vertex and edge properties, with NULL values for missing columns. In this paper, we simplify by assuming all hyperedges in a hypergraph are generated from the same query graph (ensured by our \sourceop and \joinop operators). However, hyperedges may still contain different numbers of edges depending on the topology of their induced subgraphs. For example, given a query graph with three vertices and three edges, one hyperedge might form a triangle (all three edges present), while another forms a path (only two edges present). We can handle such varying edge structures by including all possible edges from the query graph in the schema and setting absent edges to NULL.


\begin{figure*}
    \centering
    \includegraphics[width=0.95\textwidth]{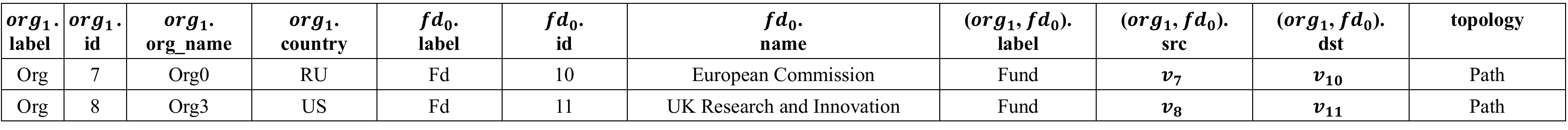}
    \caption{The relational table obtained by applying the \viewop operator to the hypergraph $\mathcal{G}_{\text{f}}$}
    \label{fig:example:org-funding-view}
\end{figure*}

\begin{example}
\label{ex:view-operator}
Applying the \viewop operator to $\mathcal{G}_{\text{f}}$ (\reffig{example:org-funding-hyper}) produces table $R$ (\reffig{example:org-funding-view}), where each row corresponds to a hyperedge in $\mathcal{G}_{\text{f}}$.
\end{example}

\subsubsection{Adaptation to Cube Operators}
\label{sec:model:cube-operators}

To support multi-dimensional analysis on hypergraphs, we can conceptually turn the hypergraph schema into a relational view as in the \viewop operator, apply traditional BI cube operators~\cite{gray1997data} such as \drilldownop, \rollupop, \sliceop, and \diceop, and then transform back to the hypergraph representation—all without changing the physical data layout. These operators do not alter the underlying query graph but rather operate at the schema level by manipulating which properties are used for grouping and aggregation.

For example, \drilldownop refines the analysis by increasing granularity—moving from coarser to finer levels of detail (e.g., drilling down to specific vertex properties and even topologies). Conversely, \rollupop reduces granularity by aggregating hyperedges at higher levels. \sliceop filters hyperedges by fixing specific property values on one dimension, while \diceop selects hyperedges satisfying constraints across multiple dimensions. Since all hyperedges in a hypergraph share the same query graph structure (ensured by our \sourceop and \joinop operators), they naturally project to a uniform relational schema, enabling seamless application of these cube operators. Therefore, we omit their formal definitions in this paper.

\section{Framework Overview}
\label{sec:overview}

To support \explorativebi and address the challenges discussed in Section~\ref{sec:intro}, we design and implement the \sys system. As illustrated in \reffig{overview:framework}, the \sys framework consists of three main layers: Storage Layer, Computational Layer, and Analysis Layer. In the following subsections, we detail the functionality and design of each layer.

\begin{figure}
    \centering
    \includegraphics[width=\linewidth]{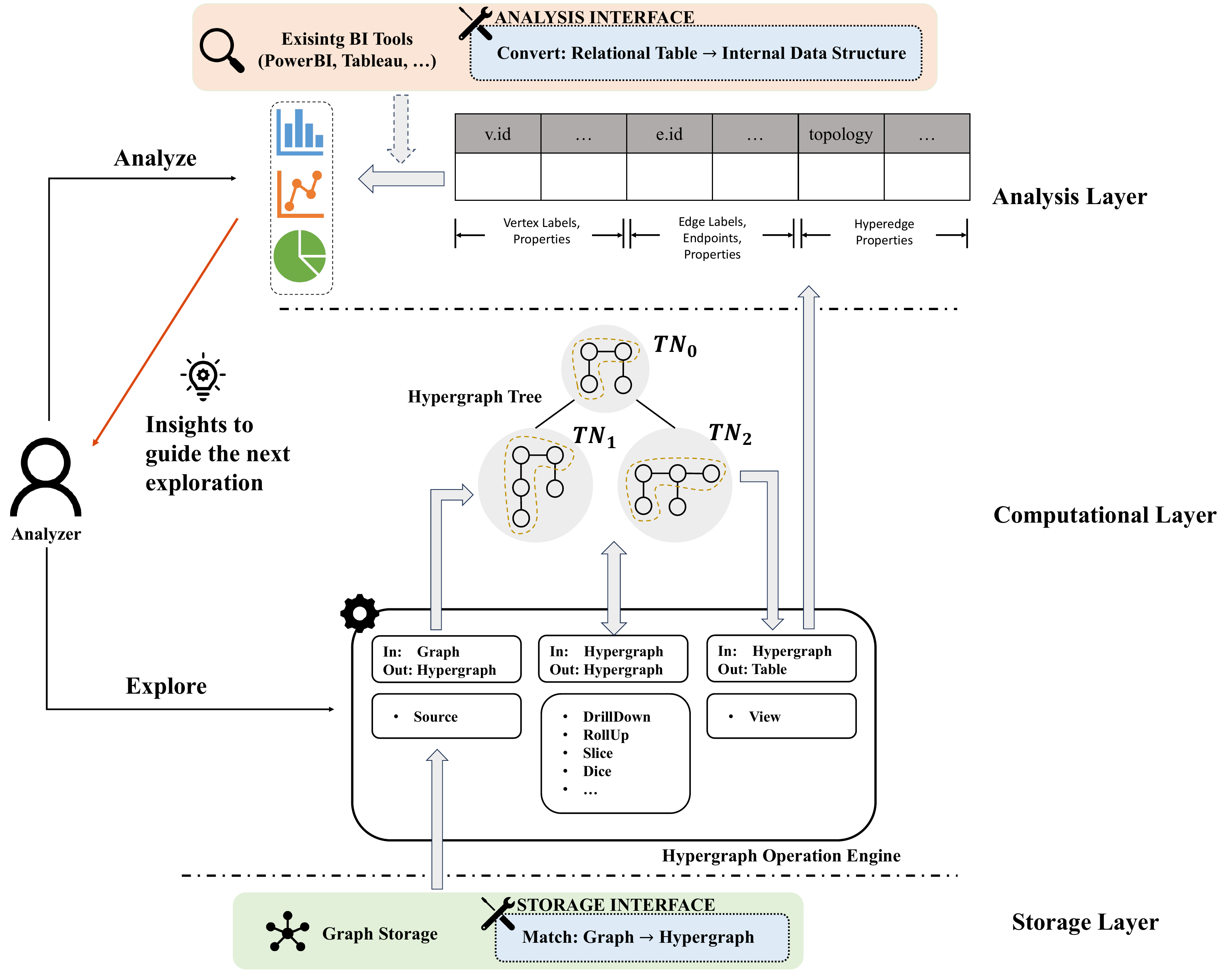}
    \caption{The architecture of the \sys framework}
    \label{fig:overview:framework}
\end{figure}

\subsection{Storage Layer}
\label{sec:overview:storage}

\revisesigmod{The Storage Layer is responsible for data persistence and management. 
Although our Hypergraph Data Model is logically built on a property graph abstraction (see \refsec{model}), the underlying storage architecture is designed to be agnostic. In principle, any database system capable of supporting induced subgraph matching --- whether a native graph database (e.g., Neo4j~\cite{neo4j}) or a relational database (via schema mapping~\cite{lou2024towards}) --- can serve as the backend. }

\revisesigmod{However, as highlighted in \refsec{intro}, generic database engines currently fall short of the performance requirements for \explorativebi on large-scale datasets. Relational engines incur high costs for complex join-based matching, while native graph databases lack the specialized sampling primitives needed for efficient estimation. Consequently, \sys currently implements a custom storage layer. We build this layer by extending the DAF algorithm~\cite{han2019efficient} to rigorously support induced subgraph matching~\cite{carletti2017challenging} and integrating it directly with our sampling-based optimization engine. This implementation serves as a performance enabler, allowing \sys to meet interactive analysis demands while retaining the architectural flexibility to support standard databases in the future.}

\subsection{Computational Layer}
\label{sec:overview:computational}

The Computational Layer serves as the core of \sys, managing the interactions with both the Storage Layer and the Analysis Layer while orchestrating hypergraph operations to enable exploratory analysis. As shown in \reffig{overview:framework}, this layer comprises two key components: the Hypergraph Operation Engine and Hypergraph Tree.

\stitle{Hypergraph Operation Engine.} The Hypergraph Operation Engine implements all operators defined on the hypergraph data model, e.g., \sourceop, \joinop, and \viewop etc. These operators enable dynamic schema evolution and iterative analysis by transforming hypergraphs to incorporate new dimensions or refine analytical focus.

Among these operators, the \sourceop operator plays a unique role. While most operators take hypergraphs as both input and output, the \sourceop operator takes the graph database as input and produces a hypergraph as output. It constructs hypergraphs from the underlying property graph by performing subgraph matching based on query patterns (as defined in \refdef{source-operator}). Therefore, the \sourceop operator serves as the bridge connecting the Storage Layer and the Computational Layer, enabling data retrieval and hypergraph construction. Additionally, the \viewop operator bridges the Computational Layer and the BI Analysis Layer by taking a hypergraph as input and producing a relational table as output, which is then passed to the Analysis Layer for analysis.

\stitle{Hypergraph Tree.} The Hypergraph Tree maintains a record of hypergraphs generated during the exploratory analysis process. Each node in the tree represents a hypergraph, and child nodes are derived from their parent nodes through the application of operators from the Hypergraph Operation Engine. For example, in \reffig{overview:framework}, the hypergraph of $TN_1$ is obtained by applying the \joinop operator to the hypergraph of its parent node $TN_0$, incorporating a new data dimension for analysis.

This tree structure captures the evolution of analytical understanding throughout the exploration process, enabling analysts to review previous states, compare analytical paths, and reuse intermediate results. Analysts can select hypergraphs from the tree, apply operators to generate new hypergraphs, and pass them to the Analysis Layer for examination.

\subsection{Analysis Layer}
\label{sec:overview:analysis}

The Analysis Layer receives the relational table from the Computational Layer, which is obtained with the \viewop operator, and then performs aggregation and analysis on it. This transformation enables the application of traditional BI aggregation operations and facilitates result interpretation. Specifically, analysts select specific dimensions (columns) for grouping and apply aggregate functions to compute measures of interest. 
The aggregate functions currently supported in our system include \countfunc, \distinctcountfunc, \maxfunc, \minfunc, and \sumfunc. These aggregation functions cover typical BI analysis scenarios \cite{xu2023efficiently,yi2014indexing}. In the future, we will consider supporting additional aggregate functions.

We have designed an analysis interface to integrate the analytical capabilities of existing BI tools such as Power BI~\cite{powerbi} and Tableau~\cite{tableau}. Specifically, if the convert function in the interface is implemented in an existing BI tool, the relational table transformed by the \viewop operator can be converted into the internal data structure of that BI tool, thereby enabling the use of existing BI tools to perform subsequent analytical tasks on the relational table.

Importantly, insights derived from the Analysis Layer can guide subsequent operations in the Computational Layer. Based on analytical results, analysts may identify the need to incorporate additional data dimensions (through \joinop operations), refine the data scope (through \sliceop operations), or adjust the granularity of analysis (through \drilldownop or \rollupop operations). This feedback loop between the Analysis Layer and the Computational Layer embodies the essence of exploratory analysis, enabling analysts to iteratively refine their understanding and progressively discover insights.

\comment{
\revise{
\begin{remark}[Agentic Explorable BI]
The structured \explorativebi workflow naturally suits agent-driven automation. Agents can invoke operators (\sourceop, \joinop, \viewop) to construct and manipulate hypergraphs, using the Hypergraph Tree as memory to guide iterative analysis. Since agentic LLMs can automatically plan, execute, and explore complex tasks~\cite{zhang2025deepanalyze}, our future work includes building an agentic \explorativebi system to further lower the barrier to BI analysis.
\end{remark}
}
}

\section{Operator Implementation}
\label{sec:sample}

This section describes how to implement the operators in the computational layer of \sys introduced in the previous section. Specifically, since operators inherited from traditional BI, such as \drilldownop and \sliceop, can be naturally implemented following traditional BI approaches, we focus on the implementation of three new operators: \sourceop, \joinop, and \viewop.

\subsection{Basic Implementation of New Operators}
\label{sec:sample:basic}

\stitle{\sourceop Operator.} According to \refdef{source-operator}, the \sourceop operator performs subgraph matching on the property graph to find matching results, which form hyperedges, thereby obtaining a hypergraph. The implementation directly invokes the subgraph matching algorithm in the storage layer, as described in Section~\ref{sec:overview}.

\stitle{\joinop Operator.} The \joinop operator merges two hypergraphs into one, merging hyperedges based on the join condition in the process. An intuitive method is as follows: for two hypergraphs $\mathcal{G}_1 = (\mathcal{V}_{\mathcal{G}_1}, \mathcal{E}_{\mathcal{G}_1}, \kappa_{\mathcal{G}_1}, \mu_{\mathcal{G}_1})$ and $\mathcal{G}_2 = (\mathcal{V}_{\mathcal{G}_2}, \mathcal{E}_{\mathcal{G}_2}, \kappa_{\mathcal{G}_2}, \mu_{\mathcal{G}_2})$ participating in the join, for each hyperedge in $\mathcal{G}_1$, iterate through the hyperedges in $\mathcal{G}_2$ to determine whether these two hyperedges satisfy the join condition. When the join condition is satisfied, add the new hyperedge to the resulting hypergraph. This join implementation method is called \joinnosample. Note that join performance can be improved by building indexes on the vertex properties, edge properties, and hyperedge properties.

\stitle{\viewop Operator.} The \viewop operator transforms a hypergraph into a relational table, enabling the application of traditional BI methods for hypergraph analysis. Specifically, given a hypergraph, the \viewop operator iterates through its hyperedges, and for each hyperedge, projects it into a row in the relational table according to \refdef{view-operator}, thereby obtaining the final relational table.

\subsection{Performance Challenges}
\label{sec:sample:challenges}

The above implementations of \sourceop and \joinop face significant performance challenges in supporting interactive exploratory analysis. In \explorativebi, analysts typically need to repeatedly invoke these operators---for example, continuously calling \sourceop to obtain new hypergraphs, or calling \joinop multiple times to merge hypergraphs for iterative analysis. Therefore, the execution efficiency of these operators is critical to ensure smooth analytical workflows. Unfortunately, the straightforward implementations can suffer from critical performance issues.

\stitle{\sourceop Operator Challenges.} For the \sourceop operator, since subgraph matching is an NP-complete problem~\cite{qiao2017subgraph}, implementing an exact algorithm to find all matching results is very time-consuming. In our experiments, finding 4-path with four vertices in a graph with only 9,892 vertices takes about 4 minutes. Additionally, subgraph matching may produce a large number of results. If hyperedges corresponding to each result are all constructed and stored in the hypergraph, this will also lead to unacceptable memory overhead.

\stitle{\joinop Operator Challenges.} The \joinop operator faces similar problems. As mentioned above, when hyperedges in $\mathcal{G}_1$ and $\mathcal{G}_2$ are sorted according to the join condition, the time complexity of the \joinnosample algorithm is $O(|\mathcal{E}_{\mathcal{G}_1}| + |\mathcal{E}_{\mathcal{G}_2}|)$. However, as \explorativebi analysis progresses, the number of hyperedges in hypergraphs may increase dramatically, leading to poor performance of \joinnosample. Specifically, on the one hand, the \sourceop operator performing subgraph matching may result in a large number of hyperedges in the hypergraph; on the other hand, multiple joins may also cause the number of edges in the hypergraph to grow (especially when the join condition $\sigma = \emptyset$, i.e., performing a Cartesian product). Moreover, during the \explorativebi analysis process, sorting or building indexes for hyperedges in the hypergraph is itself time-consuming. When hyperedges in the hypergraph are unsorted, the time complexity of the \joinnosample algorithm rises to $O(|\mathcal{E}_{\mathcal{G}_1}| \cdot |\mathcal{E}_{\mathcal{G}_2}|)$, making it difficult to meet practical computation requirements.

\subsection{Sampling-Based Solutions}
\label{sec:sample:solutions}

To address the performance challenges faced by the \sourceop and \joinop operators, we adopt sampling-based approaches that compute results on a curated subset rather than exhaustively materializing all matches.
\revise{The key motivation for sampling-based solutions stems from the nature of BI analysis: analysts are typically interested in aggregated values (e.g., COUNT, SUM) rather than individual data records~\cite{aberger2018levelheaded}. Once these aggregates can be accurately approximated from samples using \emph{unbiased} estimators, computing the complete result set becomes unnecessary. We first introduce the sampling methods for bother \sourceop and \joinop operators.} 

\stitle{Sampling for \sourceop Operator.} For the \sourceop operator, we adapt the FaSTest algorithm \cite{FaSTest} to perform sampling during the subgraph matching process, resulting in a new implementation called \sourcesample. This sampling algorithm provides an unbiased consistent estimator for the number of subgraph matching results.

To adapt FaSTest to our hypergraph data model definition, we made several modifications. First, we modified the data structures to support subgraph matching on property graphs. Second, we adjusted the requirements for matching results to find the induced subgraphs isomorphisms. Third, unlike the original algorithm which only performs counting, we record the found matching results to construct hyperedges in the hypergraph.

\stitle{Sampling for \joinop Operator.} For the \joinop operator, we implement the \textbf{Strategy Group-Sample} algorithm proposed by Chaudhuri et al.~\cite{chaudhuri1999random}. Specifically, for a join operation $\mathcal{G}_1 \join_\sigma \mathcal{G}_2$, suppose the desired number of sampled join results is $r$. For each hyperedge $\epsilon$ in $\mathcal{G}_1$, we define its weight $w(\epsilon)$ as the number of hyperedges in $\mathcal{G}_2$ that can be joined with $\epsilon$ according to the join condition $\sigma$.

During sampling, we first sample $r$ hyperedges from $\mathcal{G}_1$ according to the weights $w(\epsilon)$ (denoted as $\epsilon_1, \ldots, \epsilon_r$). These hyperedges, together with the property graph from $\mathcal{G}_1$, form a new hypergraph $\mathcal{G}'_1$. Then, we perform $\mathcal{G}'_1 \join_{\sigma} \mathcal{G}_2$, sampling one result for each hyperedge in $\mathcal{G}'_1$. After this sampling, the obtained results are equivalent to performing a uniform sampling of size $r$ on the results of $\mathcal{G}_1 \join_{\sigma} \mathcal{G}_2$. We call this sampling-based implementation \joinsample.

Based on the above sampling-based \sourceop and \joinop operators, we can significantly improve the execution performance of these operators and meet real-time analysis requirements. Although the above \sourcesample and \joinsample algorithms are adapted from existing methods, a key challenge in applying them to the \sys system is ensuring end-to-end accurate estimation after integrating these sampling algorithms into the system.


\subsection{Unbiased Estimation Guarantee}
\label{sec:sample:unbiased}

In this subsection, we will prove that after applying these sampling algorithms, when computing aggregate functions like \countfunc and \sumfunc, we can obtain unbiased estimates of the true values, thereby demonstrating the feasibility of our approach.
Meanwhile, for other aggregate functions such as \distinctcountfunc, \quantilefunc, \maxfunc, and \minfunc, to the best of our knowledge, there are no established unbiased estimation methods. Therefore, we adopt existing approximate estimation techniques and try to ensure an error bound for the estimation: for \distinctcountfunc, we use the GEE estimator~\cite{charikar2000towards}; for \quantilefunc, we employ the t-digest method~\cite{dunning2019computing}.
Moreover, for \maxfunc and \minfunc, we use the maximum and minimum values from the sampled data as estimates.
Please note that these approximate estimation methods such as GEE estimator and t-digest method require uniformly sampled data as input for a better estimation.

To demonstrate that our sampling-based operators enable unbiased estimation for \countfunc and \sumfunc, we first establish the connection between uniform sampling and unbiased estimation through the following lemma.

\begin{lemma}
\label{lemma:uniform-sampling-unbiased}
Given a relational table $\hat{R}$ that is uniformly sampled from an original table $R$ with sampling rate $\rho$, an unbiased estimator for \countfunc is:
$$\sum_{r \in \hat{R}} \frac{1}{\rho}.$$
Similarly, for computing \sumfunc on attribute $X$, an unbiased estimator is:
$$\sum_{r \in \hat{R}} \frac{X_r}{\rho},$$
where $X_r$ denotes the value of attribute $X$ in row $r$.
\end{lemma}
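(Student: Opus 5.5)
The plan is to prove both claims with indicator variables and linearity of expectation; this is the Horvitz--Thompson argument. I would first pin down what ``uniformly sampled with sampling rate $\rho$'' means: every row $r \in R$ appears in $\hat{R}$ with the same marginal probability $\Pr[r \in \hat{R}] = \rho$. This holds under Bernoulli sampling, where each row is kept independently with probability $\rho$. It also holds under simple random sampling without replacement of $m$ rows, with $\rho = m/|R|$. For each $r \in R$ I would introduce the indicator $I_r = \mathbf{1}[r \in \hat{R}]$, so that $\mathbb{E}[I_r] = \rho$. No independence between the $I_r$ is needed, only this common marginal.

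Next I would rewrite each estimator as a sum over the original table rather than over the random sample. For \countfunc this gives $\sum_{r\in\hat{R}} 1/\rho = \sum_{r\in R} I_r/\rho$. For \sumfunc on $X$ it gives $\sum_{r\in\hat{R}} X_r/\rho = \sum_{r\in R} I_r X_r/\rho$. The values $X_r$ are fixed attributes of $R$ and not random. Taking expectations and using linearity, $\mathbb{E}[\sum_{r\in R} I_r X_r/\rho] = \sum_{r\in R} X_r\,\mathbb{E}[I_r]/\rho = \sum_{r\in R} X_r$, which is the true \sumfunc. The \countfunc case is the special case $X_r \equiv 1$, with expectation $|R|$. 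If the aggregate is restricted by a predicate or a group-by value, I would apply the same argument with $X_r$ multiplied by the indicator of the predicate. The estimator then stays unbiased for every group.

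The main obstacle is not the calculation but justifying the modelling assumption, namely that each row is included with marginal probability exactly $\rho$. In particular, \joinsample samples $r$ results with replacement. There I would reinterpret the statement with the multiplicity $N_r$, the number of times row $r$ is drawn, in place of $I_r$. Then $\mathbb{E}[N_r] = r_{\text{draws}}/|R|$, so $\rho$ is read as the expected inclusion count per row, and the same linearity argument goes through unchanged. I would state this explicitly so that the later end-to-end results, which compose \sourcesample and \joinsample, can invoke the lemma with the appropriate effective $\rho$.
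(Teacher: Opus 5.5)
Your proposal is correct and matches the paper, whose proof simply cites the Horvitz--Thompson estimator; your indicator-variable computation $\mathbb{E}\bigl[\sum_{r\in R} I_r X_r/\rho\bigr]=\sum_{r\in R} X_r$ is exactly that argument, specialized to equal inclusion probabilities $\rho$. Your extra step of replacing $I_r$ with draw multiplicities $N_r$ for with-replacement sampling (the Hansen--Hurwitz variant) is also sound, and it is the more accurate reading of what the paper's later join-sampling analysis calls a ``probability''.
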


\begin{proof}
This follows directly from the Horvitz-Thompson estimator~\cite{horvitz1952generalization}.
\end{proof}

This lemma implies that as long as the hypergraph used in the \viewop operator contains uniformly sampled hyperedges, the projected relational table is equivalent to a uniformly sampled table. Consequently, according to Lemma~\ref{lemma:uniform-sampling-unbiased}, we can compute unbiased estimates for \countfunc and \sumfunc. Therefore, we need to prove that the hyperedges in sampled hypergraphs satisfy uniform sampling. Specifically, the sampling-based operators include \sourcesample and \joinsample.

\stitle{Uniform Sampling Guarantee for \sourcesample.} For the \sourcesample operator, we establish the following lemma.

\begin{lemma}
\label{lemma:source-sample-uniform}
Given a property graph $G$ and query graph $q$, let $\mathcal{G}\left<G\right> = \sourcesample(G, q)$. Then, the hyperedges in $\mathcal{G}\left<G\right>$ are uniformly sampled from all subgraph matching results.
\end{lemma}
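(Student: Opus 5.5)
The plan is to reduce the claim to the structure of the FaSTest sampler and show that the modifications made to obtain \sourcesample do not destroy its sampling distribution. FaSTest builds a candidate space for $q$ in $G$, which is a filtered set of candidate vertices per query vertex plus candidate edges. It fixes a tree-shaped (or ordered) decomposition of $q$ and, on the candidate space, computes by dynamic programming the number $T$ of \emph{candidate trees/embeddings}, i.e., homomorphism-like assignments consistent with the tree edges. It then draws samples uniformly from this set of $T$ candidate assignments by walking the tree top-down and choosing each child candidate with probability proportional to its DP count. Each drawn assignment is checked for being a genuine match, and only those that pass are kept. My first step is to state this precisely and verify the standard telescoping identity: the probability of drawing a particular candidate assignment $f$ is the product of ratios $\mathrm{cnt}(\text{child})/\mathrm{cnt}(\text{parent})$ along the walk, which collapses to $1/T$. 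Hence every candidate assignment is equally likely.

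Second, I handle the adaptations. The label and property constraints are applied only when building the candidate space. Every true match $f$ of $q$ in $G$ in the sense of \refdef{induced-subgraph-isomorphism} therefore still lies in the candidate set; this is the soundness of filtering, which I must check is not violated by property predicates, since they are unary and only remove vertices that can never appear in a match. The induced-subgraph requirement (edge preservation, no extra edges, label agreement, injectivity) is imposed only in the \emph{acceptance test} applied after a candidate is drawn. Conditioning a uniform distribution on an event gives a uniform distribution on that event. So, conditioned on acceptance, each accepted sample is uniform over the set $M$ of induced matches, and independent draws stay independent. Recording the accepted matchings as hyperedges (the third modification) changes nothing about which matches are drawn. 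Recording them does, however, supply $|M|$'s estimate $T\cdot(\#\text{accepted}/\#\text{draws})$, which is where the effective sampling rate $\rho$ needed for \reflem{uniform-sampling-unbiased} comes from.

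The main obstacle is pinning down what ``uniformly sampled'' means relative to \reflem{uniform-sampling-unbiased}. FaSTest draws with replacement, and it may also switch strategies (exact enumeration for small instances, or a graph-level sampler) depending on the estimated count. I would argue case by case. Exact enumeration trivially yields all of $M$, i.e.\ $\rho=1$. The tree-sampling mode yields i.i.d.\ uniform draws from $M$; this is enough for the Horvitz--Thompson argument, with $\rho$ interpreted as the expected inclusion count per match, $n_{\text{acc}}/|M|$. I would also note a second issue: symmetric matchings (automorphisms) inducing the same vertex set are counted as distinct matches. Uniformity is therefore over matchings $f$, which agrees with how rows are produced by \viewop in \refdef{view-operator}.
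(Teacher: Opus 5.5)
\textbf{The gap is in your case analysis.} You list three modes FaSTest may use: exact enumeration, tree sampling, and a graph-level sampler. You say you will argue case by case, but you treat only the first two.

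The third case cannot be argued. FaSTest's graph sampling produces stratified samples, as its authors show and the paper notes. Matches in different strata are therefore drawn with different probabilities. Whenever the algorithm switches to graph sampling, the recorded hyperedges are not a uniform sample of the match set $M$. So the lemma is false for \sourcesample as you have modelled it, i.e.\ with only the three adaptations: property-graph filtering, induced matching, and recording results.

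The missing idea is a design restriction rather than a probabilistic argument. \sourcesample must be modified to use \emph{only} tree sampling, with the switch to graph sampling disabled. This is the entire content of the paper's proof: it invokes FaSTest's result that tree sampling is uniform and restricts the algorithm to that mode. You need to add this modification explicitly; otherwise the case you flagged remains unproved and in fact fails.

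Once that restriction is in place, the rest of your argument is correct and more detailed than the paper's. The DP ratios telescope, so each candidate tree is drawn with probability $1/T$. Conditioning that uniform law on the acceptance event gives a uniform law on $M$. The acceptance event covers injectivity, the non-tree edges, absence of extra edges, and edge-label agreement. Your soundness check on the property filters is what guarantees every induced match remains in the candidate space. The paper leaves implicit that the induced-subgraph and property adaptations preserve FaSTest's uniformity; your argument makes this explicit.
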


\begin{proof}
The FaSTest algorithm~\cite{FaSTest} implements two sampling algorithms for subgraph matching: tree sampling and graph sampling. The authors proved that tree sampling produces uniform samples of subgraph matching results, while graph sampling produces stratified samples. To ensure uniform sampling of subgraph matching results, we modified the FaSTest algorithm to use only the tree sampling algorithm. This guarantees that the results of \sourcesample are uniformly sampled.
\end{proof}

\stitle{Uniform Sampling Guarantee for \joinsample.} \revisesigmod{Regarding} \\ the \joinsample operator, we first prove the following lemma.

\begin{lemma}
\label{lemma:join-sample-uniform}
Given hypergraphs $\mathcal{G}_1 = (\mathcal{V}_{\mathcal{G}_1}, \mathcal{E}_{\mathcal{G}_1}, \kappa_{\mathcal{G}_1}, \mu_{\mathcal{G}_1})$ and $\mathcal{G}_2 = (\mathcal{V}_{\mathcal{G}_2}, \mathcal{E}_{\mathcal{G}_2}, \kappa_{\mathcal{G}_2}, \mu_{\mathcal{G}_2})$, and sample size $r$, let $\mathcal{G}_r = \mathcal{G}_1 \text{ } \joinsample_{\sigma} \mathcal{G}_2$ and $\mathcal{G}^t_r = \mathcal{G}_1 \text{ } \joinnosample_{\sigma} \mathcal{G}_2$. Then, each hyperedge in $\mathcal{G}_r$ is sampled from $\mathcal{G}^t_r$ with uniform probability $\frac{r}{W_{\mathcal{G}_1}}$, where $W_{\mathcal{G}_1} = \sum_{\epsilon' \in \mathcal{E}_{\mathcal{G}_1}} m_{\mathcal{G}_2, \sigma}(\epsilon')$ and $m_{\mathcal{G}_2, \sigma}(\epsilon')$ denotes the number of hyperedges in $\mathcal{G}_2$ that can join with $\epsilon'$ according to $\sigma$.
\end{lemma}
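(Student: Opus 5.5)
We analyze a single draw of the two-stage procedure and then lift the result to all $r$ draws. Fix an arbitrary join result $\epsilon_r \in \mathcal{E}_{\mathcal{G}^t_r}$. By \refdef{join-operator}, $\epsilon_r$ arises from a unique pair $(\epsilon, \epsilon')$ with $\epsilon \in \mathcal{E}_{\mathcal{G}_1}$ and $\epsilon' \in \mathcal{E}_{\mathcal{G}_2}$ satisfying $\sigma$. The separate-graphs convention of Remark~\ref{remark:separate-graphs} keeps $f_r$ determined by $(f,f')$, so the map from pairs to join results is a bijection. Summing over $\epsilon$ then gives
$$|\mathcal{E}_{\mathcal{G}^t_r}| = \sum_{\epsilon\in\mathcal{E}_{\mathcal{G}_1}} m_{\mathcal{G}_2,\sigma}(\epsilon) = W_{\mathcal{G}_1}.$$
This identity is the first step; it says that $W_{\mathcal{G}_1}$ is exactly the size of the full join.

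For a single draw, the first stage of \joinsample picks $\epsilon$ with probability $w(\epsilon)/W_{\mathcal{G}_1} = m_{\mathcal{G}_2,\sigma}(\epsilon)/W_{\mathcal{G}_1}$. Conditioned on that, the second stage picks one partner uniformly among the $m_{\mathcal{G}_2,\sigma}(\epsilon)$ hyperedges of $\mathcal{G}_2$ joinable with $\epsilon$. We read ``sampling one result for each hyperedge in $\mathcal{G}'_1$'' as a uniform choice, since that is how Strategy Group-Sample of Chaudhuri et al.\ is defined. Multiplying the two stages,
$$\Pr[\text{draw}=\epsilon_r] = \frac{m_{\mathcal{G}_2,\sigma}(\epsilon)}{W_{\mathcal{G}_1}}\cdot\frac{1}{m_{\mathcal{G}_2,\sigma}(\epsilon)} = \frac{1}{W_{\mathcal{G}_1}}.$$
The weight cancels, so each draw is uniform over $\mathcal{E}_{\mathcal{G}^t_r}$. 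Hyperedges with $m_{\mathcal{G}_2,\sigma}(\epsilon)=0$ have zero weight and contribute no join results, so they never cause a division by zero.

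Finally, the $r$ first-stage draws are made independently with replacement, and each second-stage choice is independent given its anchor. The $r$ outputs are therefore i.i.d.\ uniform over $\mathcal{E}_{\mathcal{G}^t_r}$. By linearity of expectation, the expected number of times $\epsilon_r$ appears in $\mathcal{G}_r$ is $r\cdot\frac{1}{W_{\mathcal{G}_1}}$. This is the claimed inclusion probability $\frac{r}{W_{\mathcal{G}_1}}$, i.e.\ a sampling rate $\rho = r/|\mathcal{E}_{\mathcal{G}^t_r}|$, which plugs directly into \reflem{uniform-sampling-unbiased}.

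The main obstacle is pinning down exactly what ``probability $\frac{r}{W_{\mathcal{G}_1}}$'' means. With-replacement sampling can produce duplicates, so the literal probability that $\epsilon_r$ appears at least once is $1-(1-1/W_{\mathcal{G}_1})^r$, not $r/W_{\mathcal{G}_1}$. I would state the result as an expected multiplicity (first-order inclusion in the Horvitz--Thompson sense), keeping duplicates as separate rows in the \viewop output. In that form the Horvitz--Thompson estimator of \reflem{uniform-sampling-unbiased} is exactly unbiased, which is what the end-to-end guarantee needs.
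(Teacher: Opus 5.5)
Your proof is correct and follows the paper's argument. The first stage picks the anchor $\epsilon$ with probability $m_{\mathcal{G}_2,\sigma}(\epsilon)/W_{\mathcal{G}_1}$, the second stage picks the partner with probability $1/m_{\mathcal{G}_2,\sigma}(\epsilon)$, and the weight cancels. Your version adds some tightening the paper lacks: the per-draw computation followed by linearity of expectation, reading $r/W_{\mathcal{G}_1}$ as an expected multiplicity, and the explicit identity $W_{\mathcal{G}_1}=|\mathcal{E}_{\mathcal{G}^t_r}|$. The paper instead loosely calls $r\cdot m_{\mathcal{G}_2,\sigma}(\epsilon_1)/W_{\mathcal{G}_1}$ the probability that $\epsilon_1$ is sampled.
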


\begin{proof}
Suppose $\epsilon_r$ is a hyperedge in $\mathcal{G}_r$ and it is obtained by joining $\epsilon_1 \in \mathcal{E}_{\mathcal{G}_1}$ and $\epsilon_2 \in \mathcal{E}_{\mathcal{G}_2}$. Then, we analyze the probability of $\epsilon_r$ being sampled.

Firstly, according to the \joinsample algorithm, we need to sample $r$ hyperedges from $\mathcal{E}_{\mathcal{G}_1}$.
The probability that $\epsilon_1$ is sampled is:
$$P(\epsilon_1 | \mathcal{G}_1, \mathcal{G}_2) = r \cdot \frac{m_{\mathcal{G}_2, \sigma}(\epsilon_1)}{W_{\mathcal{G}_1}}.$$

After that, we need to select one hyperedge from $\mathcal{G}_2$ that can join with $\epsilon_1$ (denote it as $\epsilon_2$). The probability of selecting $\epsilon_2$ is:
$$P(\epsilon_2|\epsilon_1) = \frac{1}{m_{\mathcal{G}_2, \sigma}(\epsilon_1)}.$$

Therefore, the probability that $\epsilon_r$ is sampled is:
\begin{equation*}
    \begin{split}
        P(\epsilon_r|\epsilon_1, \epsilon_2) & = P(\epsilon_1|\mathcal{G}_1, \mathcal{G}_2) \cdot P(\epsilon_2|\epsilon_1) \\  
        & = r \cdot \frac{m_{\mathcal{G}_2, \sigma}(\epsilon_1)}{W_{\mathcal{G}_1}} \cdot \frac{1}{m_{\mathcal{G}_2, \sigma}(\epsilon_1)} \\
        & = \frac{r}{W_{\mathcal{G}_1}}.
    \end{split}
\end{equation*}

Since this probability is independent of the specific choice of $\epsilon_1$ and $\epsilon_2$, each hyperedge in $\mathcal{G}^t_r$ is sampled with equal probability $\frac{r}{W_{\mathcal{G}_1}}$, demonstrating that \joinsample provides uniform sampling.
\end{proof}

Next, we consider the scenario where the input hypergraphs $\mathcal{G}_1$ and $\mathcal{G}_2$ are also obtained through uniform sampling. In the following theorem, we use $\hat{\mathcal{G}}_1$ and $\hat{\mathcal{G}}_2$ to denote random variables representing the sampled hypergraphs, and $\overline{\mathcal{G}}_1$ and $\overline{\mathcal{G}}_2$ to denote a specific realization (observation) of the sampling. Specifically, we present the following theorem.

\begin{lemma}
\label{lemma:cascaded-join-sample}
Given hypergraphs $\mathcal{G}_1 = (\mathcal{V}_{\mathcal{G}_1}, \mathcal{E}_{\mathcal{G}_1}, \kappa_{\mathcal{G}_1}, \mu_{\mathcal{G}_1})$ and $\mathcal{G}_2 = (\mathcal{V}_{\mathcal{G}_2}, \mathcal{E}_{\mathcal{G}_2}, \kappa_{\mathcal{G}_2}, \mu_{\mathcal{G}_2})$, suppose we uniformly sample from $\mathcal{G}_1$ and $\mathcal{G}_2$ with sampling rates $\rho_1$ and $\rho_2$ to obtain a specific realization $\overline{\mathcal{G}}_1 = (\overline{\mathcal{V}}_{\overline{\mathcal{G}}_1}, \overline{\mathcal{E}}_{\overline{\mathcal{G}}_1})$ and $\overline{\mathcal{G}}_2$. Let $\overline{\mathcal{G}}_r = \overline{\mathcal{G}}_1 \text{ } \joinsample_{\sigma} \overline{\mathcal{G}}_2$ with sample size $r$, and $\mathcal{G}^t_r = \mathcal{G}_1 \text{ } \joinnosample_{\sigma} \mathcal{G}_2$. Then, $\overline{\mathcal{G}}_r$ is approximately a uniform sample of $\mathcal{G}^t_r$ with sampling rate $\frac{r \cdot \rho_1 \cdot \rho_2}{W_{\overline{\mathcal{G}}_1}}$, where $W_{\overline{\mathcal{G}}_1} = \sum_{\epsilon' \in \overline{\mathcal{E}}_{\overline{\mathcal{G}}_1}} m_{\overline{\mathcal{G}}_2, \sigma}(\epsilon')$.
\end{lemma}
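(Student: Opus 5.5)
The plan is to condition on the realization and multiply two inclusion probabilities, in the same way the proof of \reflem{join-sample-uniform} chains $P(\epsilon_1)$ and $P(\epsilon_2\mid\epsilon_1)$.

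Fix a hyperedge $\epsilon_r \in \mathcal{G}^t_r$. It arises from a pair $(\epsilon_1,\epsilon_2)$ with $\epsilon_1\in\mathcal{E}_{\mathcal{G}_1}$ and $\epsilon_2\in\mathcal{E}_{\mathcal{G}_2}$ that satisfies $\sigma$. For $\epsilon_r$ to appear in $\overline{\mathcal{G}}_r$, two things must happen:
\begin{itemize}
\item[(i)] both $\epsilon_1\in\overline{\mathcal{E}}_{\overline{\mathcal{G}}_1}$ and $\epsilon_2\in\overline{\mathcal{E}}_{\overline{\mathcal{G}}_2}$ survive the first-stage sampling;
\item[(ii)] given that, $\joinsample$ applied to $(\overline{\mathcal{G}}_1,\overline{\mathcal{G}}_2)$ selects the pair.
\end{itemize}

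For step (i), the samples of $\mathcal{G}_1$ and $\mathcal{G}_2$ are drawn independently and uniformly with rates $\rho_1$ and $\rho_2$. Under a Bernoulli or approximately Bernoulli model, the marginal inclusion probabilities are $\rho_1$ and $\rho_2$. So
$$P\bigl((\epsilon_1,\epsilon_2)\text{ survive}\bigr)=\rho_1\rho_2 .$$
This is the same for every pair, so the surviving part of $\mathcal{G}^t_r$ is itself a uniform sample of $\mathcal{G}^t_r$ with rate $\rho_1\rho_2$. It is exactly $\overline{\mathcal{G}}_1 \joinnosample_\sigma \overline{\mathcal{G}}_2$.

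For step (ii), apply \reflem{join-sample-uniform} with $\overline{\mathcal{G}}_1,\overline{\mathcal{G}}_2$ in place of $\mathcal{G}_1,\mathcal{G}_2$. Conditioned on the realization, every hyperedge of $\overline{\mathcal{G}}_1\joinnosample_\sigma\overline{\mathcal{G}}_2$ is selected with probability $r/W_{\overline{\mathcal{G}}_1}$. Multiplying the two stages gives
$$P(\epsilon_r\in\overline{\mathcal{G}}_r)=\rho_1\rho_2\cdot\frac{r}{W_{\overline{\mathcal{G}}_1}},$$
which is independent of $\epsilon_r$. This is the claimed rate.

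The main obstacle is that $W_{\overline{\mathcal{G}}_1}$ is a random variable that depends on the first-stage realization, and that realization includes whether $(\epsilon_1,\epsilon_2)$ itself survived. So the exact unconditional probability is $\rho_1\rho_2\,\mathbb{E}[r/W_{\overline{\mathcal{G}}_1}\mid \text{pair survives}]$, not the product above. This is precisely where the word ``approximately'' in the statement enters.

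I would handle it as follows. First, note that $\mathbb{E}[W_{\overline{\mathcal{G}}_1}]=\rho_1\rho_2 W_{\mathcal{G}_1}$. Second, argue that for large samples $W_{\overline{\mathcal{G}}_1}$ concentrates around this mean, for example by a law-of-large-numbers or Chebyshev argument. Third, observe that conditioning on one pair changes $W_{\overline{\mathcal{G}}_1}$ by at most $m_{\mathcal{G}_2,\sigma}(\epsilon_1)+m^{-1}$-order terms, which is negligible relative to $W$. Replacing the conditional expectation by the plug-in value $r/W_{\overline{\mathcal{G}}_1}$ then yields the approximate uniformity.

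I would also remark that the plug-in form gives $r\rho_1\rho_2/W_{\overline{\mathcal{G}}_1}\approx r/W_{\mathcal{G}_1}$. This is consistent with \reflem{join-sample-uniform} and ensures that \reflem{uniform-sampling-unbiased} applies, asymptotically, to the resulting view.
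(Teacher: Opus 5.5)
Up to the exact inclusion probability $r\rho_1\rho_2\,\mathbb{E}\bigl[1/W_{\hat{\mathcal{G}}_1}\mid \epsilon_1\in\hat{\mathcal{G}}_1,\ \epsilon_2\in\hat{\mathcal{G}}_2\bigr]$, your argument is the paper's: survival indicators times \reflem{join-sample-uniform}, then conditioning on the pair surviving. You also correctly locate the difficulty in the randomness of $W$. Where you part ways is in what ``approximately'' means. The paper makes no concentration claim. It only notes that the plug-in value $r\rho_1\rho_2/W_{\overline{\mathcal{G}}_1}$, viewed as a function of the random first-stage realization, has conditional expectation (given that the pair survives) exactly equal to that probability, i.e.\ it is an unbiased estimate of it. That needs nothing beyond the sampling model, but it says nothing about any single realization. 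Your route aims at a stronger per-realization statement, including $r\rho_1\rho_2/W_{\overline{\mathcal{G}}_1}\approx r/W_{\mathcal{G}_1}$, which the paper never claims.

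As written, your concentration step has a gap. Sample size alone does not make $W_{\overline{\mathcal{G}}_1}$ concentrate, because it is a sum of correlated products of indicators and one join key can dominate. Suppose a single $\epsilon^*\in\mathcal{E}_{\mathcal{G}_1}$ joins with $B$ hyperedges of $\mathcal{G}_2$, and these pairs make up almost all of $W_{\mathcal{G}_1}$. Then:
\begin{itemize}
\item $W_{\overline{\mathcal{G}}_1}$ is about $\rho_2 B$ if $\epsilon^*$ survives and much smaller otherwise, so its relative variance stays near $(1-\rho_1)/\rho_1$ however large $B$ is;
\item conditioning on a pair through $\epsilon^*$ multiplies $\mathbb{E}[W]$ by roughly $1/\rho_1$;
\item your closing approximation is off by the factor $\rho_1$.
\end{itemize}
The conditioning shift is bounded by $m_{\mathcal{G}_2,\sigma}(\epsilon_1)$ plus the number of $\mathcal{G}_1$-hyperedges joinable with $\epsilon_2$; the latter is presumably what your ``$m^{-1}$'' term means. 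Declaring this shift negligible is exactly an extra no-heavy-key hypothesis that you would have to state, e.g.\ both degrees are $o(\rho_1\rho_2 W_{\mathcal{G}_1})$.

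Even granting that hypothesis, Chebyshev does not get you from $W$ to $\mathbb{E}[1/W\mid\cdot]$. On the lower tail $1/W$ can be as large as $1$. Writing $\mu$ for the conditional mean of $W$, Chebyshev bounds $\mu\Pr[W<(1-\delta)\mu]$ only by $\mathrm{Var}(W)/(\delta^2\mu)$. This bound does not vanish, because the pair indicators are positively correlated and $\mathrm{Var}(W)$ is at least of order $\mu$. You would need an exponential lower-tail bound such as Janson's inequality.

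For the purpose this lemma actually serves, neither notion of approximation is needed, and your two-stage decomposition already shows why. Work under the lemma's hypotheses (independent first-stage samples with inclusion probabilities $\rho_1,\rho_2$). Conditional on the first stage, \reflem{join-sample-uniform} makes $\frac{W_{\overline{\mathcal{G}}_1}}{r}\sum_{\epsilon\in\overline{\mathcal{G}}_r}X_\epsilon$ unbiased for the sum of $X$ over $\overline{\mathcal{G}}_1\joinnosample_{\sigma}\overline{\mathcal{G}}_2$. Your step (i) plus the tower property then make its rescaling by $1/(\rho_1\rho_2)$ exactly unbiased for the sum over $\mathcal{G}^t_r$. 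That rescaled quantity is precisely the estimator of \reflem{uniform-sampling-unbiased} with the plug-in rate. Hence \countfunc and \sumfunc are exactly, not merely asymptotically, unbiased.
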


\begin{proof}
Suppose $\epsilon_r$ is a hyperedge in $\mathcal{G}^t_r$, obtained by joining $\epsilon_1 \in \mathcal{E}_{\mathcal{G}_1}$ and $\epsilon_2 \in \mathcal{E}_{\mathcal{G}_2}$. We analyze the probability that $\epsilon_r$ is sampled in the final result.

Let the sampled hypergraphs be $\hat{\mathcal{G}}_1$ and $\hat{\mathcal{G}}_2$, according to Lemma~\ref{lemma:join-sample-uniform}, the probability that $\epsilon_r$ is sampled is:
$$\mathbf{1}_{\mathcal{E}_1 \in \hat{\mathcal{G}}_1} \cdot \mathbf{1}_{\mathcal{E}_2 \in \hat{\mathcal{G}}_2} \cdot \frac{r}{W_{\hat{\mathcal{G}}_1}},$$
where $\mathbf{1}_{\epsilon_1 \in \hat{\mathcal{G}}_1}$ is an indicator function satisfying:
$$\mathbf{1}_{\epsilon_1 \in \hat{\mathcal{G}}_1} = \begin{cases}
    1 & \text{if } \epsilon_1 \in \hat{\mathcal{G}}_1 \\
    0 & \text{if } \epsilon_1 \notin \hat{\mathcal{G}}_1
\end{cases}.$$

Since $\hat{\mathcal{G}}_1$ and $\hat{\mathcal{G}}_2$ are sampled with rates $\rho_1$ and $\rho_2$, respectively, we need to compute the expectation over different instances of $\hat{\mathcal{G}}_1$ and $\hat{\mathcal{G}}_2$. We can decompose this expectation by conditioning on whether $\epsilon_1$ and $\epsilon_2$ are sampled:
\begin{equation*}
    \begin{split}
        P( & \epsilon_r| \mathcal{G}_1, \mathcal{G}_2) = \mathbb{E}_{\hat{\mathcal{G}}_1, \hat{\mathcal{G}}_2}\left[\mathbf{1}_{\epsilon_1 \in \hat{\mathcal{G}}_1} \cdot \mathbf{1}_{\epsilon_2 \in \hat{\mathcal{G}}_2} \cdot \frac{r}{W_{\hat{\mathcal{G}}_1}}\right] \\
        & = P(\epsilon_1 \in \hat{\mathcal{G}}_1, \epsilon_2 \in \hat{\mathcal{G}}_2) \cdot \mathbb{E}_{\hat{\mathcal{G}}_1, \hat{\mathcal{G}}_2}\left[\frac{r}{W_{\hat{\mathcal{G}}_1}} \mid \epsilon_1 \in \hat{\mathcal{G}}_1, \epsilon_2 \in \hat{\mathcal{G}}_2\right] \\
        & \quad + P(\epsilon_1 \notin \hat{\mathcal{G}}_1 \text{ or } \epsilon_2 \notin \hat{\mathcal{G}}_2) \cdot 0 \\
        & = r \cdot \rho_1 \cdot \rho_2 \cdot \mathbb{E}_{\hat{\mathcal{G}}_1, \hat{\mathcal{G}}_2}\left[\frac{1}{W_{\hat{\mathcal{G}}_1}} \mid \epsilon_1 \in \hat{\mathcal{G}}_1, \epsilon_2 \in \hat{\mathcal{G}}_2\right].
    \end{split}
\end{equation*}

$\overline{\mathcal{G}}_1$ and $\overline{\mathcal{G}}_2$ are sampled from $\mathcal{G}_1$ and $\mathcal{G}_2$, respectively, and are specific realizations of the sampling.
Assuming that $\epsilon_1 \in \overline{\mathcal{G}}_1$ and $\epsilon_2 \in \overline{\mathcal{G}}_2$, we define the estimator:
$$\widehat{P}(\epsilon_r | \mathcal{G}_1, \mathcal{G}_2) = r \cdot \rho_1 \cdot \rho_2 \cdot \frac{1}{W_{\overline{\mathcal{G}}_1}}.$$

To show this is an unbiased estimator, we compute its expectation:
\begin{equation*}
    \begin{split}
        \mathbb{E}_{\hat{\mathcal{G}}_1, \hat{\mathcal{G}}_2} & \left[\widehat{P}(\epsilon_r | \mathcal{G}_1, \mathcal{G}_2) \mid \epsilon_1 \in \hat{\mathcal{G}}_1, \epsilon_2 \in \hat{\mathcal{G}}_2\right] \\
        & = \mathbb{E}_{\hat{\mathcal{G}}_1, \hat{\mathcal{G}}_2}\left[r \cdot \rho_1 \cdot \rho_2 \cdot \frac{1}{W_{\overline{\mathcal{G}}_1}} \mid \epsilon_1 \in \hat{\mathcal{G}}_1, \epsilon_2 \in \hat{\mathcal{G}}_2\right] \\
        & = r \cdot \rho_1 \cdot \rho_2 \cdot \mathbb{E}_{\hat{\mathcal{G}}_1, \hat{\mathcal{G}}_2}\left[\frac{1}{W_{\overline{\mathcal{G}}_1}} \mid \epsilon_1 \in \hat{\mathcal{G}}_1, \epsilon_2 \in \hat{\mathcal{G}}_2\right] \\
        & = r \cdot \rho_1 \cdot \rho_2 \cdot \mathbb{E}_{\hat{\mathcal{G}}_1, \hat{\mathcal{G}}_2}\left[\frac{1}{W_{\hat{\mathcal{G}}_1}} \mid \epsilon_1 \in \hat{\mathcal{G}}_1, \epsilon_2 \in \hat{\mathcal{G}}_2\right] \\
        & = P(\epsilon_r | \mathcal{G}_1, \mathcal{G}_2).
    \end{split}
\end{equation*}

Thus, $\overline{\mathcal{G}}_r$ is approximately a uniform sample of $\mathcal{G}^t_r$ with sampling rate $\frac{r \cdot \rho_1 \cdot \rho_2}{W_{\overline{\mathcal{G}}_1}}$.
\end{proof}

\stitle{End-to-End Uniform Sampling Guarantee.} Based on \\ \reflem{source-sample-uniform} and \reflem{cascaded-join-sample}, we now establish that the entire workflow produces uniformly sampled results.

\begin{theorem}
\label{thm:end-to-end-uniform}
The hyperedges in a hypergraph obtained through any combination of \sourcesample and \joinsample operators are uniformly sampled from the true results.
\end{theorem}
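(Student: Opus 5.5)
The natural route is structural induction on the expression tree (the Hypergraph Tree of \refsec{overview}) that produces the hypergraph. Every hypergraph obtained from \sourcesample and \joinsample is either a leaf $\sourcesample(G,q)$ or an internal node $\overline{\mathcal{G}}_1 \;\joinsample_{\sigma}\; \overline{\mathcal{G}}_2$ whose children are themselves such expressions. I would prove a slightly stronger invariant that can be carried through the induction:
\begin{quote}
every such hypergraph $\overline{\mathcal{G}}$ is a uniform sample of the corresponding exact result $\mathcal{G}^t$, with a sampling rate $\rho$ that is computable from quantities observable during execution.
\end{quote}
Here $\mathcal{G}^t$ is obtained by replacing every \sourcesample with exact \sourceop and every \joinsample with \joinnosample.

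\textbf{Base case.} A leaf $\sourcesample(G,q)$ is handled directly by \reflem{source-sample-uniform}. For the rate, I would take $\rho = s/|\mathcal{E}_{\mathcal{G}^t}|$, where $s$ is the number of tree samples and $|\mathcal{E}_{\mathcal{G}^t}|$ is replaced by FaSTest's unbiased estimate of the match count. I would note this substitution explicitly as the source of the ``approximately'' that already appears in \reflem{cascaded-join-sample}.

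\textbf{Inductive step.} For a node $\overline{\mathcal{G}}_1 \;\joinsample_{\sigma}\; \overline{\mathcal{G}}_2$, the induction hypothesis says each child is a uniform sample of its exact counterpart $\mathcal{G}^t_i$ with rate $\rho_i$. This is exactly the hypothesis of \reflem{cascaded-join-sample}, applied with $\mathcal{G}_i := \mathcal{G}^t_i$. That lemma gives that the output is (approximately) a uniform sample of $\mathcal{G}^t_1 \;\joinnosample_{\sigma}\; \mathcal{G}^t_2$ with rate $r\rho_1\rho_2/W_{\overline{\mathcal{G}}_1}$. It remains to check that this join equals the exact result $\mathcal{G}^t$ of the parent expression, which holds by definition of $\mathcal{G}^t$. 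The new rate is again observable, because $W_{\overline{\mathcal{G}}_1}$ is computed on the sample. So the invariant propagates, and since every expression is a finite tree, the theorem follows.

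\textbf{Main obstacle.} The hard step is justifying the use of \reflem{cascaded-join-sample} in the inductive step. That lemma is stated for inputs that are sampled independently, with per-element inclusion probabilities $\rho_1,\rho_2$. Two difficulties arise.
\begin{itemize}
\item Inclusion events need not be independent across elements. The output of \joinsample is a with-replacement, weight-proportional draw, so inclusions are correlated, and two subtrees may even share a common ancestor through reuse from the Hypergraph Tree.
\item Uniformity holds only in expectation, through the $1/W$ estimator.
\end{itemize}
My plan is to use only the first-order property that \reflem{cascaded-join-sample} actually needs, namely $P(\epsilon_1\in\hat{\mathcal{G}}_1,\epsilon_2\in\hat{\mathcal{G}}_2)=\rho_1\rho_2$ for every pair. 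I would argue this from marginal uniformity of each child, together with independence of the random choices made in disjoint subtrees. Where subtrees share randomness, I would require that each \joinsample invocation draw fresh randomness, or else state this as an assumption. Finally, I would remark that for \countfunc and \sumfunc only these first-order inclusion probabilities matter, so \reflem{uniform-sampling-unbiased} applies at the root and gives end-to-end unbiased estimates, up to the approximation already inherited from \reflem{cascaded-join-sample}.
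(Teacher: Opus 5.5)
Your induction is the paper's own proof: the leaf case uses \reflem{source-sample-uniform}, and the internal node uses \reflem{cascaded-join-sample} applied to the exact counterparts. The dependence problems you flag are genuine and the paper does not address them. However, your plan for the hard step fails at depth two, for the same reason the paper's inductive step does. The output of \joinsample is uniform only \emph{conditionally} on its realized inputs, and its rate $r\rho_1\rho_2/W_{\overline{\mathcal{G}}_1}$ is itself random. So once a child is a join, ``marginal uniformity of each child'' is false, and there is no constant $\rho_1\rho_2$ for $P(\epsilon_1\in\hat{\mathcal{G}}_1,\epsilon_2\in\hat{\mathcal{G}}_2)$ to equal. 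For a concrete case, take $\mathcal{E}_{\mathcal{G}_1}=\{a_1,a_2\}$ sampled by one uniform draw ($\rho_1=\tfrac12$), keep $\mathcal{E}_{\mathcal{G}_2}=\{b_1,b_2,b_3\}$ whole, let $a_1$ join only $b_1$ and $a_2$ join $b_2$ and $b_3$, and set $r=1$. The output is $(a_1,b_1)$ with probability $\tfrac12$ and each $(a_2,b_j)$ with probability $\tfrac14$. Feeding such an output into the next join as if it were a fixed-rate uniform sample is exactly the unjustified step.

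The missing idea is to carry a Horvitz--Thompson invariant instead of uniform inclusion: for every true result $\epsilon$, $\mathbb{E}[N(\epsilon)/\hat\rho]=1$, where $N(\epsilon)$ is its multiplicity in the sample and $\hat\rho$ is the realized rate. At a leaf this holds exactly. With a fixed number $s_{\mathrm{tot}}$ of tree draws over $T$ tree embeddings, the rate is $s_{\mathrm{tot}}/T$, which equals the number of \emph{accepted} samples divided by FaSTest's count estimate. So the leaves are not where any approximation enters. At a join, condition on the realized children, and let $N_1,N_2$ be multiplicities in them. \reflem{join-sample-uniform} applied to these fixed inputs gives $\mathbb{E}[N(\epsilon_r)\mid\overline{\mathcal{G}}_1,\overline{\mathcal{G}}_2]=rN_1(\epsilon_1)N_2(\epsilon_2)/W_{\overline{\mathcal{G}}_1}$. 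With $\hat\rho_r=r\hat\rho_1\hat\rho_2/W_{\overline{\mathcal{G}}_1}$, the conditional expectation of $N(\epsilon_r)/\hat\rho_r$ is therefore $\bigl(N_1(\epsilon_1)/\hat\rho_1\bigr)\bigl(N_2(\epsilon_2)/\hat\rho_2\bigr)$. The tower property with independent subtrees then gives $1$; in the example, $1/\hat\rho_r$ is $2$ or $4$ with equal probability, mean $3$. This yields \emph{exactly} unbiased \countfunc and \sumfunc, i.e.\ Theorem~\ref{thm:unbiased-estimation}, with no ``approximately''. What it does not yield is the literal uniformity the statement claims, which already fails after one join. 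Finally, the independence you need is between the two input subtrees, not fresh randomness at the join step. If both inputs descend from one sampled hypergraph (e.g.\ a self-join on an identity condition), the factor becomes $\mathbb{E}[N(\epsilon)^2]/\rho^2\neq 1$. A reused hypergraph must therefore be re-sampled, or the bias accepted.
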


\begin{proof}
We prove this by induction on the workflow structure. If the workflow consists of only \sourcesample, the result is uniformly sampled according to Lemma~\ref{lemma:source-sample-uniform}.

Now consider a workflow that combines \sourcesample and \joinsample operators, with the final step being a \joinsample operation. The two input hypergraphs to this \joinsample are themselves results from either \sourcesample or \joinsample operations. By the induction hypothesis and \reflem{source-sample-uniform} and \reflem{cascaded-join-sample}, these input hypergraphs contain uniformly sampled hyperedges. According to \reflem{cascaded-join-sample}, the final output is also uniformly sampled.
\end{proof}

\stitle{Unbiased Estimation for Aggregate Functions.} Finally, we establish the main result for unbiased estimation.

\begin{theorem}
\label{thm:unbiased-estimation}
Let $\mathcal{G}$ be a hypergraph obtained through any combination of \sourcesample and \joinsample operators. Then, unbiased estimators can be computed for \countfunc and \sumfunc based on $\mathcal{G}$.
\end{theorem}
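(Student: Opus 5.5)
The plan is to combine Theorem~\ref{thm:end-to-end-uniform} with Lemma~\ref{lemma:uniform-sampling-unbiased}, passing through the \viewop operator. Let $\mathcal{G}$ be produced by an arbitrary workflow of \sourcesample and \joinsample operators, and let $\mathcal{G}^t$ be the hypergraph the same workflow would produce with the exact operators \sourcenosample and \joinnosample.

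First, I will invoke Theorem~\ref{thm:end-to-end-uniform}: every hyperedge of $\mathcal{G}^t$ is included in $\mathcal{G}$ with a common probability $\rho$. I will then make $\rho$ explicit by recursion over the workflow tree.
\begin{itemize}
\item At a \sourcesample leaf, $\rho$ is the tree-sampling inclusion rate from Lemma~\ref{lemma:source-sample-uniform}.
\item At a \joinsample node with inputs of rates $\rho_1,\rho_2$, Lemma~\ref{lemma:cascaded-join-sample} gives the rate $r\rho_1\rho_2/W_{\overline{\mathcal{G}}_1}$. Here $W_{\overline{\mathcal{G}}_1}$ is computable from the observed sampled inputs, because \joinsample already computes the weights $m_{\overline{\mathcal{G}}_2,\sigma}$.
\end{itemize}
So the rate $\rho$ attached to $\mathcal{G}$ is a quantity the system can actually evaluate.

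Second, I will apply \viewop. By Def.~\ref{def:view-operator}, each hyperedge maps to exactly one row, and the row's vertex, edge and hyperedge column values depend only on that hyperedge and its matching. Hence $\view(\mathcal{G})$ is a uniform sample of $R=\view(\mathcal{G}^t)$ with the same inclusion rate $\rho$. Applying Lemma~\ref{lemma:uniform-sampling-unbiased}, $\sum_{r\in\hat R}1/\rho$ estimates \countfunc without bias. For \sumfunc on a column $X$, $\sum_{r\in\hat R}X_r/\rho$ is unbiased, because linearity of expectation gives $\mathbb{E}[\sum_{r\in R}\mathbf{1}_{r\in\hat R}X_r/\rho]=\sum_{r\in R}X_r$. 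This argument uses only marginal inclusion probabilities, not independence between rows. Group-by aggregates follow by applying the same argument to each group's indicator-weighted sum.

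The main obstacle is that Lemma~\ref{lemma:cascaded-join-sample} is only \emph{approximately} uniform. The inclusion rate contains $1/W_{\hat{\mathcal{G}}_1}$, which is random, and the paper replaces its conditional expectation by the observed $1/W_{\overline{\mathcal{G}}_1}$. Plugging this random rate into the Horvitz--Thompson form does not automatically preserve unbiasedness. My first attempt will be to avoid the plug-in by conditioning on the realized inputs. Given $\overline{\mathcal{G}}_1,\overline{\mathcal{G}}_2$, Lemma~\ref{lemma:join-sample-uniform} makes the join sample exactly uniform over $\overline{\mathcal{G}}_1\joinnosample_\sigma\overline{\mathcal{G}}_2$ with known rate $r/W_{\overline{\mathcal{G}}_1}$. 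So $\frac{W_{\overline{\mathcal{G}}_1}}{r}\sum X$ is conditionally unbiased for $\sum_{(\epsilon_1,\epsilon_2)\in\overline{\mathcal{G}}_1\times\overline{\mathcal{G}}_2\text{ joinable}}X$. By induction, each input sum weighted by $1/\rho_i$ is unbiased for the corresponding true join over the inputs. Taking the outer expectation by the tower property, with inclusion of $\epsilon_1$ and $\epsilon_2$ independent across the two inputs, gives an estimator scaled by $W_{\overline{\mathcal{G}}_1}/(r\rho_1\rho_2)$ that is exactly unbiased. This is the same formula as above. If the two inputs share an ancestor and so are not independent, I would state this independence as an explicit assumption.
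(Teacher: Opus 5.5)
Your proposal is correct, but its decisive step takes a different route from the paper. The paper's proof is just Theorem~\ref{thm:end-to-end-uniform} combined with Lemma~\ref{lemma:uniform-sampling-unbiased}. It treats every workflow output as a uniform sample with the plug-in rate $r\rho_1\rho_2/W_{\overline{\mathcal{G}}_1}$ from Lemma~\ref{lemma:cascaded-join-sample}, and inserts that rate into the Horvitz--Thompson form. As you point out, this does not literally give unbiasedness. Lemma~\ref{lemma:cascaded-join-sample} only shows that the plug-in is an unbiased estimate of the inclusion probability $r\rho_1\rho_2\,\mathbb{E}[1/W_{\hat{\mathcal{G}}_1}\mid\epsilon_1\in\hat{\mathcal{G}}_1,\epsilon_2\in\hat{\mathcal{G}}_2]$, which even varies with the pair (hence ``approximately''). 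Horvitz--Thompson, however, needs the reciprocal of that probability.

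Your conditioning argument avoids the uniformity claim altogether. Given the realized inputs, the $r$ draws of \joinsample are i.i.d.\ uniform over the $W_{\overline{\mathcal{G}}_1}$ joinable pairs (Lemma~\ref{lemma:join-sample-uniform}, read as expected multiplicity). So $W_{\overline{\mathcal{G}}_1}$ cancels in the conditional expectation, and the tower property finishes the argument. This shows that the estimator the system actually computes is \emph{exactly} unbiased, which is more than the paper's route establishes.

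What your route does not provide is the end-to-end uniformity the paper also uses to justify GEE and t-digest, so it covers only linear aggregates (\countfunc, \sumfunc and their grouped forms). It also makes explicit the independence of the two join inputs. The paper uses this silently when it replaces $P(\epsilon_1\in\hat{\mathcal{G}}_1,\epsilon_2\in\hat{\mathcal{G}}_2)$ by $\rho_1\rho_2$, and it can fail when a single sampled hypergraph from the Hypergraph Tree feeds both sides.

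One point to tighten. Once an input is itself a \joinsample output, its $\rho_i$ is a random statistic rather than an inclusion probability, so ``independent inclusion'' is not the right invariant beyond the base case. Instead, take as induction hypothesis that $\mathbb{E}[\rho_i^{-1}\sum_{\epsilon\in\hat{\mathcal{G}}_i}X(\epsilon)]=\sum_{\epsilon\in\mathcal{G}_i}X(\epsilon)$ for \emph{every} function $X$ on hyperedges, where $\rho_i$ depends only on the $i$-th input's own randomness. Then:
\begin{enumerate}
\item condition on $\hat{\mathcal{G}}_1$;
\item apply the hypothesis for input~2 to $\epsilon_2\mapsto X(\epsilon_1,\epsilon_2)$, taken as zero on non-joinable pairs;
\item apply the hypothesis for input~1 to $\epsilon_1\mapsto\sum_{\epsilon_2\in\mathcal{G}_2}X(\epsilon_1,\epsilon_2)$.
\end{enumerate}
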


\begin{proof}
This follows directly from Theorem~\ref{thm:end-to-end-uniform} and Lemma~\ref{lemma:uniform-sampling-unbiased}.
\end{proof}

Moreover, according to Theorem~\ref{thm:end-to-end-uniform}, since the results obtained through \sourcesample and \joinsample are uniformly sampled, the approximate estimation algorithms such as the GEE estimator and t-digest can also provide more accurate estimates.

\begin{table}[htbp]
    \centering
    \caption{Dataset statistics. $|\text{Vertex}|$ and $|\text{Edge}|$ denote the number of vertices and edges, respectively.}
    \label{tab:datasets}
    \renewcommand{\arraystretch}{1.2}
    \begin{tabular}{lrr}
    \toprule
    \textbf{Dataset} & \textbf{$|\text{Vertex}|$} & \textbf{$|\text{Edge}|$} \\
    \midrule
    SF0.1  & 327,588    & 1,477,965  \\ SF3    & 9,281,922  & 52,695,735  \\
    SF1    & 3,181,724  & 17,256,038 \\ SF10   & 29,987,835 & 176,623,445 \\
    
    \bottomrule
    \end{tabular}
\end{table}

\section{Experimental Evaluation}
\label{sec:experiment}

\begin{figure*}[htbp]
    \centering
    \includegraphics[width=\textwidth]{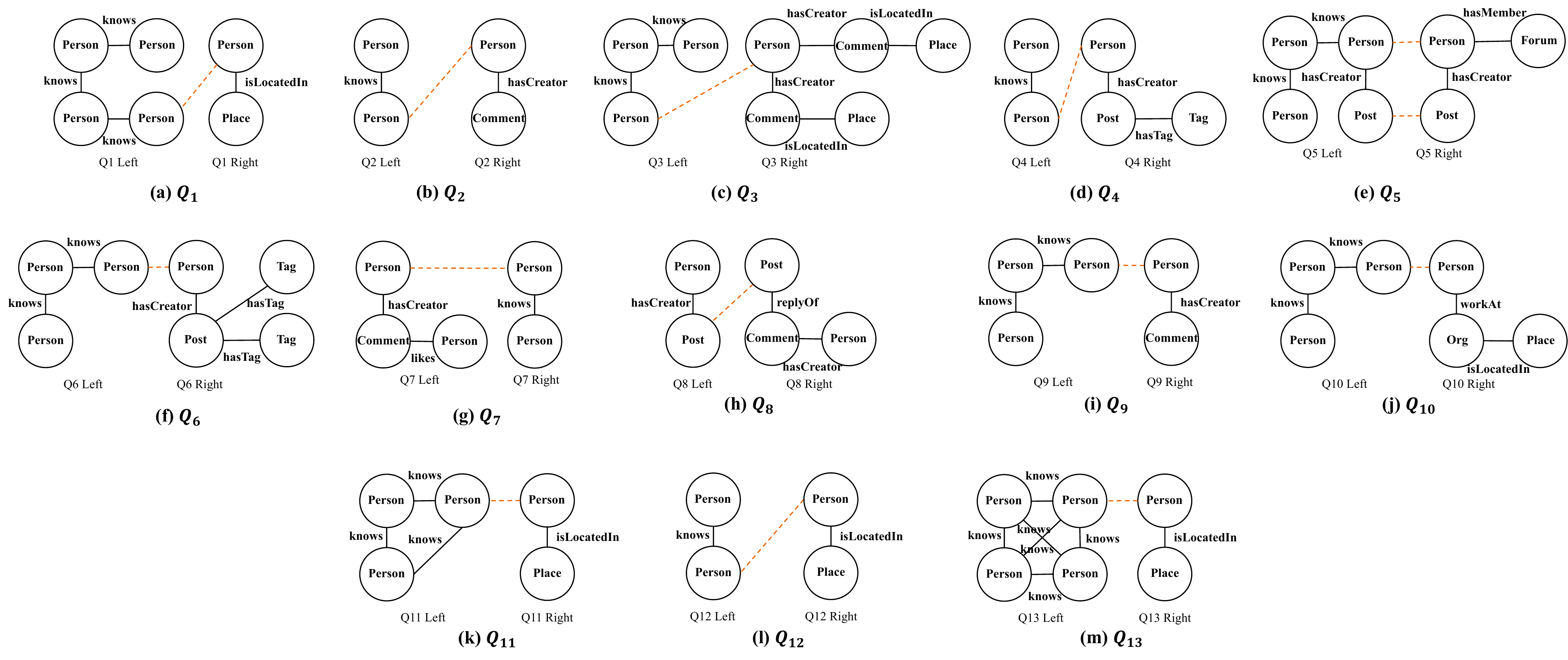}
    \caption{\explorativebi queries used in the experiments}
    \label{fig:exp:queries}
\end{figure*}

In this section, we first describe the experimental settings (\refsec{exp:settings}), including datasets, queries, and compared methods.
We then evaluate the performance of \sys through extensive experiments \refsec{exp:sample_size}--\refsec{exp:multi_join}.

\begin{figure*}[htbp]
    \centering
    \begin{subfigure}[b]{0.8\textwidth}
        \includegraphics[width=\textwidth]{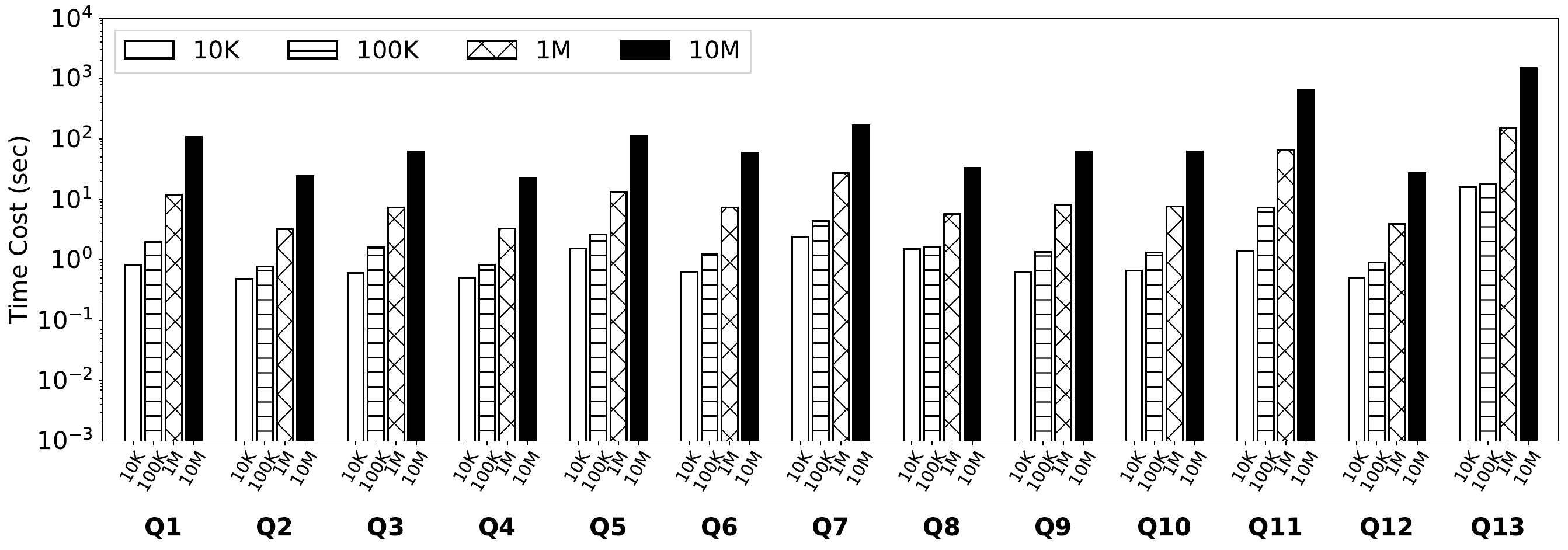}
        \caption{\sourcesample operator performance}
        \label{fig:exp:cube1}
    \end{subfigure}

    \begin{subfigure}[b]{0.8\textwidth}
        \includegraphics[width=\textwidth]{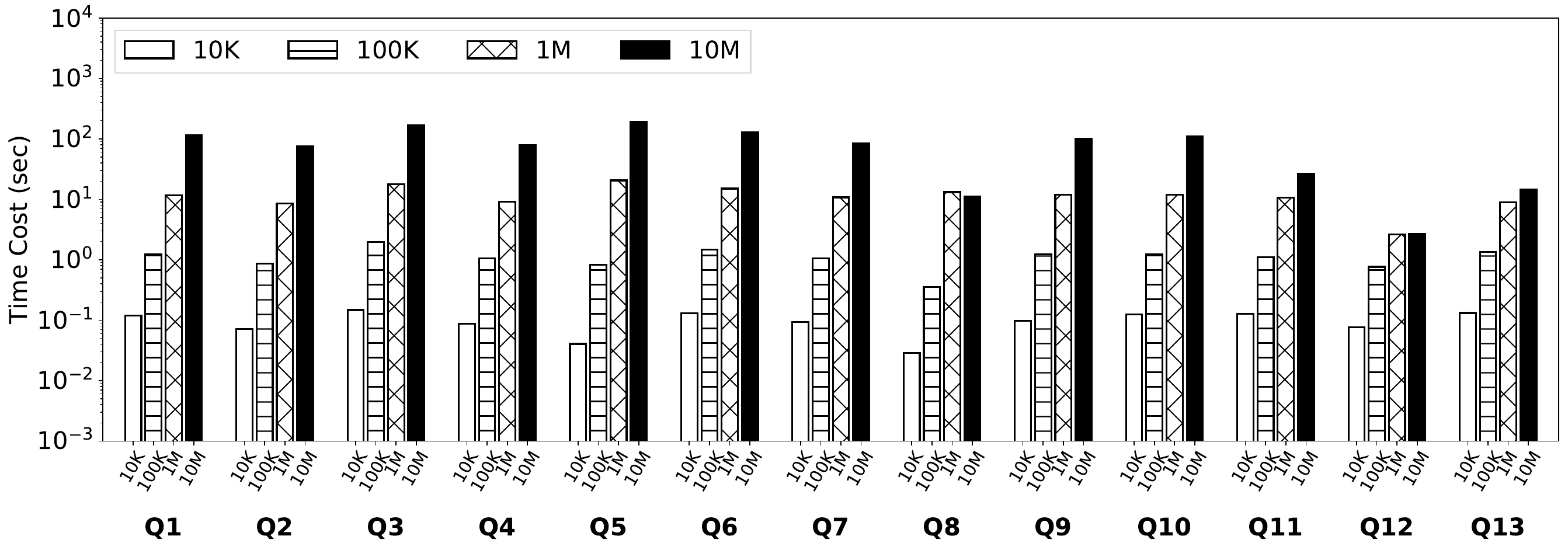}
        \caption{\joinsample operator performance}
        \label{fig:exp:join}
    \end{subfigure}

    \begin{subfigure}[b]{0.8\textwidth}
        \includegraphics[width=\textwidth]{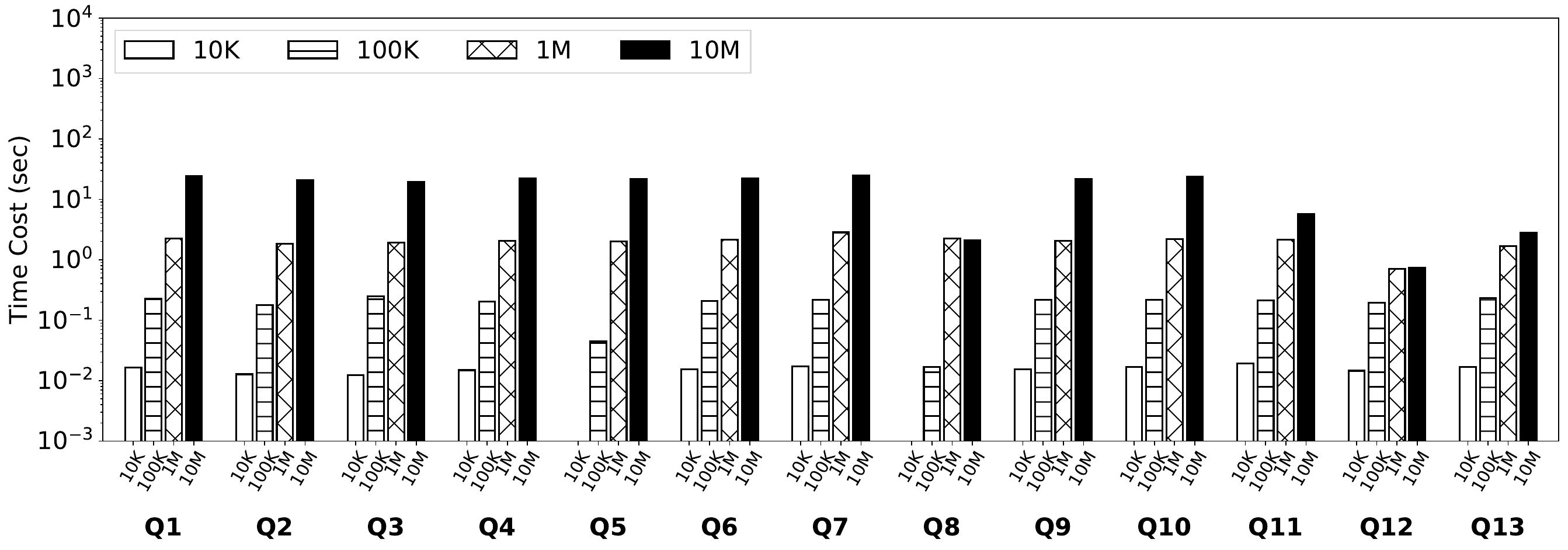}
        \caption{Analytical performance}
        \label{fig:exp:view}
    \end{subfigure}
    \caption{\centering Performance evaluation with varying sample size}
    \label{fig:exp:scale_performance}
\end{figure*}

\subsection{Experimental settings}
\label{sec:exp:settings}

\noindent
{\em \underline{Dataset.}} 
We use the LDBC Social Network Benchmarking (SNB)~\cite{ldbc_benchmarks} dataset for our experiments, as LDBC is widely adopted for graph algorithms benchmarking~\cite{lai2019distributed,lou2024towards} and provides large-scale property graphs suitable for our evaluation. 
\revisesigmod{The dataset consists of varying vertices (e.g., \texttt{Person}, \texttt{Post}, \texttt{Comment}, \texttt{Place}) and edges (e.g., \texttt{knows}, \texttt{hasCreator}, \texttt{isLocatedIn}). Among them, the \texttt{knows} table (representing friendship relations among persons) and the \texttt{hasCreator} table (representing authorship of posts/comments) are two of the largest relations.}
We evaluate our approach on four scale factors: SF0.1, SF1, SF3, and SF10.
The details of the datasets are shown in \reftable{datasets}, and the largest dataset SF10 contains approximately 30 million vertices and 177 million edges.
\revisesigmod{We do not extend our experiments to larger-scale datasets because non-sampling approaches (e.g., \sourcenosample and \joinnosample) already fail in many cases to complete queries at SF10. Consequently, obtaining the exact ground truth required to evaluate the sampling error can be infeasible at larger datasets.}

\noindent
{\em \underline{Exploratory BI queries.}} 
Since there are no publicly available exploratory BI query benchmarks, we design 13 queries (Q1–Q13) based on IC queries~\cite{ldbc_benchmarks} on the LDBC datasets to evaluate our approach. 
The queries are shown in \reffig{exp:queries}. 
Each query consists of two query graphs connected through join conditions, which are depicted as yellow dashed lines. 
For instance, in Q1 (\reffig{exp:queries}(a)), the \texttt{Person} vertex in the left query graph (Q1 Left) is joined with the \texttt{Person} vertex in the right query graph (Q1 Right) on \texttt{Person.ID}, as indicated by the yellow dashed line.
Note that for each query, we perform \sourceop on both the left and right query graphs separately. Since the two source operations are similar in nature, we report the results for the left query graph by default in our experiments.
Furthermore, we evaluate commonly used aggregate functions including \distinctcountfunc, \countfunc, and \maxfunc for each query. 
\revisesigmod{We limit most experiments to joins between two query graphs due to the computational cost of exact approaches. In \refsec{exp:multi_join}, we further evaluate the performance of multiple joins.}


\noindent
{\em \underline{Compared methods.}} 
We compare the following methods in our experiments:
(1) \textbf{\sys}: Our full system with sampling-based operators (\sourcesample and \joinsample) for efficient approximate query processing.
(2) \textbf{\sysnosample}: A variant of \sys that replaces \sourcesample and \joinsample with their exact counterparts \sourcenosample and \joinnosample, serving as an ablation baseline.
(3) \textbf{MySQL \cite{mysql8}}: A widely-used traditional relational database system.
(4) \textbf{Neo4j \cite{neo4j}}: A popular graph database system.
Moreover, we use a sampling-based method as a baseline, i.e., \textbf{VerdictDB \cite{verdictdb}}, which is an approximate relational query processing middleware that provides fast approximate answers with error bounds. We use VerdictDB as the representative of pre-sampling AQP systems (including BlinkDB~\cite{blinkdb}), since both rely on pre-computed sampling views and share similar methodological foundations.
For MySQL and Neo4j, we adopt their default configurations and perform system warm-up. For VerdictDB, we configure MySQL as its backend database.

\revisesigmod{For the comparative evaluation on MySQL, Neo4j, and VerdictDB, we execute standalone queries (manually composed corresponding to those in \reffig{exp:queries}) directly on these baseline systems, rather than deploying the full \explorativebi framework on top of them. While a full integration is theoretically feasible (\refsec{overview:storage}), we opt for standalone execution to highlight the limitations of these engines. First, MySQL and Neo4j lack native sampling support. As demonstrated in our experiments (\refsec{exp:e2e}), executing exact subgraph matching without sampling incurs prohibitive latency, failing to meet the interactive requirements of exploratory analysis. Second, although VerdictDB supports sampling, its capabilities are limited for our scenarios. We observe poor performance on complex queries (e.g. Q13) and limitations in handling the ``non-existence'' semantics essential for induced subgraph queries.}

\noindent
{\em \underline{Error rate calculation.}} 
To measure the accuracy of our sampling-based approaches, we calculate the error rate as: $\text{Error Rate} = \\ \frac{|\text{True Value} - \text{Estimated Value}|}{\text{True Value}} \times 100\%$, where the true value is obtained from non-sampling methods (e.g., \sysnosample and Neo4j) when they successfully complete, and the estimated value is obtained from sampling-based methods (\sys and VerdictDB).

\noindent
{\em \underline{Implementation and environment.}} 
\sys and \sysnosample are implemented in C++. For the baseline systems, we use MySQL 8.0, Neo4j 2025.11.2, and pyverdict 0.1.3.2, a Python client for VerdictDB.
All experiments are executed on a Linux server with an Intel Platinum 8269CY processor and 512 GB of main memory. 
{\em We set a timeout of 2 hours for each query execution. The abbreviations \texttt{OOT} and \texttt{OOM} denote out-of-time and out-of-memory errors, respectively.}

\subsection {Effect of sample size}
\label{sec:exp:sample_size}

To balance efficiency and accuracy, we evaluate the impact of sample size on execution time and error rates using four sizes (10K, 100K, 1M, 10M) on SF1. \reffig{exp:scale_performance} shows the running times of \sourceop, \joinop, and the complete analysis process (including \viewop and aggregation). Tables~\ref{tab:exp:source_est}--\ref{tab:exp:source_count} report error rates for different operators and aggregation functions. 
Note that \sysnosample cannot complete queries $Q_1$, $Q_3$, $Q_5$, $Q_6$, and $Q_9$ due to OOM errors; however, Neo4j successfully completes some of these queries (specifically $Q_6$ and $Q_9$), so we combine the true values from both methods where available.
Additionally, \reftable{exp:source_est} contains true values for $Q_3$ while Tables~\ref{tab:exp:source_distinct_count}--\ref{tab:exp:source_count} do not. This is because
\revisesigmod{This is because \sourcenosample completed successfully on Q3, allowing us to obtain the true value for source estimation, while \joinnosample ran out of memory (OOM), preventing us from obtaining the true value for the aggregate functions.}

As shown in \reffig{exp:scale_performance}, execution time increases almost linearly with sample size, demonstrating the predictable scalability of the saampling approaches.
For example, in query $Q_5$, increasing the sample size from 1M to 10M causes the execution time to increase by approximately 9--10$\times$ for all 
operators (\sourcesample: 13.19s to 397.79s, \joinsample: 20.55s to 191.39s).

For accuracy, \reftable{exp:source_est} shows that \sourcesample~maintains error rates below 2\% across all queries and sample sizes (e.g., $Q_{10}$ achieves 0.00\% error at 1M samples, $Q_{13}$ achieves 0.15\% error at 1M samples). 
For aggregation functions, \reftable{exp:source_count} shows that \countfunc maintains low error rates (mostly below 1\%), and \reftable{exp:source_max} shows that \maxfunc achieves near-zero error rates (0.00\%--0.01\%) across all queries. 
For \distinctcountfunc (\reftable{exp:source_distinct_count}), error rates are generally higher but improve significantly with larger samples (e.g., $Q_4$ error decreases from 124.51\% at 100K to 2.94\% at 1M, and $Q_7$ error decreases from 106.44\% at 100K to 2.37\% at 1M).
It is worth noting that the estimation error for \distinctcountfunc is considerably higher than that for \countfunc. This is because, to the best of our knowledge, there is no unbiased estimator for \distinctcountfunc in the literature. Our use of the biased GEE estimator~\cite{charikar2000towards} thus incurs relatively larger errors.

Based on this analysis, we select 1M as the default sample size, which provides an optimal balance between execution time and accuracy—increasing to 10M yields only marginal accuracy improvements while execution time increases by nearly 10$\times$. 

\begin{table*}[htbp]
    \centering
    \small
    \setlength{\tabcolsep}{3pt}
    \caption{\sourcesample estimation under different sample sizes}
    \label{tab:exp:source_est}
    \scalebox{0.7}{
    \begin{tabular}{l l *{8}{l}}
    \toprule
    \multirow{2}{*}{\textbf{Query}} & 
    \multirow{2}{*}{\textbf{True value}} & 
    \multicolumn{8}{c}{\textbf{Sample size}} \\
    \cmidrule(lr){3-10}
    & & \multicolumn{2}{c}{\textbf{10K}} & \multicolumn{2}{c}{\textbf{100K}} & \multicolumn{2}{c}{\textbf{1M}} & \multicolumn{2}{c}{\textbf{10M}} \\
    \cmidrule(lr){3-4} \cmidrule(lr){5-6} \cmidrule(lr){7-8} \cmidrule(lr){9-10}
    & & \textbf{estimate value} & \textbf{error rate} & \textbf{estimate value} & \textbf{error rate} & \textbf{estimate value} & \textbf{error rate} & \textbf{estimate value} & \textbf{error rate} \\
    \midrule
    Q1 & -- & $3.56{\scriptstyle \pm 0.01} \times 10^{9}$ & -- & $3.57{\scriptstyle \pm 0.00} \times 10^{9}$ & -- & $3.57{\scriptstyle \pm 0.00} \times 10^{9}$ & -- & $3.57{\scriptstyle \pm 0.00} \times 10^{9}$ & -- \\
    Q2 & $3.61 \times 10^{5}$ & $3.61{\scriptstyle \pm 0.00} \times 10^{5}$ & 0.00\% & $3.61{\scriptstyle \pm 0.00} \times 10^{5}$ & 0.00\% & $3.61{\scriptstyle \pm 0.00} \times 10^{5}$ & 0.00\% & $3.61{\scriptstyle \pm 0.00} \times 10^{5}$ & 0.00\% \\
    Q3 & $4.85 \times 10^{7}$ & $4.85{\scriptstyle \pm 0.01} \times 10^{7}$ & 0.08\% & $4.85{\scriptstyle \pm 0.00} \times 10^{7}$ & 0.04\% & $4.85{\scriptstyle \pm 0.00} \times 10^{7}$ & 0.00\% & $4.85{\scriptstyle \pm 0.00} \times 10^{7}$ & 0.00\% \\
    Q4 & $3.61 \times 10^{5}$ & $3.61{\scriptstyle \pm 0.00} \times 10^{5}$ & 0.00\% & $3.61{\scriptstyle \pm 0.00} \times 10^{5}$ & 0.00\% & $3.61{\scriptstyle \pm 0.00} \times 10^{5}$ & 0.00\% & $3.61{\scriptstyle \pm 0.00} \times 10^{5}$ & 0.00\% \\
    Q5 & -- & $6.23{\scriptstyle \pm 0.02} \times 10^{9}$ & -- & $6.23{\scriptstyle \pm 0.01} \times 10^{9}$ & -- & $6.23{\scriptstyle \pm 0.00} \times 10^{9}$ & -- & $6.22{\scriptstyle \pm 0.00} \times 10^{9}$ & -- \\
    Q6 & $4.85 \times 10^{7}$ & $4.85{\scriptstyle \pm 0.01} \times 10^{7}$ & 0.11\% & $4.85{\scriptstyle \pm 0.00} \times 10^{7}$ & 0.02\% & $4.85{\scriptstyle \pm 0.00} \times 10^{7}$ & 0.00\% & $4.85{\scriptstyle \pm 0.00} \times 10^{7}$ & 0.01\% \\
    Q7 & $3.32 \times 10^{5}$ & $3.33{\scriptstyle \pm 0.03} \times 10^{5}$ & 0.23\% & $3.33{\scriptstyle \pm 0.02} \times 10^{5}$ & 0.06\% & $3.32{\scriptstyle \pm 0.00} \times 10^{5}$ & 0.04\% & $3.32{\scriptstyle \pm 0.00} \times 10^{5}$ & 0.02\% \\
    Q8 & $1.00 \times 10^{6}$ & $1.00{\scriptstyle \pm 0.00} \times 10^{6}$ & 0.01\% & $1.00{\scriptstyle \pm 0.00} \times 10^{6}$ & 0.00\% & $1.00{\scriptstyle \pm 0.00} \times 10^{6}$ & 0.00\% & $1.00{\scriptstyle \pm 0.00} \times 10^{6}$ & 0.00\% \\
    Q9 & $4.85 \times 10^{7}$ & $4.85{\scriptstyle \pm 0.01} \times 10^{7}$ & 0.04\% & $4.84{\scriptstyle \pm 0.00} \times 10^{7}$ & 0.08\% & $4.85{\scriptstyle \pm 0.00} \times 10^{7}$ & 0.01\% & $4.85{\scriptstyle \pm 0.00} \times 10^{7}$ & 0.00\% \\
    Q10 & $4.85 \times 10^{7}$ & $4.85{\scriptstyle \pm 0.01} \times 10^{7}$ & 0.04\% & $4.85{\scriptstyle \pm 0.00} \times 10^{7}$ & 0.02\% & $4.85{\scriptstyle \pm 0.00} \times 10^{7}$ & 0.00\% & $4.85{\scriptstyle \pm 0.00} \times 10^{7}$ & 0.00\% \\
    Q11 & $2.33 \times 10^{6}$ & $2.32{\scriptstyle \pm 0.03} \times 10^{6}$ & 0.28\% & $2.33{\scriptstyle \pm 0.01} \times 10^{6}$ & 0.02\% & $2.32{\scriptstyle \pm 0.00} \times 10^{6}$ & 0.05\% & $2.33{\scriptstyle \pm 0.00} \times 10^{6}$ & 0.01\% \\
    Q12 & $3.61 \times 10^{5}$ & $3.61{\scriptstyle \pm 0.00} \times 10^{5}$ & 0.00\% & $3.61{\scriptstyle \pm 0.00} \times 10^{5}$ & 0.00\% & $3.61{\scriptstyle \pm 0.00} \times 10^{5}$ & 0.00\% & $3.61{\scriptstyle \pm 0.00} \times 10^{5}$ & 0.00\% \\
    Q13 & $4.15 \times 10^{6}$ & $4.09{\scriptstyle \pm 0.44} \times 10^{6}$ & 1.51\% & $4.15{\scriptstyle \pm 0.09} \times 10^{6}$ & 0.11\% & $4.15{\scriptstyle \pm 0.03} \times 10^{6}$ & 0.15\% & $4.15{\scriptstyle \pm 0.01} \times 10^{6}$ & 0.04\% \\
    \bottomrule
    \end{tabular}
    }
\end{table*}

\begin{table*}[htbp]
    \centering
    \small
    \setlength{\tabcolsep}{3pt}
    \caption{\distinctcountfunc estimation results under different sample sizes}
    \label{tab:exp:source_distinct_count}
    \scalebox{0.7}{
    \begin{tabular}{l l *{8}{l}}
    \toprule
    \multirow{2}{*}{\textbf{Query}} & 
    \multirow{2}{*}{\textbf{True value}} & 
    \multicolumn{8}{c}{\textbf{Sample size}} \\
    \cmidrule(lr){3-10}
    & & \multicolumn{2}{c}{\textbf{10K}} & \multicolumn{2}{c}{\textbf{100K}} & \multicolumn{2}{c}{\textbf{1M}} & \multicolumn{2}{c}{\textbf{10M}} \\
    \cmidrule(lr){3-4} \cmidrule(lr){5-6} \cmidrule(lr){7-8} \cmidrule(lr){9-10}
    & & \textbf{estimate value} & \textbf{error rate} & \textbf{estimate value} & \textbf{error rate} & \textbf{estimate value} & \textbf{error rate} & \textbf{estimate value} & \textbf{error rate} \\
    \midrule
    Q1 & -- & $424547.00 \, (\pm 10797.10)$ & -- & $46357.60 \, (\pm 2566.43)$ & -- & $6279.80 \, (\pm 143.01)$ & -- & $5782.40 \, (\pm 11.61)$ & -- \\
    Q2 & 8902 & $101006.20 \, (\pm 3943.92)$ & 1034.65\% & $57268.60 \, (\pm 1681.72)$ & 543.32\% & $19226.40 \, (\pm 260.67)$ & 115.98\% & $9566.00 \, (\pm 85.09)$ & 7.46\% \\
    Q3 & -- & $6164258.20 \, (\pm 246404.01)$ & -- & $3986300.20 \, (\pm 232410.09)$ & -- & $1370220.60 \, (\pm 26064.86)$ & -- & $282901.60 \, (\pm 11122.68)$ & -- \\
    Q4 & 9144 & $65062.80 \, (\pm 3192.51)$ & 611.54\% & $20529.40 \, (\pm 480.24)$ & 124.51\% & $9413.20 \, (\pm 65.81)$ & 2.94\% & $8932.40 \, (\pm 19.17)$ & 2.31\% \\
    Q5 & -- & $5718803.80 \, (\pm 460339.78)$ & -- & $6790257.80 \, (\pm 604904.39)$ & -- & $1316865.00 \, (\pm 10578.52)$ & -- & $320473.20 \, (\pm 8470.13)$ & -- \\
    Q6 & 6074 & $860720.00 \, (\pm 21818.38)$ & 14070.56\% & $382346.40 \, (\pm 21508.26)$ & 6194.80\% & $50611.40 \, (\pm 2286.56)$ & 733.25\% & $8124.40 \, (\pm 315.27)$ & 33.76\% \\
    Q7 & 9186 & $54137.20 \, (\pm 1096.13)$ & 489.34\% & $18963.40 \, (\pm 476.83)$ & 106.44\% & $9403.60 \, (\pm 30.57)$ & 2.37\% & $9043.40 \, (\pm 7.13)$ & 1.55\% \\
    Q8 & 8231 & $3094.40 \, (\pm 495.94)$ & 62.41\% & $8703.20 \, (\pm 213.31)$ & 5.74\% & $7638.80 \, (\pm 23.06)$ & 7.19\% & $7649.00 \, (\pm 10.27)$ & 7.07\% \\
    Q9 & 8902 & $991158.80 \, (\pm 28850.72)$ & 11034.11\% & $518665.40 \, (\pm 4741.14)$ & 5726.39\% & $86820.40 \, (\pm 2503.80)$ & 875.29\% & $15159.40 \, (\pm 160.04)$ & 70.29\% \\
    Q10 & 7413 & $60139.00 \, (\pm 2952.96)$ & 711.26\% & $15076.60 \, (\pm 176.67)$ & 103.38\% & $6532.20 \, (\pm 39.78)$ & 11.88\% & $6427.40 \, (\pm 19.35)$ & 13.30\% \\
    Q11 & 8322 & $7948.80 \, (\pm 291.87)$ & 4.48\% & $5173.00 \, (\pm 107.42)$ & 37.84\% & $5045.00 \, (\pm 24.26)$ & 39.38\% & $5164.80 \, (\pm 27.87)$ & 37.94\% \\
    Q12 & 9163 & $5198.00 \, (\pm 154.90)$ & 43.27\% & $5258.00 \, (\pm 37.30)$ & 42.62\% & $5574.20 \, (\pm 29.93)$ & 39.17\% & $5569.80 \, (\pm 20.99)$ & 39.21\% \\
    Q13 & 6595 & $8344.60 \, (\pm 706.82)$ & 26.53\% & $3648.40 \, (\pm 82.60)$ & 44.68\% & $3767.40 \, (\pm 49.78)$ & 42.87\% & $4060.25 \, (\pm 32.87)$ & 38.43\% \\
    \bottomrule
    \end{tabular}
    }
    \end{table*}
    
    \begin{table*}[htbp]
    \centering
    \small
    \setlength{\tabcolsep}{3pt}
    \caption{\maxfunc estimation results under different sample sizes}
    \label{tab:exp:source_max}
    \scalebox{0.7}{
    \begin{tabular}{l l *{8}{l}}
    \toprule
    \multirow{2}{*}{\textbf{Query}} & 
    \multirow{2}{*}{\textbf{True value}} & 
    \multicolumn{8}{c}{\textbf{Sample size}} \\
    \cmidrule(lr){3-10}
    & & \multicolumn{2}{c}{\textbf{10K}} & \multicolumn{2}{c}{\textbf{100K}} & \multicolumn{2}{c}{\textbf{1M}} & \multicolumn{2}{c}{\textbf{10M}} \\
    \cmidrule(lr){3-4} \cmidrule(lr){5-6} \cmidrule(lr){7-8} \cmidrule(lr){9-10}
    & & \textbf{estimate value} & \textbf{error rate} & \textbf{estimate value} & \textbf{error rate} & \textbf{estimate value} & \textbf{error rate} & \textbf{estimate value} & \textbf{error rate} \\
    \midrule
    Q1 & -- & $2160391.27 \, (\pm 15.17)$ & -- & $2160399.50 \, (\pm 11.46)$ & -- & $2160396.46 \, (\pm 5.36)$ & -- & $2160394.84 \, (\pm 8.61)$ & -- \\
    Q2 & 2160506 & $2160327.00 \, (\pm 0.00)$ & 0.01\% & $2160327.00 \, (\pm 0.01)$ & 0.01\% & $2160327.01 \, (\pm 0.01)$ & 0.01\% & $2160327.01 \, (\pm 0.01)$ & 0.01\% \\
    Q3 & -- & $2160327.00 \, (\pm 0.00)$ & -- & $2160327.00 \, (\pm 0.00)$ & -- & $2160327.00 \, (\pm 0.00)$ & -- & $2160327.00 \, (\pm 0.00)$ & -- \\
    Q4 & 2160507 & $2160372.62 \, (\pm 37.71)$ & 0.01\% & $2160388.44 \, (\pm 3.17)$ & 0.01\% & $2160396.73 \, (\pm 2.88)$ & 0.01\% & $2160394.93 \, (\pm 5.71)$ & 0.01\% \\
    Q5 & -- & $2160327.00 \, (\pm 0.00)$ & -- & $2160327.00 \, (\pm 0.00)$ & -- & $2160327.00 \, (\pm 0.00)$ & -- & $2160327.00 \, (\pm 0.00)$ & -- \\
    Q6 & 2160507 & $2160338.07 \, (\pm 7.34)$ & 0.01\% & $2160331.65 \, (\pm 0.73)$ & 0.01\% & $2160332.98 \, (\pm 0.90)$ & 0.01\% & $2160332.43 \, (\pm 0.36)$ & 0.01\% \\
    Q7 & 2160507 & $2160382.11 \, (\pm 13.30)$ & 0.01\% & $2160398.76 \, (\pm 5.01)$ & 0.01\% & $2160393.84 \, (\pm 3.83)$ & 0.01\% & $2160396.85 \, (\pm 4.36)$ & 0.01\% \\
    Q8 & 2160507 & $2160329.68 \, (\pm 104.02)$ & 0.01\% & $2160364.81 \, (\pm 24.15)$ & 0.01\% & $2160364.14 \, (\pm 4.69)$ & 0.01\% & $2160361.27 \, (\pm 4.30)$ & 0.01\% \\
    Q9 & 2160506 & $2160331.51 \, (\pm 4.50)$ & 0.01\% & $2160328.07 \, (\pm 0.71)$ & 0.01\% & $2160327.93 \, (\pm 0.65)$ & 0.01\% & $2160327.85 \, (\pm 0.44)$ & 0.01\% \\
    Q10 & 2160506 & $2160402.11 \, (\pm 12.27)$ & 0.00\% & $2160403.79 \, (\pm 4.24)$ & 0.00\% & $2160398.90 \, (\pm 5.99)$ & 0.00\% & $2160405.32 \, (\pm 1.77)$ & 0.00\% \\
    Q11 & 2160507 & $2160378.56 \, (\pm 23.31)$ & 0.01\% & $2160391.05 \, (\pm 10.79)$ & 0.01\% & $2160379.64 \, (\pm 22.42)$ & 0.01\% & $2160383.45 \, (\pm 11.72)$ & 0.01\% \\
    Q12 & 2160507 & $2160382.19 \, (\pm 22.34)$ & 0.01\% & $2160392.11 \, (\pm 5.79)$ & 0.01\% & $2160387.20 \, (\pm 5.47)$ & 0.01\% & $2160393.35 \, (\pm 7.08)$ & 0.01\% \\
    Q13 & 2160507 & $2160382.60 \, (\pm 7.19)$ & 0.01\% & $2160377.92 \, (\pm 12.81)$ & 0.01\% & $2160356.29 \, (\pm 10.87)$ & 0.01\% & $2160354.12 \, (\pm 22.46)$ & 0.01\% \\
    \bottomrule
    \end{tabular}
    }
    \end{table*}
    
    \begin{table*}[htbp]
    \centering
    \small
    \setlength{\tabcolsep}{3pt}
    \caption{\countfunc estimation results under different sample sizes}
    \label{tab:exp:source_count}
    \scalebox{0.7}{
    \begin{tabular}{l l *{8}{l}}
    \toprule
    \multirow{2}{*}{\textbf{Query}} & 
    \multirow{2}{*}{\textbf{True value}} & 
    \multicolumn{8}{c}{\textbf{Sample size}} \\
    \cmidrule(lr){3-10}
    & & \multicolumn{2}{c}{\textbf{10K}} & \multicolumn{2}{c}{\textbf{100K}} & \multicolumn{2}{c}{\textbf{1M}} & \multicolumn{2}{c}{\textbf{10M}} \\
    \cmidrule(lr){3-4} \cmidrule(lr){5-6} \cmidrule(lr){7-8} \cmidrule(lr){9-10}
    & & \textbf{estimate value} & \textbf{error rate} & \textbf{estimate value} & \textbf{error rate} & \textbf{estimate value} & \textbf{error rate} & \textbf{estimate value} & \textbf{error rate} \\
    \midrule
    Q1 & -- & $3.53{\scriptstyle \pm 0.05} \times 10^{9}$ & -- & $3.55{\scriptstyle \pm 0.02} \times 10^{9}$ & -- & $3.59{\scriptstyle \pm 0.02} \times 10^{9}$ & -- & $3.55{\scriptstyle \pm 0.03} \times 10^{9}$ & -- \\
    Q2 & $2.69 \times 10^{8}$ & $2.68{\scriptstyle \pm 0.06} \times 10^{8}$ & 0.64\% & $2.69{\scriptstyle \pm 0.02} \times 10^{8}$ & 0.20\% & $2.69{\scriptstyle \pm 0.01} \times 10^{8}$ & 0.01\% & $2.69{\scriptstyle \pm 0.00} \times 10^{8}$ & 0.01\% \\
    Q3 & -- & $3.16{\scriptstyle \pm 0.10} \times 10^{12}$ & -- & $3.07{\scriptstyle \pm 0.05} \times 10^{12}$ & -- & $3.08{\scriptstyle \pm 0.01} \times 10^{12}$ & -- & $3.10{\scriptstyle \pm 0.00} \times 10^{12}$ & -- \\
    Q4 & $7.49 \times 10^{7}$ & $7.53{\scriptstyle \pm 0.08} \times 10^{7}$ & 0.50\% & $7.52{\scriptstyle \pm 0.10} \times 10^{7}$ & 0.32\% & $7.50{\scriptstyle \pm 0.01} \times 10^{7}$ & 0.17\% & $7.48{\scriptstyle \pm 0.03} \times 10^{7}$ & 0.08\% \\
    Q5 & -- & $4.63{\scriptstyle \pm 0.15} \times 10^{12}$ & -- & $4.65{\scriptstyle \pm 0.05} \times 10^{12}$ & -- & $4.64{\scriptstyle \pm 0.00} \times 10^{12}$ & -- & $4.64{\scriptstyle \pm 0.00} \times 10^{12}$ & -- \\
    Q6 & $2.73 \times 10^{10}$ & $2.72{\scriptstyle \pm 0.09} \times 10^{10}$ & 0.22\% & $2.71{\scriptstyle \pm 0.01} \times 10^{10}$ & 0.80\% & $2.73{\scriptstyle \pm 0.01} \times 10^{10}$ & 0.17\% & $2.73{\scriptstyle \pm 0.00} \times 10^{10}$ & 0.04\% \\
    Q7 & $2.92 \times 10^{7}$ & $2.91{\scriptstyle \pm 0.10} \times 10^{7}$ & 0.35\% & $2.94{\scriptstyle \pm 0.02} \times 10^{7}$ & 0.42\% & $2.93{\scriptstyle \pm 0.01} \times 10^{7}$ & 0.08\% & $2.92{\scriptstyle \pm 0.01} \times 10^{7}$ & 0.12\% \\
    Q8 & $9.77 \times 10^{5}$ & $9.97{\scriptstyle \pm 1.27} \times 10^{5}$ & 2.07\% & $9.77{\scriptstyle \pm 0.15} \times 10^{5}$ & 0.00\% & $9.75{\scriptstyle \pm 0.04} \times 10^{5}$ & 0.14\% & $9.78{\scriptstyle \pm 0.01} \times 10^{5}$ & 0.09\% \\
    Q9 & $2.21 \times 10^{10}$ & $2.22{\scriptstyle \pm 0.03} \times 10^{10}$ & 0.19\% & $2.20{\scriptstyle \pm 0.02} \times 10^{10}$ & 0.72\% & $2.21{\scriptstyle \pm 0.00} \times 10^{10}$ & 0.03\% & $2.21{\scriptstyle \pm 0.00} \times 10^{10}$ & 0.00\% \\
    Q10 & $1.06 \times 10^{8}$ & $1.06{\scriptstyle \pm 0.02} \times 10^{8}$ & 0.28\% & $1.06{\scriptstyle \pm 0.00} \times 10^{8}$ & 0.12\% & $1.06{\scriptstyle \pm 0.01} \times 10^{8}$ & 0.42\% & $1.06{\scriptstyle \pm 0.01} \times 10^{8}$ & 0.24\% \\
    Q11 & $2.33 \times 10^{6}$ & $2.35{\scriptstyle \pm 0.07} \times 10^{6}$ & 1.18\% & $2.37{\scriptstyle \pm 0.07} \times 10^{6}$ & 2.11\% & $2.28{\scriptstyle \pm 0.07} \times 10^{6}$ & 2.04\% & $2.29{\scriptstyle \pm 0.03} \times 10^{6}$ & 1.40\% \\
    Q12 & $3.61 \times 10^{5}$ & $3.66{\scriptstyle \pm 0.05} \times 10^{5}$ & 1.30\% & $3.61{\scriptstyle \pm 0.06} \times 10^{5}$ & 0.07\% & $3.60{\scriptstyle \pm 0.06} \times 10^{5}$ & 0.34\% & $3.59{\scriptstyle \pm 0.02} \times 10^{5}$ & 0.66\% \\
    Q13 & $4.15 \times 10^{6}$ & $4.01{\scriptstyle \pm 0.47} \times 10^{6}$ & 3.36\% & $4.25{\scriptstyle \pm 0.10} \times 10^{6}$ & 2.43\% & $4.13{\scriptstyle \pm 0.13} \times 10^{6}$ & 0.64\% & $4.24{\scriptstyle \pm 0.14} \times 10^{6}$ & 2.06\% \\
    \bottomrule
    \end{tabular}
    }
    \end{table*}


\subsection{Comparison with Exact Methods}
\label{sec:exp:e2e}

In this section, we compare the end-to-end performance of \sys with exact methods (\sysnosample, MySQL, and Neo4j) on the SF1, SF3, and SF10 datasets.
As shown in \reffig{exp:scale_datasets}, \sys completes all 13 queries across all datasets within reasonable time, with execution times ranging from 10.12s ($Q_{12}$ on SF1) to 335.94s ($Q_3$ on SF10).
In contrast, exact methods increasingly fail as dataset size grows.
\sysnosample encounters \texttt{OOM} errors: 5/13 queries fail on SF1, 7/13 on SF3, and 9/13 on SF10. \revisesigmod{This is primarily due to the combinatorial explosion of intermediate results inherent in exact subgraph matching. For instance, $Q_6$ on SF1 produces over 27 billion results, making the full materialization of exact answers prohibitively memory-intensive.}
Neo4j, as a native graph database, performs relatively well but still encounters \texttt{OOT} errors on complex queries: 3/13 fail on SF1, 6/13 on SF3, and 6/13 on SF10.
MySQL shows worse scalability due to the complexity of multi-way joins: 6/13 queries fail on SF1, 10/13 on SF3, and 11/13 on SF10.
These results show that many queries are difficult to complete within reasonable time using exact algorithms, validating the necessity of sampling-based approaches.
\revisesigmod{On queries where both MySQL and Neo4j complete, Neo4j consistently outperforms MySQL by 3--260$\times$ (e.g., $Q_4$: 29.81s vs. 2911.05s; $Q_{10}$: 46.75s vs. 6009.28s; $Q_{11}$: 18.90s vs. 4935.93s on SF1). This performance gap underscores the inherent efficiency of the graph data model for structural queries, validating our choice to build \sys upon a graph-based foundation.}

\revisesigmod{In comparison with Neo4j, \sys delivers an average speedup of 16.21$\times$ and a maximum speedup of 146.25$\times$ (for $Q_9$ on SF1).
Compared to MySQL, \sys achieves higher speedups, i.e., an average speedup of 46.67$\times$ and a maximum speedup of 230.53$\times$ (for $Q_{10}$ on SF1).
While Neo4j exhibits lower latency on simple queries (e.g., $Q_{12}$ on SF1), this is an artifact of our fixed experimental configuration rather than a system limitation. Specifically, for queries with small ground-truth cardinalities (e.g., 361k for $Q_{12}$ on SF1), our default 1M sample size results in over-sampling, incurring processing overhead with no accuracy gain. By aligning the sample size with the query selectivity (e.g., 10K samples), \sys reduces latency to 1.02s (with only 1.30\% error). We will study query-adaptive sampling techniques in future work.}

\reftable{exp:aggregate_all} shows the estimation error rates of \sys across different datasets and aggregate functions.
Note that $Q_1$, $Q_3$, and $Q_5$ are not reported in this table because exact algorithms fail to complete on all three datasets, making it impossible to obtain ground truth values for error calculation.
For \countfunc and \maxfunc, \sys achieves consistently low error rates across all queries and datasets (mostly below 1\%).
For \distinctcountfunc, \sys shows higher error rates due to the use of a biased estimator (\refsec{exp:sample_size}).
The results demonstrate that \sys maintains high accuracy while successfully completing all 13 queries that exact methods cannot handle.

Overall, these results show that \sys significantly reduces execution time while maintaining comparable accuracy compared to the baselines, making it highly suitable for practical complex query workloads.

\begin{figure*}[htbp]
    \centering
    \begin{subfigure}[b]{\textwidth}
        \centering
        \includegraphics[width=.85\textwidth]{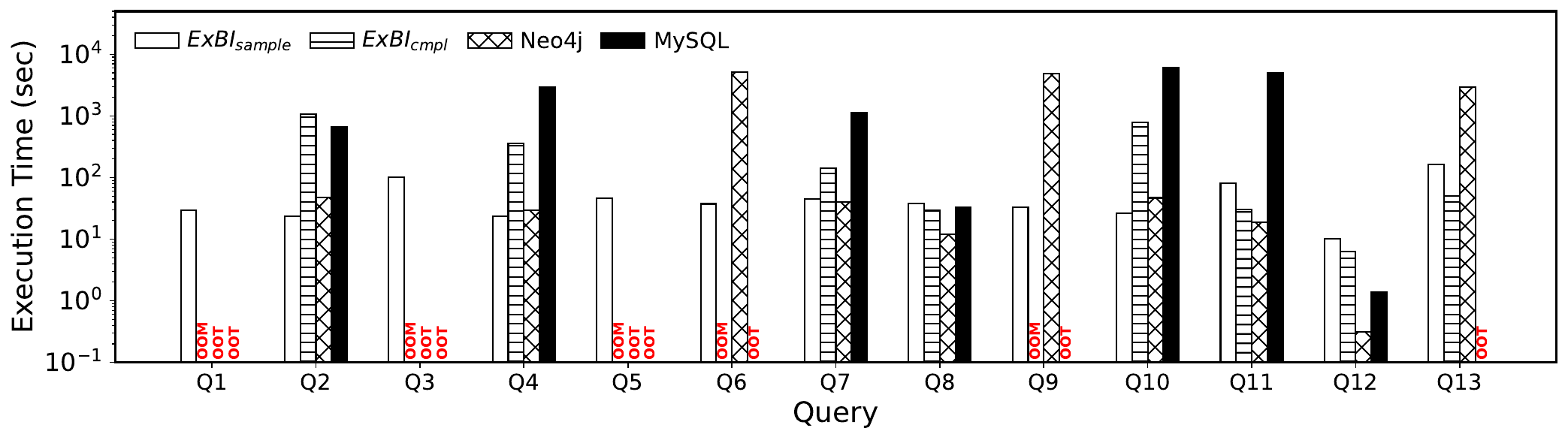}
        \caption{SF1}
        \label{fig:exp:sf1}
    \end{subfigure}
    \hfill
    \begin{subfigure}[b]{\textwidth}
        \centering
        \includegraphics[width=.85\textwidth]{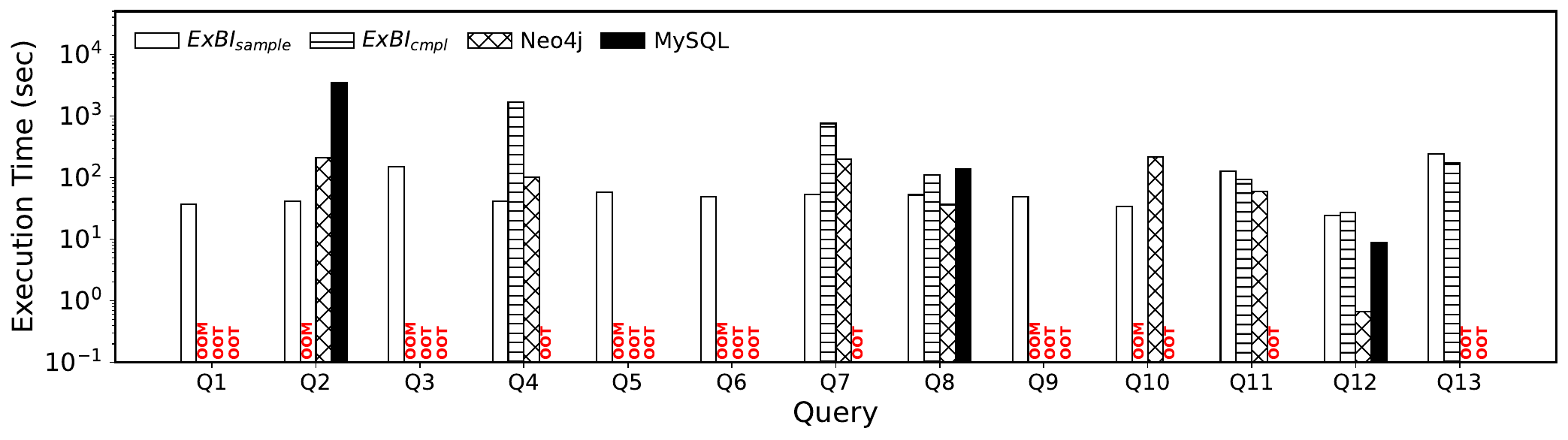}
        \caption{SF3}
        \label{fig:exp:sf3}
    \end{subfigure}
    \hfill
    \begin{subfigure}[b]{\textwidth}
        \centering
        \includegraphics[width=.85\textwidth]{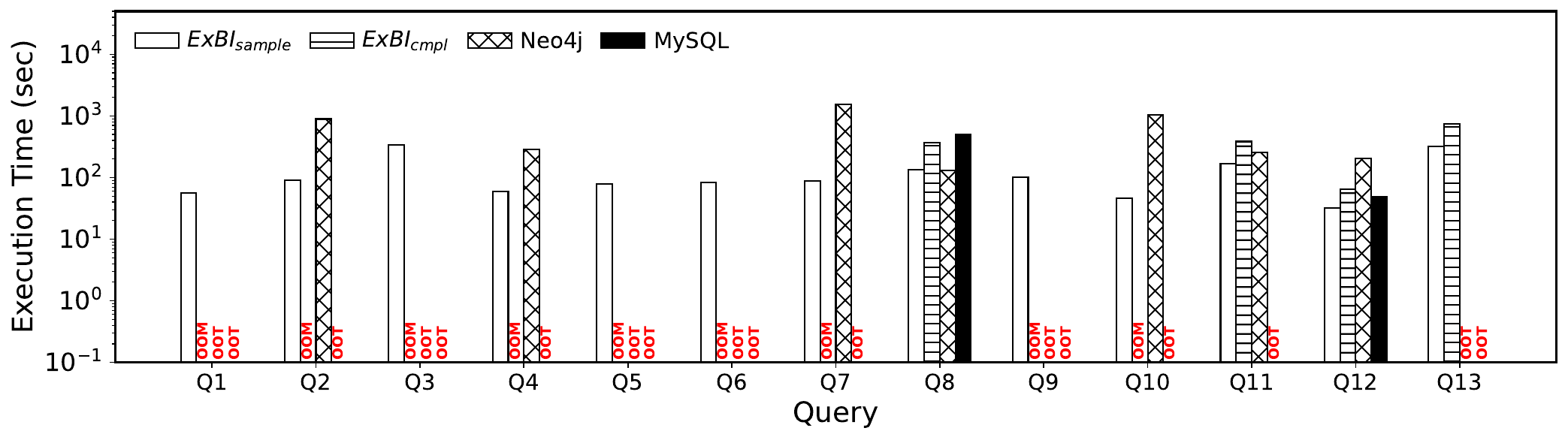}
        \caption{SF10}
        \label{fig:exp:sf10}
    \end{subfigure}
    \caption{\centering End-to-End performance evaluation on different datasets with sample size 1M. NA indicates that the query is not supported by the system.}
    \label{fig:exp:scale_datasets}
\end{figure*} 

\begin{table*}[htbp]
    \centering
    \small
    \caption{The estimation error rates on different datasets}
    \label{tab:exp:aggregate_all}
    \begin{tabular}{lccccccccc}
    \toprule
    \multirow{2}{*}{\textbf{Query}} & \multicolumn{3}{c}{\textbf{SF1}} & \multicolumn{3}{c}{\textbf{SF3}} & \multicolumn{3}{c}{\textbf{SF10}} \\
    \cmidrule(lr){2-4} \cmidrule(lr){5-7} \cmidrule(lr){8-10}
     & \distinctcountfunc & \maxfunc & \countfunc & \distinctcountfunc & \maxfunc & \countfunc & \distinctcountfunc & \maxfunc & \countfunc \\
    \midrule
    Q2 & 115.98\% & 0.01\% & 0.01\% & 316.49\% & 0.00\% & 0.09\% & 732.27\% & 0.00\% & 0.11\% \\
    Q4 & 2.94\% & 0.01\% & 0.17\% & 17.47\% & 0.00\% & 0.11\% & 106.40\% & 0.00\% & 0.15\% \\
    Q6 & 733.25\% & 0.01\% & 0.17\% & -- & -- & -- & -- & -- & -- \\
    Q7 & 2.37\% & 0.01\% & 0.08\% & 22.55\% & 0.00\% & 0.04\% & 142.95\% & 0.00\% & 0.09\% \\
    Q8 & 7.19\% & 0.01\% & 0.14\% & 14.91\% & 0.00\% & 0.16\% & 16.71\% & 0.00\% & 0.08\% \\
    Q9 & 875.29\% & 0.01\% & 0.03\% & -- & -- & -- & -- & -- & -- \\
    Q10 & 11.88\% & 0.00\% & 0.42\% & 0.28\% & 0.00\% & 0.00\% & 95.08\% & 0.00\% & 0.21\% \\
    Q11 & 39.38\% & 0.01\% & 2.04\% & 40.25\% & 0.00\% & 0.10\% & 37.59\% & 0.00\% & 0.09\% \\
    Q12 & 39.17\% & 0.01\% & 0.34\% & 38.86\% & 0.00\% & 0.86\% & 41.05\% & 0.00\% & 0.13\% \\
    Q13 & 42.87\% & 0.01\% & 0.64\% & 45.80\% & 0.00\% & 0.58\% & 43.54\% & 0.00\% & 0.21\% \\
    \bottomrule
    \end{tabular}
\end{table*}

\subsection{Comparison with VerdictDB}
\label{sec:exp:verdict}

In this section, we compare \sys with VerdictDB, a representative sampling-based AQP system.

\stitle{VerdictDB configurations.}
\revisesigmod{While \sys applies sampling dynamically during query execution, VerdictDB pre-computes materialized sample tables with fixed sampling rates.}
We evaluate three VerdictDB configurations:
\begin{itemize}
    \item \textbf{Ver$_{1\%}$}: Follows the recommended settings in~\cite{verdictdb}, pre-constructing sample tables with 1\% uniform samples and 1\% universe samples for large tables;
    \item \textbf{Ver$_{10\%}$}: Pre-constructs sample tables with 10\% sampling rate for more accurate results;
    \item \textbf{Ver$_{all}$}: Uses 100\% data. VerdictDB employs early stopping during computation, which terminates when the estimated error falls below a threshold, so Ver$_{all}$ is still not an exact algorithm.
\end{itemize}
On small datasets (SF1 and SF3), all tables do not qualify as ``large'' so VerdictDB skips pre-sampling and operates on the original tables. Consequently, all configurations reduce to Ver$_{all}$ and produce identical results. On SF10, VerdictDB's pre-sampling strategy activates on the large tables (e.g. \texttt{knows} table), causing different results across configurations.

\begin{table*}[htbp]
    \centering
    \small
    \caption{Comparison of \sys and VerdictDB on SF1 and SF3. cnt and dc are abbreviations for \countfunc and \distinctcountfunc, respectively. }
    \label{tab:exbi_verdict_comparison}
    \setlength{\tabcolsep}{2pt}
    \begin{tabular}{lcccccccccccccccc}
    \toprule
    \multirow{3}{*}{\textbf{Query}} & \multicolumn{4}{c}{\textbf{Time (s)}} & \multicolumn{6}{c}{\textbf{SF1 Error Rate (\%)}} & \multicolumn{6}{c}{\textbf{SF3 Error Rate (\%)}} \\
    \cmidrule(lr){2-5} \cmidrule(lr){6-11} \cmidrule(lr){12-17}
     & \multicolumn{2}{c}{SF1} & \multicolumn{2}{c}{SF3} & \multicolumn{3}{c}{\sys} & \multicolumn{3}{c}{Ver} & \multicolumn{3}{c}{\sys} & \multicolumn{3}{c}{Ver} \\
    \cmidrule(lr){2-3} \cmidrule(lr){4-5} \cmidrule(lr){6-8} \cmidrule(lr){9-11} \cmidrule(lr){12-14} \cmidrule(lr){15-17}
     & \sys & Ver & \sys & Ver & cnt & max & dc & cnt & max & dc & cnt & max & dc & cnt & max & dc \\
    \midrule
    Q11 & 80.17 & 63.93 & 126.86 & 239.13 & 2.04\% & 0.01\% & 39.38\% & 0.00\% & 0.00\% & 0.00\% & 0.10\% & 0.00\% & 40.25\% & 0.33\% & 0.00\% & 96.37\% \\
    Q12 & 10.12 & 1.00 & 24.10 & 3.95 & 0.34\% & 0.01\% & 39.17\% & 0.00\% & 0.00\% & 0.00\% & 0.86\% & 0.00\% & 38.86\% & 0.00\% & 0.00\% & 0.00\% \\
    Q13 & 163.19 & 6622.34 & 239.76 & OOT & 0.64\% & 0.01\% & 42.87\% & 0.00\% & 0.00\% & 0.00\% & 0.58\% & 0.00\% & 45.80\% & OOT & OOT & OOT \\
    \bottomrule
    \end{tabular}
\end{table*}

\begin{table}[htbp]
    \centering
    \small
    \caption{Comparison of \sys and VerdictDB configurations on SF10}
    \label{tab:exbi_verdict_sf10}
    \setlength{\tabcolsep}{1.2pt}
    \begin{tabular}{llcccc}
    \toprule
    \textbf{Metric} & \textbf{Query} & \textbf{\sys} & \textbf{Ver$_{1\%}$} & \textbf{Ver$_{10\%}$} & \textbf{Ver$_{all}$} \\
    \midrule
    \multirow{3}{*}{Time (s)} 
        & Q11 & 167.55 & 0.58 & 15.05 & 340.62 \\
        & Q12 & 32.28 & 0.33 & 1.61 & 6.84 \\
        & Q13 & 316.51 & 0.54 & OOT & OOT \\
    \midrule
    \multirow{3}{*}{\shortstack[l]{Error Rate of \\ \countfunc}} 
        & Q11 & 0.09\% & NA & 99.92\% & 0.21\% \\
        & Q12 & 0.13\% & 99.19\% & 90.78\% & 0.06\% \\
        & Q13 & 0.21\% & NA & OOT & OOT \\
    \midrule
    \multirow{3}{*}{\shortstack[l]{Error Rate of \\ \maxfunc }} 
        & Q11 & 0.00\% & NA & 0.00\% & 0.00\% \\
        & Q12 & 0.00\% & 0.00\% & 0.00\% & 0.00\% \\
        & Q13 & 0.00\% & NA & OOT & OOT \\
    \midrule
    \multirow{3}{*}{\shortstack[l]{Error Rate of \\ \distinctcountfunc}} 
        & Q11 & 37.59\% & 3.44e5\% & 8516.39\% & 661.04\% \\
        & Q12 & 41.05\% & 2.68\% & 0.66\% & 0.07\% \\
        & Q13 & 43.54\% & 1.30e10\% & OOT & OOT \\
    \bottomrule
    \end{tabular}
\end{table}

\revisesigmod{Unlike \sys, which computes all aggregates (\countfunc, \maxfunc, and \distinctcountfunc) in one query, VerdictDB necessitates different sample types for different aggregates: universe samples for \distinctcountfunc and uniform samples for \countfunc and \maxfunc. Consequently, VerdictDB cannot compute all aggregates in a single pass. In our experiments, we execute two separate queries for VerdictDB---one for \distinctcountfunc and another for \countfunc and \maxfunc---and report the maximum of the two execution times as VerdictDB's query time. If either query times out, the result is recorded as \texttt{OOT}}.
\revisesigmod{Note that we only report VerdictDB results for $Q_{11}$--$Q_{13}$ because VerdictDB cannot handle the ``non-existence'' semantics essential for induced subgraph queries required by $Q_1$--$Q_{10}$. VerdictDB can execute $Q_{11}$--$Q_{13}$ because their left and right query graphs are complete graphs, where ``non-existence'' checks are implicitly satisfied.}

\stitle{Results on SF1, SF3 (\reftable{exbi_verdict_comparison}).}
\revisesigmod{On the simple query $Q_{12}$, \sys is slower than VerdictDB because of the sampling overhead on small workloads.}
However, for complex queries like $Q_{13}$ (4-clique) and $Q_{11}$ (cyclic), \sys scales much better. VerdictDB times out on $Q_{13}$ in SF3 and becomes nearly $2\times$ slower than \sys on $Q_{11}$ in SF3.
Regarding accuracy, VerdictDB achieves near-perfect estimates on SF1. However, it exhibits extreme instability on SF3: the \distinctcountfunc error for $Q_{11}$ surges to 96.37\%, and we anticipate even larger errors for the timed-out $Q_{13}$. In contrast, \sys demonstrates robust performance: it maintains consistently low error rates for \countfunc and \maxfunc (mostly below 1\%) across all queries. Although \sys relies on a biased estimator for \distinctcountfunc (as discussed in \refsec{exp:sample_size}) which results in $\sim$40\% error, this error rate remains consistent across dataset scales, avoiding the unpredictable accuracy collapse observed in VerdictDB.

\stitle{Results on SF10 (\reftable{exbi_verdict_sf10}).}
\sys efficiently completes all three queries with stable error rates: \countfunc errors remain below 0.21\%, \maxfunc errors are consistently 0.00\%, and \distinctcountfunc errors \revisesigmod{lie between 37.59\% and 43.54\%} (due to the biased estimator).
In contrast, VerdictDB exhibits critical limitations.
First, When pre-sampling is not used, VerdictDB faces severe scalability bottlenecks: Ver$_{all}$ times out on the complex query $Q_{13}$ that involves a 4-clique.
Second, while pre-sampling can alleviate these scalability issues, it introduces structural fragility. Ver$_{1\%}$ returns ``None'' results (marked as \texttt{NA}) for \countfunc and \maxfunc on $Q_{11}$ and $Q_{13}$, because the 1\% uniform sampling rate breaks the connectivity of cyclic subgraphs. This hypothesis is confirmed by Ver$_{10\%}$: increasing the sampling rate restores enough connectivity to produce results for $Q_{11}$.
Third, even in cases where VerdictDB produces outputs, it still suffers from extreme statistical instability. For example, Ver$_{10\%}$ returns a result for $Q_{11}$, but with a 99.92\% \countfunc error. Similarly, the universe sampling strategy used for \distinctcountfunc yields significant errors: $3.44 \times 10^{5}\%$ for $Q_{11}$ and $1.30 \times 10^{10}\%$ for $Q_{13}$ under Ver$_{1\%}$, rendering the results useless in practice.

Comparatively, \sys offers a much more robust solution for complex queries required for \explorativebi, balancing performance and accuracy where baselines fail.

\begin{figure}[bp]
    \centering
    \begin{subfigure}[b]{0.45\textwidth}
        \centering
        \includegraphics[width=\textwidth]{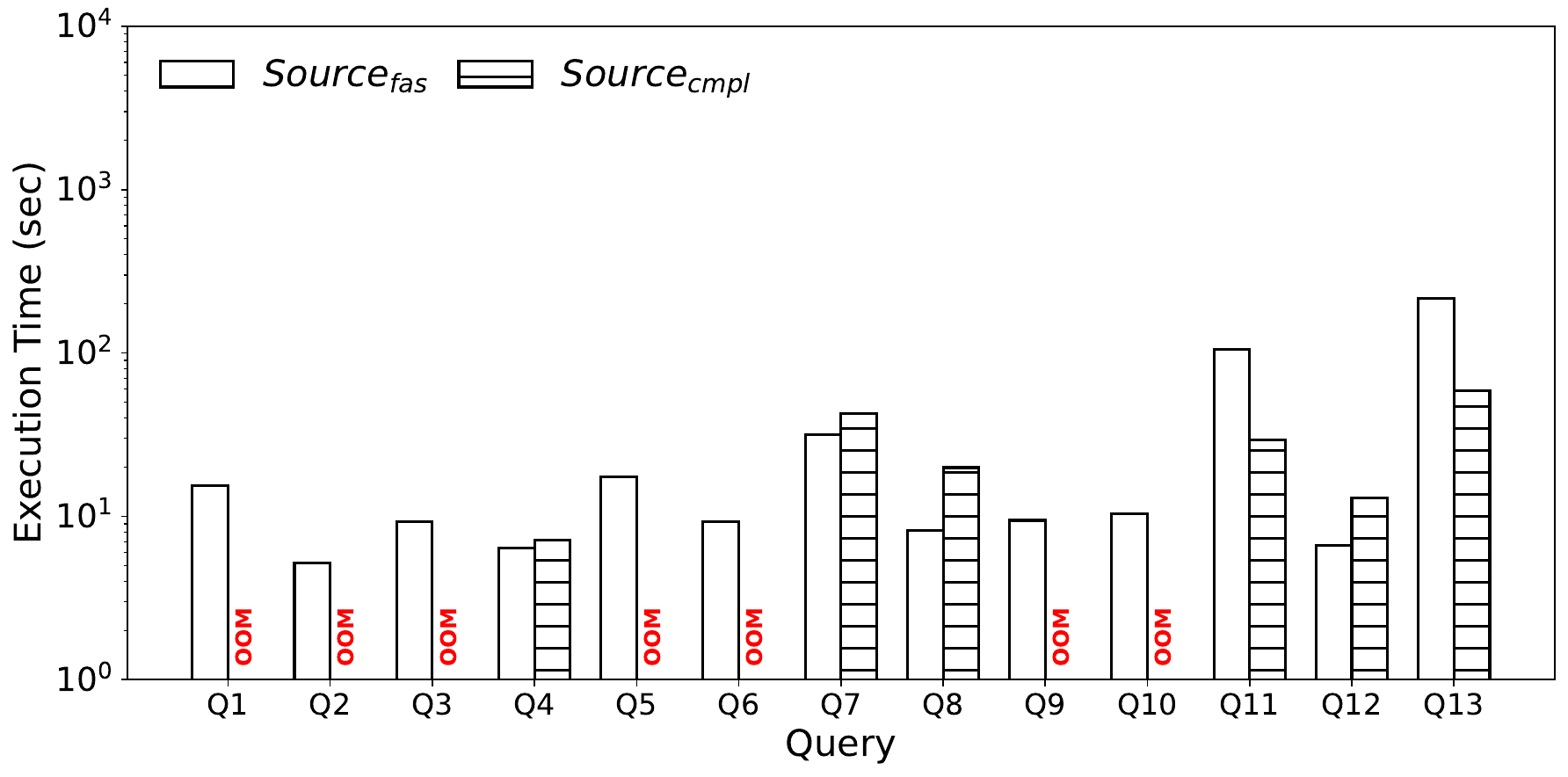}
        \caption{\sourcesample vs. \sourcenosample}
        \label{fig:exp:sourcesample_sf3}
    \end{subfigure}
    \hfill
    \begin{subfigure}[b]{0.45\textwidth}
        \centering
        \includegraphics[width=\textwidth]{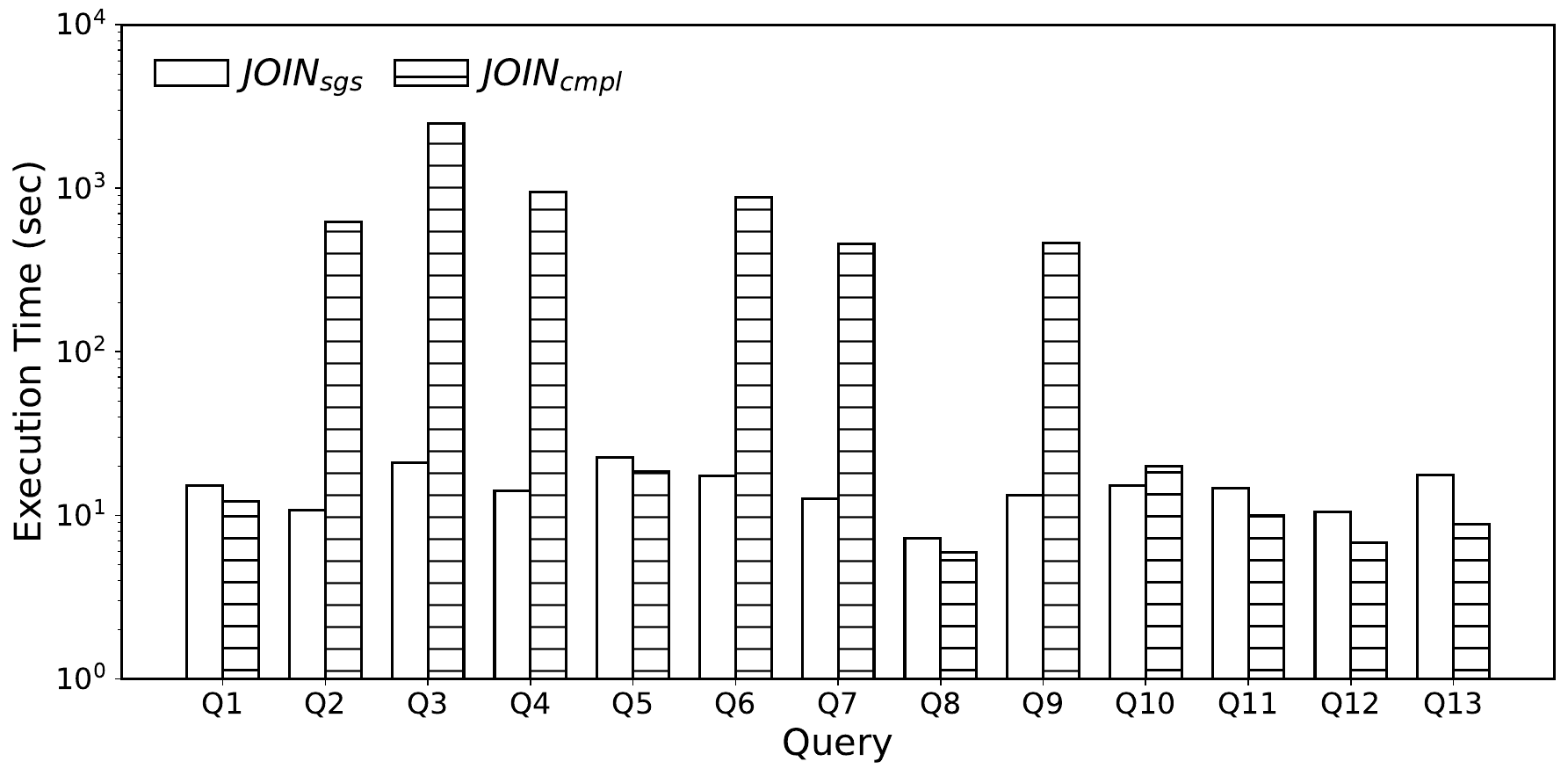}
        \caption{\joinsample vs. \joinnosample}
        \label{fig:exp:joinsample_sf3}
    \end{subfigure}
    \caption{\centering  Evaluate \sourcesample and \joinsample on SF3}
    \label{fig:exp:operators_sf3}
\end{figure}


\begin{figure}[t]
    \centering
    \begin{subfigure}[b]{0.45\textwidth}
        \centering
        \includegraphics[width=\textwidth]{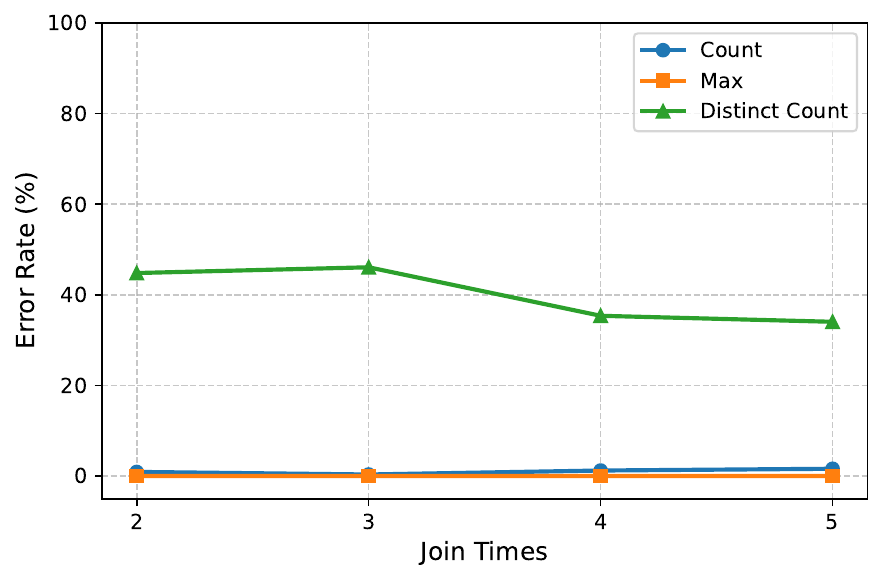}
        \caption{Cycle pattern}
        \label{fig:cycle}
    \end{subfigure}
    \hfill
    \begin{subfigure}[b]{0.45\textwidth}
        \centering
        \includegraphics[width=\textwidth]{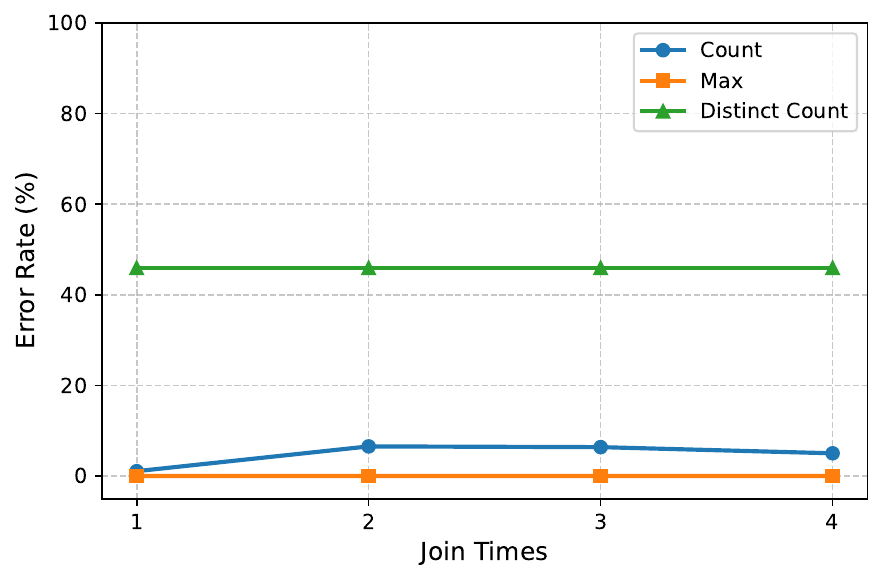}
        \caption{Path pattern}
        \label{fig:path}
    \end{subfigure}
    \caption{\centering Effect of multiple join iterations on accuracy}
    \label{fig:joinhop}
\end{figure}

\subsection{Ablation experiments on the \sourceop and \joinop}
\label{sec:exp:ablation}

To provide a more detailed performance analysis, we evaluate the \sourcesample~and \joinsample~operators individually on the SF3 dataset, comparing them with their non-sampling counterparts (\sourcenosample and \joinnosample). 
For fair comparison, \joinsample and \joinnosample use the same input hypergraphs obtained by \sourcesample with a sample size of 1M.
\reffig{exp:operators_sf3} shows the execution times of both sampling-based and non-sampling operators. 

\stitle{Source Operator.} As shown in \reffig{exp:sourcesample_sf3}, \sourcesample successfully completes all 13 queries while \sourcenosample fails on 7/13 queries with \texttt{OOM} errors (i.e., $Q_1$, $Q_2$, $Q_3$, $Q_5$, $Q_6$, $Q_9$, $Q_{10}$). For \revisesigmod{the queries where both} succeed, \sourcesample often achieves competitive or superior execution times. For example, $Q_7$ (31.5s vs. 42.8s), $Q_8$ (8.2s vs. 19.9s), and $Q_{12}$ (6.6s vs. 12.9s). However, for \revisesigmod{the queries with small result} sizes such as $Q_{11}$ and $Q_{13}$, \sourcenosample is faster (29.4s vs. 105.4s for $Q_{11}$; 58.6s vs. 215.3s for $Q_{13}$) because the sampling overhead exceeds the benefit when results are small.

\stitle{Join Operator.} As shown in \reffig{exp:joinsample_sf3}, \joinsample~significantly outperforms \joinnosample~for queries with large join results, achieving 35--119$\times$ speedups. For example, $Q_3$ achieves 119$\times$ speedup (20.9s vs. 2492.5s), $Q_4$ achieves 67$\times$ speedup (14.1s vs. 949.5s), and $Q_7$ achieves 36$\times$ speedup (12.6s vs. 456.7s). However, for queries with smaller join results such as $Q_1$, $Q_5$, $Q_8$, $Q_{11}$, $Q_{12}$, and $Q_{13}$, both approaches show comparable performance because the join computation itself is already efficient when the result size is small.

Overall, the results suggest that both \sourcesample~and \joinsample are essential for executing complex queries on large-scale datasets, with \sourcesample enabling completion of queries that would otherwise fail, and \joinsample providing significant speedups for queries with large intermediate results.

\begin{figure*}[tbp]
    \centering
    \includegraphics[width=\textwidth]{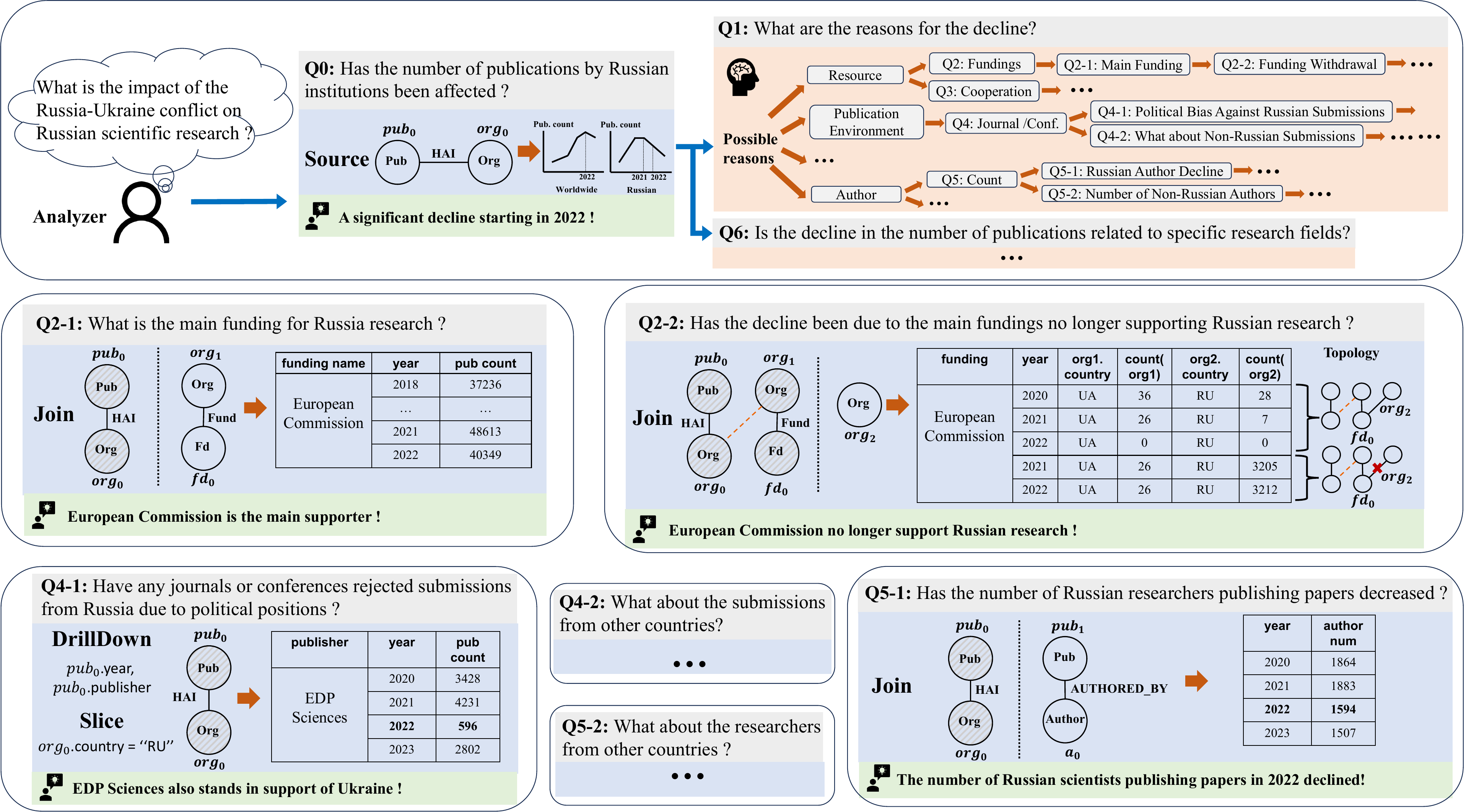}
    \caption{Case Study: Exploring the impact of the Russia-Ukraine conflict on Russian academia. We abbreviate Publication, Organization, and Funding as Pub, Org, and Fd, respectively. Shaded vertices indicate hypergraphs that have already been computed and can be reused in subsequent queries in this exploration.}
    \label{fig:case_study}
\end{figure*}


\subsection{Effect of multiple \joinop operations on accuracy}
\label{sec:exp:multi_join}

An important concern for sampling-based approaches is whether errors accumulate across multiple \joinop operations. We investigate this by conducting experiments on SF0.1 using cycle patterns (up to 6 edges, with 2--5 join iterations) and path patterns (up to 5 edges, with 1--4 join iterations). We use SF0.1 because Neo4j encounters \texttt{OOT} errors when finding 6-edge cycles on SF1.

To construct these patterns, we adopt an incremental construction approach with \sourceop and \joinop operators. For the 5-path, we start from a specific source vertex (e.g., user with ID 0) and retain only paths originating from that vertex, because Neo4j encounters \texttt{OOT} errors when enumerating all 5-paths without such constraints even on SF0.1. For the cycle pattern, we use a similar incremental approach without the starting vertex constraint. We measure the error rates of \distinctcountfunc, \maxfunc, and \countfunc aggregate functions, with results shown in \reffig{joinhop}.

Results show that error rates remain stable across join iterations, indicating no systematic error accumulation. 
For cycle patterns (2--5 joins), \countfunc error rates range from 0.35\% to 1.62\%, \maxfunc maintains near-zero error rates (~0.01\%), and \distinctcountfunc stays within 34--46\%. 
For path patterns (1--4 joins), \countfunc error rates range from 1.1\% to 6.6\%, \maxfunc achieves near-zero error rates (~0.01\%), and \distinctcountfunc shows stable error rates (~46\%). 
The relatively high error rates for \distinctcountfunc are expected due to the use of a biased estimator, as discussed in \refsec{exp:sample_size}.
Importantly, no aggregate function shows systematic error increase as the number of joins grows, demonstrating effective error control across multiple join operations---critical for complex multi-hop queries in \explorativebi scenarios.

\section{Case Study}
\label{sec:exp:case-study}

We demonstrate \sys through a comprehensive analysis of the question 
``What is the impact of the Russia-Ukraine conflict on Russian scientific research?'' on the OpenAIRE dataset. The details are shown in \reffig{case_study}. 

\stitle{Phase 1: Case Establishment.}
To investigate whether the Russia-Ukraine conflict affected Russian scientific research, we first need to establish the case. We use \sourceop to construct a Publication-Organization hypergraph, then apply \drilldownop on both \texttt{org.country} and \texttt{pub.year} dimensions. This yields the number of publications per country per year, revealing a notable decline in Russian institutions' output since 2022. However, this alone does not confirm causation. To verify this is Russia-specific rather than a global trend, we perform \drilldownop only on \texttt{pub.year} (without the country dimension), obtaining global publication counts. 
The comparison shows that worldwide publications increased in 2022, while Russian output dropped significantly---strongly suggesting the decline is linked to the conflict.

\stitle{Phase 2: Divergent and In-depth Exploration.}
Having established the case, follow-up questions naturally arise, such as \textit{Q1: What are the reasons for the decline?} and \textit{Q6: Is the decline related to specific research fields?} Taking Q1 as an example, we explore possible causes including \textit{resource factors}, \textit{publishing environment}, and \textit{author-related factors}, and conduct in-depth exploration for each.

\textit{Resource Factors.} Relevant resources may include funding, international cooperation, etc. Taking funding as an example, we first explore what are the main funding sources for Russian research (Q2-1). We extend the hypergraph via \joinop with Funding data---note that this reuses the Publication-Organization hypergraph obtained in Q0. Query Q2-1 reveals that the European Commission (EC) was a major funder of Russian institutions. 

Next, we investigate whether EC's funding for Russian research changed around 2022 (Q2-2). We extend Q2-1's result with another Organization entity to examine whether EC ceased funding Russia while continuing to support Ukraine. The results show that in 2022, EC ceased funding projects involving both Russian and Ukrainian institutions simultaneously (the corresponding row shows zero), while continuing to fund Ukraine-only projects (26 organizations). This suggests EC's policy shift, which is confirmed by official EC documentation~\cite{european_commission_website}.

\textit{Publishing Environment.} \revisesigmod{Here, publishing environment primarily refers to the attitudes of journals and conferences towards Russian research}. 
To explore this factor, we propose Q4-1, reusing the hypergraph obtained in Q0 and applying \drilldownop on the \texttt{pub.year} and \texttt{pub.publisher} dimensions. 
\revisesigmod{Then, a \sliceop is applied with the condition \texttt{org.country = "RU"} to examine how many papers each publisher published from Russian institutions per year.}
The results reveal that EDP Sciences dramatically reduced publications from Russian institutions after 2022. Investigating further, we found that EDP Sciences publicly announced support for Ukraine\footnote{\url{https://www.edpsciences.org/en/news-highlights/2524-edp-sciences-statement-on-support-for-ukraine}}, reflecting an increasingly challenging publishing environment for Russian researchers. Furthermore, to verify whether EDP Sciences specifically reduced only Russian publications in 2022, we conduct Q4-2 for additional validation.

\textit{Author Factors.} Similarly, we consider author-related factors, which may include author count, productivity, etc. In this example, we focus on author count and compare changes in the number of Russian authors who published papers (Q5-1) versus authors from other countries (Q5-2) to determine whether author factors contribute significantly to the decline in Russian publications. 
Specifically, for Q5-1, we introduce the Author dimension into the hypergraph obtained from Q0. The query results reveal that the number of Russian researchers who published papers in 2022 decreased by 15\% compared to 2021, suggesting that author factors may be a contributing cause to the decline in Russian publications.

\stitle{Benefits of \sys.}
This case study demonstrates \sys's core advantages: 
(1) \textbf{Reduced expertise dependency}: analysts explored hypotheses iteratively rather than having to define all factors upfront. Furthermore, building agentic systems atop \sys to assist in query formulation is a key direction for future work; 
(2) \textbf{Dynamic schema evolution}: we progressively incorporated funding, publisher, and author dimensions without rebuilding the base model; (3) \textbf{Result reusability}: the Publication-Organization hypergraph was reused across all analyses.
These advantages effectively \revisesigmod{address three major limitations identified in \refsec{intro} for \kw{Exploratory} \kw{BI}.
In terms of \textbf{computational efficiency}}, we have demonstrated in \refsec{experiment} that \sys efficiently handles complex queries by leveraging sampling techniques.

\section{Conclusion}
\label{sec:conclusion}

In this paper, we address critical limitations of traditional BI systems—high dependency on analyst expertise, static schemas, computational costs, and lack of reusability—by introducing \explorativebi, a novel paradigm enabling iterative, incremental analysis. We present \sys, a system built on a Hypergraph Data Model with operators (\sourceop, \joinop, \viewop, and OLAP operators) that support dynamic schema evolution and intermediate result reuse. To achieve real-time responsiveness, we implement sampling-based optimizations with provable theoretical guarantees for the estimation. Experimental results on the LDBC datasets demonstrate that \sys achieves significant speedups: on average 16.21$\times$ compared to Neo4j and 46.67$\times$ compared to MySQL, while maintaining high accuracy with an average error rate of only 0.27\% for \countfunc. 

\bibliographystyle{spmpsci}      
\bibliography{bi-sample}   


\end{document}